\documentclass[journal,onecolumn,12pt]{IEEEtran}
\usepackage{setspace}
\usepackage[T1]{fontenc}
\usepackage{amsmath,amsfonts,amssymb}
\usepackage{algorithmic}
\usepackage{algorithm}
\usepackage{array}
\usepackage{multirow}
\usepackage[caption=false,font=normalsize,labelfont=sf,textfont=sf]{subfig}
\usepackage{textcomp}
\usepackage{stfloats}
\usepackage{url}
\usepackage{verbatim}
\usepackage{graphicx}
\usepackage{cite}
\usepackage{mathtools}
\usepackage{stmaryrd}
\usepackage{amsthm}
\theoremstyle{definition}   % bold heading, upright body
\newtheorem{lemma}{Lemma}
\newtheorem{proposition}{Proposition}
\newtheorem{corollary}{Corollary}
\theoremstyle{remark}
\newtheorem{remark}{Remark}
\usepackage[acronym]{glossaries}
\usepackage{tikz}
\usepackage[none]{hyphenat}
\usetikzlibrary{arrows.meta,positioning,calc,decorations.pathreplacing,shapes.geometric,fit}
\usepackage{pgfplots}
\pgfplotsset{compat=1.18}
\usepgfplotslibrary{fillbetween,groupplots}

\definecolor{cALS}{RGB}{0,84,159}      % ALS + ESPRIT, MDL rank
\definecolor{cALSK}{RGB}{90,90,90}     % ALS + ESPRIT, genie rank
\definecolor{cMUSIC}{RGB}{191,60,30}   % tensor MUSIC FB, MDL rank
\definecolor{cRthree}{RGB}{0,128,84}   % \hat L = 3 (correct)
\definecolor{cRtwo}{RGB}{118,58,150}   % \hat L = 2
\definecolor{cRone}{RGB}{184,134,11}   % \hat L = 1
\pgfplotsset{
  resultaxis/.style={
    width=\columnwidth, height=0.72\columnwidth,
    scale only axis=false,
    grid=both,
    grid style={gray!25},
    tick label style={font=\footnotesize},
    label style={font=\footnotesize},
    legend style={font=\footnotesize, fill=white, fill opacity=0.85,
                  text opacity=1, draw=gray!50, row sep=1pt, inner sep=2pt},
    every axis plot/.append style={line width=1.0pt, mark size=2.8pt},
  },
  als/.style     ={cALS,   mark=*,         solid},
  alsknown/.style={cALSK,  mark=square*,   solid},
  music/.style   ={cMUSIC, mark=triangle*, solid, mark options={rotate=0, solid}},
  rthree/.style  ={cRthree, mark=diamond*,  solid},
  rtwo/.style    ={cRtwo,   mark=pentagon*, solid},
  rone/.style    ={cRone,   mark=x,         solid},
  snrlow/.style  ={solid},
  snrmid/.style  ={dashed,         mark options={solid}},
  snrhigh/.style ={densely dotted, mark options={solid}},
}
\newcommand{\cedata}[1]{data/channel_estimation/#1}
\newcommand{\bfdata}[1]{data/precoder_design/#1}

\definecolor{cGenie}{RGB}{0,110,110}    % perfect CSI + the same search
\definecolor{cTrad}{RGB}{150,60,140}    % fixed 4x4 UPA, identical budget
\definecolor{cZF}{RGB}{0,84,159}
\definecolor{cMMSE}{RGB}{0,128,84}
\definecolor{cMRC}{RGB}{191,60,30}
\pgfplotsset{
  genie/.style ={cGenie, mark=diamond,  solid, mark size=4.2pt},
  trad/.style  ={cTrad,  mark=x,        densely dashed},
  zf/.style    ={cZF,   mark=*,         solid},
  mmse/.style  ={cMMSE, mark=square*,   solid},
  mrc/.style   ={cMRC,  mark=triangle*, solid},
  ptransl/.style={solid},
  pboth/.style  ={dashed,         mark options={solid}},
  pmorph/.style ={densely dotted, mark options={solid}},
}
\colorlet{revcolor}{black}
\tikzset{
  cgroup/.style ={draw=revcolor!75, line width=0.6pt, ellipse, inner sep=0pt},
  cglabel/.style={font=\scriptsize, align=center, inner sep=1.5pt, text=revcolor,
                  fill=white, fill opacity=0.85, text opacity=1},
  cgarrow/.style={draw=revcolor!75, line width=0.6pt, -{Stealth[length=4pt]},
                  shorten >=0.5pt},
}
\newcommand{\curvegroup}[6][0]{%
  \node[cgroup, rotate=#1, minimum width=#3, minimum height=#4]
    (cg) at (axis cs:#2) {};
  \node[cglabel] (cgl) at (axis cs:#5) {#6};
  \draw[cgarrow] (cgl) -- (cg);%
}
\usepackage{microtype}
\newacronym{RIS}{RIS}{reconfigurable intelligent surface}
\newacronym{FIM}{FIM}{flexible intelligent metasurface}
\newacronym{GFIM}{G-FIM}{generalized flexible intelligent metasurface}
\newacronym{BS}{BS}{base station}
\newacronym{NG}{NG}{next-generation}
\newacronym{CP}{CP}{canonical polyadic}
\newacronym{PARAFAC}{PARAFAC}{parallel factor}
\newacronym{ALS}{ALS}{alternating least squares}
\newacronym{NMSE}{NMSE}{normalized mean square error}
\newacronym{RMSE}{RMSE}{root mean square error}
\newacronym{LS}{LS}{least squares}
\newacronym{MMSE}{MMSE}{minimum mean square error}
\newacronym{CSI}{CSI}{channel state information}
\newacronym{AoA}{AoA}{angle of arrival}
\newacronym{MDL}{MDL}{minimum description length}
\newacronym{CRB}{CRB}{Cram\'er--Rao bound}
\newacronym{SNR}{SNR}{signal-to-noise ratio}
\newacronym{UPA}{UPA}{uniform planar array}
\newacronym{MUSIC}{MUSIC}{multiple signal classification}
\newacronym{ESPRIT}{ESPRIT}{estimation of signal parameters via rotational invariance techniques}
\newacronym{EVD}{EVD}{eigenvalue decomposition}
\newacronym{GEVD}{GEVD}{generalized eigenvalue decomposition}
\newacronym{ZF}{ZF}{zero-forcing}
\newacronym{MRC}{MRC}{maximum ratio combining}
\newacronym{TPC}{TPC}{transmit precoder}
\newacronym{SINR}{SINR}{signal-to-interference-plus-noise ratio}
\newacronym{MA}{MA}{movable antenna}
\newacronym{FAS}{FAS}{fluid antenna system}
\newacronym{sixdma}{6DMA}{six-dimensional movable antenna}
\newacronym{SIM}{SIM}{stacked intelligent metasurface}
\newacronym{DMA}{DMA}{dynamic metasurface antenna}
\newacronym{CS}{CS}{compressed sensing}
\newacronym{OMP}{OMP}{orthogonal matching pursuit}
\newacronym{ISAC}{ISAC}{integrated sensing and communications}
\newacronym{SE}{SE}{spectral efficiency}

\DeclareMathOperator{\vecop}{vec}
\newcommand{\kron}{\otimes}
\newcommand{\khatri}{\odot}
\newcommand{\hadamard}{\ast}
\newcommand{\bbR}{\mathbb{R}}
\newcommand{\bbC}{\mathbb{C}}

\begin{document}

% Paper title: edit only here; \title and the running header (\markboth) both use it.
\newcommand{\papertitle}{Training-Efficient Channel Estimation for the Multiuser Uplink Relying on Generalized Flexible Intelligent Metasurfaces}
\title{\papertitle}

\author{Daniel~C.~Ara\'ujo, Andr\'e~L.~F.~de~Almeida, and Lajos~Hanzo}

\maketitle

\begin{abstract}
  We consider uplink multiuser channel estimation for a generalized
  flexible intelligent metasurface at the base station, under narrowband
  far-field propagation and a twin-time-scale mechanical protocol: on the
  slow time-scale the panel translates rigidly over a planar grid, while on the
  fast time-scale the surface morphs along its normal direction. From
  orthogonal multiuser pilots, we show that each user's channel admits a
  fourth-order canonical polyadic tensor structure. We develop two
  complementary estimators: alternating least squares factorization followed
  by paired rotational-invariance wavevector extraction, and a tensor
  spatial-smoothing multiple signal classification estimator that preserves
  the multilinear array manifold. Both share a minimum description length
  front-end for automatic tensor-rank detection. We derive Kruskal
  identifiability conditions that relate the numbers of elements, morphing
  states and grid positions to the maximum number of recoverable multipath
  components, and show that the tensor model reconstructs the channel at panel
  configurations beyond the training grid. This prediction capability drives a
  configuration-selective multiuser transmit precoder that searches the
  entire motion grid without additional pilot overhead. Numerical results
  confirm the identifiability limits and show that, once the model order is
  resolved, the configuration search makes the same choice from estimated as
  from perfect channel state information. The reconfiguration gain over a fixed
  panel of identical aperture and training budget grows with the user load,
  and fast morphing proves valuable for conditioning the tensor factorization
  rather than as a source of beamforming gain.
\end{abstract}

\begin{IEEEkeywords}
Alternating least squares, canonical polyadic decomposition, channel estimation, ESPRIT, flexible intelligent metasurface, minimum description length, multiuser MIMO, MUSIC, tensor decomposition, twin-time-scale training, uplink.
\end{IEEEkeywords}

\glsresetall
\section{Introduction}
\IEEEPARstart{R}{econfigurable} apertures have become a central theme of
\gls{NG} wireless research, because the spatial degrees of freedom of
a \gls{BS} are set by the geometry of its array and that geometry has
traditionally been frozen at deployment~\cite{zhu2024ma_magazine}. \Glspl{RIS} are capable of ameliorating the propagation
environment by programmable control~\cite{basar2019wireless,wu2021intelligent}, \glspl{MA}~\cite{zhu2024ma_magazine} and \glspl{FAS}~\cite{wong2021fas} made the
element positions themselves a design variable, and \glspl{FIM}~\cite{yang2025fim_arch,an2025fim_miso} added a
morphological dimension by allowing each element to be displaced along the surface
normal. Every one of these architectures shares the same prerequisite, namely
accurate \gls{CSI}. However, their configuration space is far larger than the one a fixed
array has to learn, and it has to be acquired within a single coherence time interval. Training overhead, rather than hardware, is therefore what
limits how much of the promised reconfigurability can actually be exploited.

In this paper we tackle this problem for a \gls{GFIM} deployed at the
\gls{BS}, which combines rigid panel motion with local morphing to create rich
spatial diversity for uplink channel recovery. More explicitly, we consider uplink multiuser
channel estimation under a narrowband far-field propagation model and a
refined twin-time-scale mechanical protocol, in which the \gls{GFIM} panel
undergoes rigid planar translation over a two-dimensional grid on the slow
time-scale, while the surface itself executes prompt morphing displacements along the
normal direction. The superposition of these two motions results in a 4th-order
\gls{CP} structure in the per-user channel
tensor, which we exploit through a pair of complementary estimation pipelines,
relying on \gls{ALS} along with paired \gls{ESPRIT}~\cite{roy1989esprit} and tensor spatial-smoothing
\gls{MUSIC}~\cite{schmidt1986music}. Both are equipped with automatic rank detection via the \gls{MDL}
criterion. The parametric estimates that these pipelines return are
generative rather than specific, so the \gls{BS} can evaluate the channel
at panel configurations it never visited during training and pick the
configuration that maximizes the downlink sum-rate, which closes the loop
between the estimation model and the \gls{TPC} that exploits it. The
contribution sits at the confluence of four lines of work, namely on flexible
metasurfaces, on movable antennas, on twin-time-scale training and on tensor
channel estimation, which we review in turn before detailing what is new.

\subsection{State-of-the-Art for RISs and Flexible Intelligent Metasurfaces}

Reconfigurable intelligent surfaces (\glspl{RIS}) have been extensively studied as a means of passively shaping the propagation environment through programmable phase shifts applied to a large number of passive elements~\cite{basar2019wireless,wu2021intelligent,pan2021ris6g,pan2020multicell,pan2020swipt}. While a \gls{RIS} provides a valuable electromagnetic control layer, its elements are fixed in space, and its performance gains are fundamentally tied to a favorable propagation geometry. Flexible intelligent metasurfaces~\cite{yang2025fim_arch,an2025fim_miso} extend this paradigm by endowing each element with a normal-direction morphing degree of freedom, so that the surface can dynamically deform its three-dimensional shape~\cite{yang2025fim_arch,hu2025fim_radio}. This morphological dimension yields a richer set of steering vectors and, crucially, allows the system to exploit spatial diversity along the normal axis, i.e., away from the surface plane, which is entirely absent in rigid \gls{RIS} designs. Recent work has shown that a \gls{FIM} achieves significant channel gain improvement over a rigid \gls{RIS} under matched propagation conditions~\cite{hu2025fim_radio}, while the surface spectral efficiency under statistical \gls{CSI} degrades gracefully with spatial channel correlation~\cite{kumar2025fim_stat}. \Gls{SIM} architectures, in which multiple programmable metasurface layers are
cascaded, further multiply the spatial degrees of freedom available for
wave-domain beamforming without requiring additional RF
chains~\cite{stacked2025fim}. From an architectural standpoint, each \gls{FIM}
element is implemented via a reprogrammable metallic filament subject to
distributed Lorentz forces, which enables sub-wavelength actuation without
active RF chains~\cite{yang2025fim_arch}. On the design side, the morphing
degree of freedom has been optimized jointly with the \gls{TPC} for
downlink multiuser \gls{SE}~\cite{an2025fim_miso} and, more recently, for
\gls{ISAC} objectives~\cite{wang2026fim_isac}. Both of these lines assume
perfect \gls{CSI} over the deformation space, which is precisely the
assumption that the present paper sets out to earn, rather than to postulate.

Despite this progress, channel estimation for \glspl{FIM} remains a critical
bottleneck. The deformation space is continuous and high-dimensional, and it
couples element-level morphing displacements with the array propagation
response in a non-linear fashion. Yang~\textit{et al.}~\cite{yang2025fim_arch}
cast channel estimation as a \gls{CS} problem and recover cascaded and direct
channel components via a variational sparse Bayesian learning algorithm.
Rezvani and Adve~\cite{xiao2025fim_ce} address channel estimation for
dynamically reconfigurable metasurface antennas by exploiting the fast
reconfiguration capability of the surface for multi-measurement diversity, and
propose practical \gls{MMSE}-based algorithms. Jiang~\textit{et
al.}~\cite{jiang2026fim_omp} retain the sparse-recovery formulation but choose
the surface shape so as to minimize the column coherence of the resulting
\gls{OMP} measurement matrix, which makes the morphing itself part of the
sensing design. Xiao~\textit{et al.}~\cite{xiao2026fim_neuralop} move away from
a fixed dictionary altogether and learn the continuous operator that maps a
\gls{FIM} shape to its channel response with a hierarchical Fourier neural
operator, so that the response at an unseen shape is interpolated, rather than
re-estimated. That last objective is the same one we pursue here, but our
route to it is a structural rather than a learned one, since the parametric
tensor model of Section~\ref{sec:sysmodel} extrapolates by evaluating an
identified wavevector at a new configuration and therefore needs neither
training data nor a shape-space dictionary.

\subsection{State-of-the-Art for Movable Antennas and Fluid Antenna Systems}

A parallel line of research replaces fixed antenna positions by flexibly
repositionable elements in order to exploit the small-scale spatial variation
of the channel. The \gls{MA} concept~\cite{zhu2024ma_magazine,zhu2024ma_multiuser}
allows each antenna to slide within a predefined region, and joint antenna-position
and beamforming optimization substantially improves \gls{SE} by aligning
elements with favorable channel conditions. The \gls{FAS}~\cite{wong2021fas}
pursues a similar objective by using liquid-metal or reconfigurable-port elements,
and the \gls{sixdma}~\cite{shao2024_6dma_magazine,shao2025_6dma_twc}
generalizes both to full position-and-rotation control of distributed
sub-arrays, which provides both translational and rotational spatial degrees of
freedom at once. Channel
estimation for these architectures exploits the multi-path field-response
structure, in that the angles of departure, angles of arrival, and complex gains
are jointly recovered by either \gls{CS} or deep learning~\cite{ma_ce_cs,ma_ce_dl}.
Shao~\textit{et al.}~\cite{shao2024_6dma_sparsity} exploit a directional
sparsity across the discrete set of candidate position-rotation pairs and
split the acquisition into a statistical stage and an instantaneous one, which
is the closest \gls{MA} counterpart to the twin-time-scale idea pursued here.
Their diversity is nonetheless indexed by an unstructured catalogue of poses,
whereas our slow grid is a rigid translation whose phase progression is
of Vandermonde nature by construction, so the pose index becomes an algebraic mode of a
tensor rather than a dictionary entry. A further architectural
difference is that \gls{MA} research targets transmit-side repositioning for
beamforming gain, whereas here the \gls{GFIM} operates at the receive side as
a passive aperture whose rigid-panel translation serves as a controlled
diversity mechanism, while element-level morphing modulates the normal-axis
projection of each wavevector.

\subsection{State-of-the-Art for Twin-Time-Scale Training}

Training overhead is one of the gravest practical limitations of
large-aperture intelligent surface-aided systems. The pilot duration required for
full instantaneous channel estimation scales linearly with the number of
surface elements and users, and it quickly exceeds the coherence time in dense
deployments~\cite{zheng2022ris_overhead}. Twin-time-scale frameworks address
this by separating a slow component of large coherence time, such as the
\gls{BS}-surface geometry or the statistical \gls{CSI}, from a fast component
of short coherence time such as the users' effective
channels~\cite{ris_twoscale_springer,ris_twoscale_nearfield}. Within one
slow-scale period, the fast-scale component is estimated several times at much
lower cost, which substantially reduces the aggregate pilot overhead relative
to single-scale baselines~\cite{ris_twoscale_nearfield}. Our work formalizes
an analogous twin-time-scale protocol for the \gls{GFIM}, in which the slow
scale is a rigid planar translation grid and the fast scale is a morphing
state, and shows that this structure directly induces a separable tensor model
that further compresses the estimation problem. The separation here is
mechanical rather than statistical, so both scales carry instantaneous
information and the slow loop is not relegated to second-order statistics.

\subsection{State-of-the-Art for Tensor-Based Channel Estimation}

Tensor decomposition provides a principled framework for exploiting the
multilinear structure of wireless channel
data~\cite{kolda2009tensor,sidiropoulos2017tensor}. The \gls{CP} model, also
known as \gls{PARAFAC}, entered sensor array processing through the
observation that multiple shift invariances of an array are exactly the
multilinear structure that the model captures~\cite{sidiropoulos2000sensor},
and it was subsequently developed into a unified modelling tool for wireless
receivers~\cite{dealmeida2007parafac}.
Tensor-based channel estimation has also been developed extensively for
\gls{RIS}-aided MIMO. De Ara\'{u}jo et al. formulate IRS-assisted MIMO channel
estimation using tensor models and exploit the resulting multilinear
structure~\cite{ris_tensor_ce_2021}. Ardah et al. introduce TRICE, an efficient
tensor-based channel-estimation framework for \gls{RIS}-aided mmWave
MIMO~\cite{trice_ris_2021}. For double-\gls{RIS} systems, coupled tensor
decompositions exploit shared factors to estimate the single- and
double-reflection links separately~\cite{double_ris_tensor_2025}. For
beyond-diagonal \gls{RIS}, one line of work uses block-Tucker models to decouple
channel estimates, while another fixes the group-connected topology and
applies a \gls{PARAFAC}-based alternating least-squares receiver to recover the
individual channels~\cite{bdris_tensor_tucker_2025,bdris_parafac_2026}.
Complementary semi-blind receivers jointly estimate channels and symbols for
beyond-diagonal \gls{RIS} using PARATUCK-to-\gls{PARAFAC} or Tucker models, and for hybrid
reflecting-and-sensing \gls{RIS} using tensor-coded signals observed at both the
surface and the \gls{BS}~\cite{bdris_semiblind_2026,hris_semiblind_2025}. Tensor
methods have also reached related apertures: \gls{RIS}-aided fluid-antenna systems
use \gls{PARAFAC} or nested-PARAFAC2 models to exploit joint surface and
antenna-port diversity, while TenSIM uses parity-dependent \gls{PARAFAC} and Tucker
models for stacked intelligent
metasurfaces~\cite{fluid_ris_semiblind_2026,tensim_2026}.
In parallel, tensor methods have also been applied to multiuser uplink \gls{CSI} estimation in mmWave massive MIMO systems, where the received pilot
tensor fits a low-rank \gls{CP} structure whose factor matrices encode spatial
frequencies, user directions, and path
gains~\cite{parafac_mmwave_multiuser,tensor_ce_mmwave_ofdm,parafac_timevary_2024}.
Identifiability of the \gls{CP} factorization under mild rank and k-rank
conditions provides stronger uniqueness guarantees than matrix-based methods
and enables blind or semi-blind recovery with reduced pilot
overhead~\cite{parafac_mmwave_multiuser}. Deep-learning-aided
\gls{ALS}~\cite{dl_als_mimo2024} further accelerates the convergence of the
\gls{CP} fit under noise. The \gls{ESPRIT} algorithm~\cite{roy1989esprit}, and its unitary
form~\cite{haardt1995unitary}, exploit the shift-invariance of
Vandermonde factor matrices to extract spatial frequencies in closed form,
which makes it a natural post-processing companion to the \gls{CP}
decomposition whenever the array geometry induces uniform phase progressions.
In parallel, \gls{MUSIC}~\cite{schmidt1986music,stoica1989music} provides a
subspace-based alternative that separates the signal and noise subspaces by
eigen-decomposition of a sample covariance matrix. Spatial
smoothing~\cite{shan1985smoothing} and its forward--backward
refinement~\cite{pillai1989fb} restore the rank of that covariance under
coherent or single-snapshot conditions, and the extension of both to the
tensor domain via multi-mode smoothing~\cite{haardt2008tensor_music} builds
higher-rank covariance matrices from structured sub-tensors, which improves
both the resolution and the robustness of multi-dimensional array settings.

Tensor methods have very recently reached the flexible-aperture literature
itself, and three contributions are close enough to the present one to deserve
an explicit comparison. Romano~\textit{et al.}~\cite{romano2026fim_parafac}
model a point-to-point MIMO link with active \gls{FIM} arrays at both ends and
morph the two surfaces in sequence, which yields a 4-way \gls{PARAFAC} model
that an \gls{ALS} receiver factors into steering matrices and path gains.
Alcantara~\textit{et al.}~\cite{alcantara2026mis} translate a small movable
metasurface layer over a larger fixed one and combine its positions with
electronically controlled phase patterns, which produces a 4-way \gls{PARAFAC}
model whose position-dependent factor is estimated blindly together with the
two cascaded channels. De Ara\'ujo~\textit{et
al.}~\cite{araujo2025ma_parafac2} also consider the multiuser uplink, but move to a
\gls{PARAFAC}2 receiver that estimates channels and symbols jointly from a
movable array at the \gls{BS}. By contrast, Tapie~\textit{et
al.}~\cite{tapie2026dma_tensor} show with an $18$~GHz prototype that a
block-Tucker \gls{ALS} recovers the channel of a strongly coupled \gls{DMA}.
Every one of these models draws its extra tensor modes from a single physical
mechanism, regardless of whether that is morphing, a translated sub-aperture, or an
electronic phase pattern, and each stops at the estimate.
Table~\ref{tab:novelty} boldly contrasts our contributions with the existing body of work. The
problem that remains open, namely the one this paper targets, is the case in which two
distinct mechanical motions act on the same aperture at two different
time-scales, the 4th-order \gls{CP} structure that their superposition
induces, the Kruskal conditions that this structure implies, and the use of
the identified parameters to select a panel configuration that was never
trained.

\subsection{Novel Contributions}

Against this background, the main contributions of this work are summarized as
follows.
\begin{enumerate}
\item We formalize a rigid-panel planar translation model for the \gls{GFIM}
  that is distinct both from the element-wise planar motion models studied
  elsewhere and from the single-axis morphing-only model of prior \gls{FIM}
  studies. Rigid translation on the slow time-scale and element morphing on the
  fast time-scale are superimposed on the same aperture, and together they
  yield a richer diversity space under fewer hardware constraints than either
  motion alone.
\item We exploit the separability of the rigid translation and the morphing
  displacement under narrowband far-field propagation conditions to derive a 4th-order
  \gls{CP} model for the channel tensor of each user, with modes indexed by
  the array elements, the morphing states, and the two planar-grid axes. To
  the best of our knowledge, no earlier tensor formulation draws its extra modes from
  two mechanical motions acting at two different time-scales.
\item We develop a pair of per-user estimation algorithms that operate on the
  pilot-separated 4-way tensor. The first combines \gls{ALS}-based \gls{CP}
  decomposition with paired \gls{ESPRIT}, which extracts the wavevector
  components from the Vandermonde structure of each factor matrix and
  therefore returns super-resolution parametric estimates. The second employs
  tensor spatial-smoothing \gls{MUSIC}, which reshapes the tensor to expose
  the \gls{UPA} structure and builds a multi-mode covariance matrix from
  overlapping sub-tensors, so that a subspace estimate is obtained without any
  explicit tensor factorization and without a convergence failure mode.
\item We equip both pipelines with a common \gls{MDL} front-end that
  estimates the model order from the eigenvalue spectrum of the largest-mode
  unfolding. We write the \gls{MDL} cost in the tensor context and show that
  the 4-way structure supplies an effective sample size that grows
  multiplicatively with the mode dimensions, which is what makes the rank
  decision reliable at the modest per-mode dimensions of a practical panel.
\item We derive closed-form Kruskal identifiability conditions that relate the
  system dimensions $(N,K,I,J)$ to the maximum number of recoverable multipath
  components, and we show that the structured model decouples the user
  separation step from the spatial estimation step. Channel reconstruction at
  arbitrary panel configurations beyond the training grid follows, and we turn
  that reconstruction into a configuration-selective multiuser \gls{TPC} whose
  search costs no pilot resource at all.
\end{enumerate}

The rest of this paper is organized as follows. Section~\ref{sec:sysmodel}
introduces the rigid-panel motion model and the twin-time-scale training
protocol. Furthermore, it derives the 4th-order \gls{CP} structure of the per-user channel
tensor. Section~\ref{sec:estimation} develops the two estimation pipelines
together with the \gls{MDL} front-end, the channel reconstruction rule, the
identifiability conditions and the training allocation that they imply. Section~\ref{sec:beamforming} formulates the
downlink multiuser precoding problem and the configuration search that the
parametric estimates enable. It also states the fixed-panel baseline and
the complexity of both stages. Section~\ref{sec:results} reports the numerical
campaign, in which each subsection tests a specific derivation of the two
preceding sections. Finally, Section~\ref{sec:conclusion} concludes the paper.

The following notation is used throughout the paper. Scalars are denoted by $a$, vectors by $\mathbf a$, matrices by $\mathbf A$, and tensors by calligraphic letters $\mathcal A$. The vectorization operator is $\vecop(\cdot)$, $\mathbf A_{(n)}$ denotes the mode-$n$ unfolding of the tensor $\mathcal A$, that is, the matrix whose columns are the mode-$n$ fibers of $\mathcal A$, and $\mathcal A\times_n\mathbf M$ is the mode-$n$ product of $\mathcal A$ by matrix $\mathbf M$. The Kronecker, Khatri--Rao, and Hadamard (element-wise) products are $\kron$, $\khatri$, and $\hadamard$, respectively. The \gls{CP} decomposition of a $d$-way tensor is written as $\mathcal X \approx \llbracket \mathbf A^{(1)},\dots,\mathbf A^{(d)}\rrbracket = \sum_{\ell=1}^R \mathbf a^{(1)}_\ell\circ \cdots \circ \mathbf a^{(d)}_\ell$. We consider $U$ uplink users and a receiver-side \gls{GFIM} with $N$ elements. The slow grid has $I$ positions along $x$ and $J$ positions along $z$, the fast morphing has $K$ states, and each user transmits $P$ pilot symbols. Indices are $n$ for elements, $k$ for fast states, and $(i,j)$ for slow grid
positions, and the imaginary unit is denoted by $\mathrm{j}$. Time indexing
follows a strict twin-time-scale protocol, in which the fast morphing index
$k\in\{1,\dots,K\}$ runs within a fixed planar location while the slow planar
grid indices $i\in\{1,\dots,I\}$ and $j\in\{1,\dots,J\}$ move the panel from one
location to the next. Each user $u\in\{1,\dots,U\}$ is seen through
$L_u$ propagation paths, a quantity that is unknown at the receiver and is
estimated in Section~\ref{sec:mdl}.

% Positioning of this work against the closest prior art.  Referenced from
% the Related Work section (Sec. I-A) and revisited in the Conclusion.
\begin{table*}[t]
\centering
\caption{Positioning of this work against the closest prior art. A dash marks a dimension the reference does not address.}
\label{tab:novelty}
\renewcommand{\arraystretch}{1.15}
\scriptsize
\setlength{\tabcolsep}{4pt}
\begin{tabular}{|l|p{0.20\textwidth}|c|p{0.27\textwidth}|c|c|c|}
\hline
\textbf{Ref.} & \textbf{Flexibility mechanism} & \textbf{Time-} &
\textbf{Signal model and estimator} & \textbf{Multi-} & \textbf{Off-grid} &
\textbf{Precoder}\\
 & & \textbf{scales} & & \textbf{user} & \textbf{prediction} & \textbf{design}\\
\hline
\cite{yang2025fim_arch}      & Element morphing along $y$          & One & Sparse recovery, variational SBL              & --  & --  & -- \\
\cite{xiao2025fim_ce}        & Electronic reconfiguration          & One & Multi-measurement \gls{MMSE}                  & --  & --  & -- \\
\cite{jiang2026fim_omp}      & Element morphing along $y$          & One & Coherence-designed \gls{OMP}                  & --  & --  & Yes \\
\cite{xiao2026fim_neuralop}  & Element morphing along $y$          & One & Learned neural operator                       & --  & Yes & -- \\
\cite{an2025fim_miso}        & Element morphing along $y$          & One & Perfect \gls{CSI} assumed                     & Yes & --  & Yes \\
\cite{romano2026fim_parafac} & Morphing at both link ends          & One & 4-way \gls{CP}, \gls{ALS}                     & --  & --  & -- \\
\cite{alcantara2026mis}      & Movable layer over a fixed layer    & One & 4-way \gls{CP}, \gls{ALS}                     & --  & --  & -- \\
\cite{araujo2025ma_parafac2} & \gls{MA} positions at the \gls{BS}  & One & \gls{PARAFAC}2, semi-blind                    & Yes & --  & -- \\
\cite{tapie2026dma_tensor}   & \gls{DMA}, coupled, static          & One & Block-Tucker \gls{ALS}                        & --  & --  & -- \\
\cite{shao2024_6dma_sparsity}& \gls{sixdma} position and rotation  & Two & Directional-sparsity recovery                 & Yes & --  & Yes \\
\cite{ris_twoscale_nearfield}& \gls{RIS} phases, static aperture   & Two & \glsentryshort{LS} on a statistical/fast split          & Yes & --  & Yes \\
\hline
This work                    & Rigid panel translation with element morphing
                                                                   & Two & 4-way \gls{CP}, with \gls{ALS}\,+\,\gls{ESPRIT} and tensor \gls{MUSIC}, \gls{MDL} rank
                                                                                                                   & Yes & Yes & Yes \\
\hline
\end{tabular}
\end{table*}

\section{System Model and Problem Formulation}
\label{sec:sysmodel}

In this section we first introduce the rigid-panel motion model of the
\gls{GFIM} and the twin-time-scale training protocol that it induces. We then
derive the 4th-order \gls{CP} structure of the per-user channel tensor and
state the pair of problems that the remainder of the paper solves.

We consider a \gls{BS} equipped with a \gls{GFIM} comprising $N$ elements. The panel is subject to rigid planar translation over a two-dimensional grid with $I$ steps along $x$ and $J$ steps along $z$ (the slow time scale), where each element can undergo morphing displacements along the normal direction $y$ across $K$ states (the fast time scale). Element positions are described relative to a reference configuration and the common panel displacement. We formalize this next.

Let the reference, untranslated, planar coordinates of element $n$ be
\begin{equation}
\mathbf p_n^{(0)}=\begin{bmatrix}x_n^{(0)}\\ 0 \\ z_n^{(0)}\end{bmatrix}\in\bbR^3,\qquad n=1,\dots,N.
\label{eq:B0_refpos}
\end{equation}
At slow planar grid position $(i,j)$, the entire panel is translated by the common displacement of
\begin{equation}
\boldsymbol{\delta}[i,j]\triangleq \begin{bmatrix}\Delta_x[i]\\ 0\\ \Delta_z[j]\end{bmatrix}\in\bbR^3,
\qquad i=1,\dots,I,\ j=1,\dots,J,
\label{eq:B0_panelshift}
\end{equation}
where $(\Delta_x[i],\Delta_z[j])$ are the known controlled translation steps. At fast morphing state $k$, element $n$ experiences a normal-direction displacement $y_n[k]$. Thus, the element position at configuration $(k,i,j)$ is
\begin{equation}
\boxed{
\mathbf p_n[k,i,j]=\mathbf p_n^{(0)}+\boldsymbol{\delta}[i,j]+\begin{bmatrix}0\\ y_n[k]\\ 0\end{bmatrix}.
}
\label{eq:B0_fullpos}
\end{equation}
Hence, the planar motion is rigid and common to all elements, whereas the
morphing is element-dependent in general. Throughout this paper we restrict the
morphing to a rigid piston mode, in which the displacement is common to all
elements at each state $k$, so that $y_n[k]=y[k]$ for every $n$ and the
morphing phase factor depends only on $k$. This is the assumption that makes
the fourth mode separable from the element mode, and it is what distinguishes
the fast motion of the present protocol from the free surface deformation of
the \gls{FIM} literature~\cite{yang2025fim_arch,an2025fim_miso}. A
non-uniform profile $y_n[k]$ would couple the $n$ and $k$ indices and reduce
the model of Eq.~\eqref{eq:B_CP_user} to a 3-way one based on the pair $(n,k)$.

For each planar grid location $(i,j)$ in the slow loop, the panel remains fixed on the $x$-$z$ plane, while the morphing index $k$ cycles through all $K$ states in the fast loop. This yields training configurations indexed by $(k,i,j)$, and the total number of configurations is $K I J$. At each configuration, the $U$ users transmit uplink pilots. Each user $u$ transmits a pilot vector $\mathbf s_u\in\bbC^{P}$, a sequence of $P$ symbols, and we collect them as $\mathbf S=[\mathbf s_1,\dots,\mathbf s_U]\in\bbC^{P\times U}$. Throughout, we assume the pilots are mutually orthogonal, $\mathbf S^H\mathbf S=P\,\mathbf I_U$ with $P\geq U$, so the users are separated at the receiver by a matched-filter projection along the pilot mode, as detailed in Sec.~\ref{sec:ortho_pilots}. The model is therefore developed per user, and the remainder of this section focuses on the 4th-order tensor of a single user. At configuration $(k,i,j)$, the signal received at the $N$ \gls{GFIM} elements is
\begin{equation}
\mathbf Y[k,i,j]=\sum_{u=1}^U \sum_{\ell=1}^{L_u} \beta_{u,\ell}\, \mathbf a_{u,\ell}(k,i,j)\, \mathbf s_u^T+\mathbf N[k,i,j],
\label{eq:B_Y}
\end{equation}
where $\mathbf Y[k,i,j]\in\bbC^{N\times P}$ and $\mathbf N[k,i,j]$ is noise. Each user $u$ has $L_u$ paths, where $\beta_{u,\ell}\in\bbC$ is the complex gain of path $\ell$ of user $u$ and $\mathbf a_{u,\ell}(k,i,j)\in\bbC^N$ is the steering vector for that path at $(k,i,j)$, as defined below. The channel vector of user $u$ is $\mathbf h_u[k,i,j]=\sum_{\ell=1}^{L_u}\beta_{u,\ell}\,\mathbf a_{u,\ell}(k,i,j)$, so that $\mathbf Y[k,i,j]=\mathbf H[k,i,j]\mathbf S^T+\mathbf N[k,i,j]$ with $\mathbf H[k,i,j]=[\mathbf h_1[k,i,j],\dots,\mathbf h_U[k,i,j]]$.

We assume multipath propagation and model each user's channel as a sum over steering vectors weighted by path gains. The channel vector of user $u$ at configuration $(k,i,j)$ is
\begin{equation}
\mathbf h_u[k,i,j]=\sum_{\ell=1}^{L_u}\beta_{u,\ell}\,\mathbf a_{u,\ell}(k,i,j),
\label{eq:B_geom}
\end{equation}
where $\beta_{u,\ell}\in\bbC$ is the complex gain of path $\ell$ of user $u$ and $\mathbf a_{u,\ell}(k,i,j)\in\bbC^N$ is the steering vector for that path at $(k,i,j)$. Using the element position from Eq.~\eqref{eq:B0_fullpos}, the $n$-th entry of the steering vector for path $\ell$ of user $u$ can be written as a product of three exponentials. Under a far-field plane-wave assumption, with wavevector $\mathbf q_{u,\ell}$ for path $\ell$ of user $u$, we have:
\begin{equation}
\begin{split}
[\mathbf a_{u,\ell}(k,i,j)]_n&=\exp\!\left(-\mathrm{j}\mathbf q_{u,\ell}^T\boldsymbol{\delta}[i,j]\right)\exp\!\left(-\mathrm{j}\mathbf q_{u,\ell}^T\mathbf p_n^{(0)}\right)\\
&\quad\times\exp\!\left(-\mathrm{j} q^{(y)}_{u,\ell} y_n[k]\right).
\end{split}
\label{eq:B_planewave_fact}
\end{equation}
The three factors correspond to the decomposition of $\mathbf p_n[k,i,j]$ in Eq.~\eqref{eq:B0_fullpos}: reference position $\mathbf p_n^{(0)}$, morphing $y_n[k]$ along $y$, and planar translation $\boldsymbol{\delta}[i,j]$. The planar factor splits into its two coordinates. Writing $\mathbf q_{u,\ell}=[q^{(x)}_{u,\ell},q^{(y)}_{u,\ell},q^{(z)}_{u,\ell}]^T$ and using the panel shift of Eq.~\eqref{eq:B0_panelshift}, which is additive in $x$ and $z$, the planar phase becomes $\mathbf q_{u,\ell}^T\boldsymbol{\delta}[i,j]=q^{(x)}_{u,\ell}\Delta_x[i]+q^{(z)}_{u,\ell}\Delta_z[j]$ and therefore factors exactly into an $i$-dependent and a $j$-dependent term. The steering entry is thus a product of four scalar phases, one per index, and we accordingly define the spatial vector $\mathbf e_{u,\ell,n}\in\bbC^N$ with $[\mathbf e_{u,\ell,n}]_n=\exp(-\mathrm{j}\mathbf q_{u,\ell}^T\mathbf p_n^{(0)})$, the morphing vector $\mathbf e_{u,\ell,k}\in\bbC^K$ with $[\mathbf e_{u,\ell,k}]_k=\exp(-\mathrm{j} q^{(y)}_{u,\ell} y[k])$, and the two planar vectors $\mathbf e_{u,\ell,i}\in\bbC^{I}$, $\mathbf e_{u,\ell,j}\in\bbC^{J}$ with $[\mathbf e_{u,\ell,i}]_i=\exp(-\mathrm{j} q^{(x)}_{u,\ell}\Delta_x[i])$ and $[\mathbf e_{u,\ell,j}]_j=\exp(-\mathrm{j} q^{(z)}_{u,\ell}\Delta_z[j])$. Note that this four-way separation requires no further assumption beyond the
piston morphing already stated, since it follows directly from the rigid-panel
model of Eq.~\eqref{eq:B0_fullpos}, whose translation term is additive and
common to every element. With these definitions, $[\mathbf a_{u,\ell}(k,i,j)]_n=[\mathbf e_{u,\ell,n}]_n\,[\mathbf e_{u,\ell,k}]_k\,[\mathbf e_{u,\ell,i}]_i\,[\mathbf e_{u,\ell,j}]_j$, so the array of entries over $(n,k,i,j)$ is the Kronecker product of the four vectors,
\begin{equation}
\mathbf a_{u,\ell}(\cdot,\cdot,\cdot)=\mathbf e_{u,\ell,n}\kron\mathbf e_{u,\ell,k}\kron\mathbf e_{u,\ell,i}\kron\mathbf e_{u,\ell,j}.
\label{eq:B_kron_steering}
\end{equation}
Thus the steering structure yields a 4th-order tensor with modes $N$, $K$, $I$, $J$.

We next substitute the channel and steering model, Eqs.~\eqref{eq:B_geom} and~\eqref{eq:B_planewave_fact}, into Eq.~\eqref{eq:B_Y}. The noiseless part of $\mathbf Y[k,i,j]$ is a sum over users and paths of terms $\beta_{u,\ell}\,\mathbf a_{u,\ell}(k,i,j)\,\mathbf s_u^T$, with $\mathbf a_{u,\ell}$ given by the Kronecker structure in Eq.~\eqref{eq:B_kron_steering}. For each user $u$, we collect the channel vectors $\mathbf h_u[k,i,j]$ over all configurations $(k,i,j)$ into a 4-way tensor
\begin{equation}
\mathcal H_u\in\bbC^{N\times K\times I\times J},\qquad \mathcal H_u(:,k,i,j)=\mathbf h_u[k,i,j].
\label{eq:B_tensor_user}
\end{equation}
From the four-vector Kronecker structure in Eq.~\eqref{eq:B_kron_steering}, the four modes $N$, $K$, $I$, $J$ are separable and $\mathcal H_u$ is a sum of $L_u$ rank-1 tensors, i.e., a 4th-order \gls{PARAFAC} of rank at most $L_u$,
\begin{equation}
\begin{split}
\mathcal H_u&=\sum_{\ell=1}^{L_u}\beta_{u,\ell}\,\mathbf e_{u,\ell,n}\circ\mathbf e_{u,\ell,k}\circ\mathbf e_{u,\ell,i}\circ\mathbf e_{u,\ell,j}\\
&=\llbracket \boldsymbol\beta_u;\mathbf A_u,\mathbf B_u,\mathbf C_u,\mathbf D_u\rrbracket,
\end{split}
\label{eq:B_CP_user}
\end{equation}
where $\boldsymbol\beta_u=[\beta_{u,1},\dots,\beta_{u,L_u}]^T$ collects the path gains and the four factor matrices stack the corresponding vectors over $\ell$, one per mode: $\mathbf A_u=[\mathbf e_{u,1,n},\dots,\mathbf e_{u,L_u,n}]\in\bbC^{N\times L_u}$ for the spatial mode, and likewise $\mathbf B_u\in\bbC^{K\times L_u}$, $\mathbf C_u\in\bbC^{I\times L_u}$ and $\mathbf D_u\in\bbC^{J\times L_u}$ for the morphing and the two planar modes. Each factor thus carries a single subscript, namely the user index.
Fig.~\ref{fig:tensor_cp} provides a graphical representation of this
decomposition.

\begin{figure*}[t]
\centering
\begin{tikzpicture}[
  >=Stealth,
  every node/.style={font=\small},
  annot/.style={font=\scriptsize, text=black!60},
]

% ===== 4-D Tensor H_u =====
\begin{scope}[xshift=0cm]
  % Back slabs for depth (I x J)
  \foreach \d in {0.8, 0.4} {
    \fill[blue!8, opacity=0.5]
      (\d+0.0, \d+0.0) rectangle (\d+2.4, \d+3.0);
    \draw[thick, gray!40]
      (\d+0.0, \d+0.0) rectangle (\d+2.4, \d+3.0);
  }
  % Front slab
  \fill[blue!12] (0,0) rectangle (2.4,3.0);
  \draw[thick] (0,0) rectangle (2.4,3.0);

  % Grid lines (N rows, K columns)
  \foreach \y in {0.5, 1.0, 1.5, 2.0, 2.5} {
    \draw[gray!20, thin] (0,\y) -- (2.4,\y);
  }
  \foreach \x in {0.6, 1.2, 1.8} {
    \draw[gray!20, thin] (\x,0) -- (\x,3.0);
  }

  % Mode labels
  \node[left] at (-0.1, 1.5) {$N$};
  \node[below] at (1.2, -0.1) {$K$};
  \draw[->, gray, thick] (2.4, 3.0) -- (3.4, 4.0);
  \node[above right, font=\scriptsize, gray] at (3.3, 3.9) {$I \times J$};

  % Tensor name
  \node[above] at (1.5, 4.0) {$\mathcal H_u \in \mathbb C^{N \times K \times I \times J}$};
\end{scope}

% ===== Equals sign =====
\node[font=\Large] at (4.2, 1.8) {$=$};

% ===== Sum of rank-1 terms =====
\node[font=\large] at (5.0, 1.8) {$\displaystyle\sum_{\ell=1}^{L_u}$};
\node[font=\small] at (5.0, 0.8) {$\beta_{u,\ell}$};

% ===== Outer product vectors =====
% Orientation follows the tensor drawing: N vertical, K horizontal, and the
% two planar modes I and J along two depth diagonals.
% Vector e_n (vertical)
\begin{scope}[xshift=6.0cm, yshift=0.5cm]
  \fill[red!15] (0, 0) rectangle (0.35, 2.4);
  \draw[thick, red!70!black] (0, 0) rectangle (0.35, 2.4);
  \node[below, font=\scriptsize, red!70!black] at (0.175, -0.1) {$\mathbf e_{n}$};
  \node[above, font=\scriptsize] at (0.175, 2.5) {$N$};
\end{scope}

\node at (6.75, 1.7) {$\circ$};

% Vector e_k (horizontal)
\begin{scope}[xshift=7.05cm, yshift=1.53cm]
  \fill[green!15] (0, 0) rectangle (1.6, 0.35);
  \draw[thick, green!60!black] (0, 0) rectangle (1.6, 0.35);
  \node[below, font=\scriptsize, green!60!black] at (0.8, -0.1) {$\mathbf e_{k}$};
  \node[above, font=\scriptsize] at (0.8, 0.45) {$K$};
\end{scope}

\node at (9.0, 1.7) {$\circ$};

% Vector e_i (depth)
\begin{scope}[xshift=9.3cm, yshift=1.05cm]
  \fill[orange!15] (0,0) -- (0.35,0) -- (1.45,1.1) -- (1.1,1.1) -- cycle;
  \draw[thick, orange!70!black] (0,0) -- (0.35,0) -- (1.45,1.1) -- (1.1,1.1) -- cycle;
  \node[below, font=\scriptsize, orange!70!black] at (0.175, -0.1) {$\mathbf e_{i}$};
  \node[above, font=\scriptsize] at (1.275, 1.2) {$I$};
\end{scope}

\node at (11.1, 1.7) {$\circ$};

% Vector e_j (depth, along a shallower diagonal than e_i)
\begin{scope}[xshift=11.4cm, yshift=1.2cm]
  \fill[violet!15] (0,0) -- (1.4,0.8) -- (1.4,1.1) -- (0,0.3) -- cycle;
  \draw[thick, violet!70!black] (0,0) -- (1.4,0.8) -- (1.4,1.1) -- (0,0.3) -- cycle;
  \node[below, font=\scriptsize, violet!70!black] at (0.1, -0.1) {$\mathbf e_{j}$};
  \node[above, font=\scriptsize] at (1.4, 1.15) {$J$};
\end{scope}

% ===== CP notation =====
\node[font=\small] at (15.2, 1.8) {$= \llbracket \widetilde{\mathbf A}_u, \mathbf B_u, \mathbf C_u, \mathbf D_u \rrbracket$};

% ===== Factor matrices below =====
\begin{scope}[yshift=-2.8cm, xshift=4.0cm]
  % A_u
  \fill[red!10] (0,0) rectangle (1.0, 1.8);
  \draw[thick, red!70!black] (0,0) rectangle (1.0, 1.8);
  \node[font=\scriptsize] at (0.5, 0.9) {$\widetilde{\mathbf A}_u$};
  \node[below, font=\scriptsize] at (0.5, -0.1) {$N \times L_u$};

  % B_u
  \fill[green!10] (1.8,0.3) rectangle (2.6, 1.8);
  \draw[thick, green!60!black] (1.8,0.3) rectangle (2.6, 1.8);
  \node[font=\scriptsize] at (2.2, 1.05) {$\mathbf B_u$};
  \node[below, font=\scriptsize] at (2.2, 0.2) {$K \times L_u$};

  % C_u
  \fill[orange!10] (3.4,0.3) rectangle (4.2, 1.8);
  \draw[thick, orange!70!black] (3.4,0.3) rectangle (4.2, 1.8);
  \node[font=\scriptsize] at (3.8, 1.05) {$\mathbf C_u$};
  \node[below, font=\scriptsize] at (3.8, 0.2) {$I \times L_u$};

  % D_u
  \fill[violet!10] (5.0,0.3) rectangle (5.8, 1.8);
  \draw[thick, violet!70!black] (5.0,0.3) rectangle (5.8, 1.8);
  \node[font=\scriptsize] at (5.4, 1.05) {$\mathbf D_u$};
  \node[below, font=\scriptsize] at (5.4, 0.2) {$J \times L_u$};

  % Annotation
  \node[annot, text width=7cm, align=center] at (3.0, -1.15)
    {Each column $\ell$ of a factor matrix is a Vandermonde vector\\
     encoding the projection of wavevector $\mathbf q_{u,\ell}$ onto the corresponding mode.};
\end{scope}

\end{tikzpicture}
\caption{4th-order \gls{CP} structure of the per-user channel tensor $\mathcal H_u$ of Eq.~\eqref{eq:B_CP_user}, with the factor matrices of Eq.~\eqref{eq:B_CP_user_weighted_equiv}.}
\label{fig:tensor_cp}
\end{figure*}

We define the training phase as the period during which, at each configuration $(k,i,j)$, every user $u$ transmits its pilot sequence $\mathbf s_u\in\bbC^P$ and the \gls{BS} collects the received matrix $\mathbf Y[k,i,j]\in\bbC^{N\times P}$ over all $K I J$ configurations. Thus training is characterized by the twin-time-scale protocol, in which the slow loop sweeps the planar grid and the fast loop sweeps the morphing states, with simultaneous multiuser pilot transmission at each configuration, so that one received block is obtained per configuration. Stacking these matrices into a 5-way tensor, we obtain the received-signal tensor
\begin{equation}
\mathcal Y\in\bbC^{N\times P\times K\times I\times J},\qquad \mathcal Y(:,:,k,i,j)=\mathbf Y[k,i,j],
\label{eq:B_tensor_Y}
\end{equation}
i.e., the $(n,p,k,i,j)$-entry is the signal associated with element $n$, pilot symbol index $p$, and configuration $(k,i,j)$. Let $\mathcal H\in\bbC^{N\times U\times K\times I\times J}$ denote the multiuser channel tensor with $\mathcal H(:,u,k,i,j)=\mathbf h_u[k,i,j]$, and let $\mathcal N\in\bbC^{N\times P\times K\times I\times J}$ be the noise tensor with $\mathcal N(:,:,k,i,j)=\mathbf N[k,i,j]$. The noiseless received slice at $(k,i,j)$ is $\mathbf H[k,i,j]\mathbf S^T=\sum_{u=1}^U \mathbf h_u[k,i,j]\mathbf s_u^T$, so entrywise $\mathcal Y_0(n,p,k,i,j)=\sum_{u=1}^U \mathcal H(n,u,k,i,j)\,[\mathbf S]_{p,u}$. Upon using the mode-$2$ (user) product of $\mathcal H$ with $\mathbf S^T\in\bbC^{P\times U}$, we have
\begin{equation}
\mathcal Y=\mathcal Y_0+\mathcal N,\qquad \mathcal Y_0=\mathcal H\times_2\mathbf S^T,
\label{eq:B_Y_tensor}
\end{equation}
where $\mathcal Y_0\in\bbC^{N\times P\times K\times I\times J}$ and $(\mathcal H\times_2\mathbf S^T)(n,p,k,i,j)=\sum_{u=1}^U \mathcal H(n,u,k,i,j)\,[\mathbf S]_{p,u}$, with $[\mathbf S]_{p,u}$ the $(p,u)$-entry of $\mathbf S$. Eq.~\eqref{eq:B_Y_tensor} is the tensor expression for the received training signal, in which the multiuser channel tensor $\mathcal H$ is contracted along the user mode with the pilot matrix $\mathbf S$ to produce the 5-way received tensor whose observations are used for channel estimation.

\begin{figure*}[t]
\centering
\resizebox{\textwidth}{!}{%
\begin{tikzpicture}[
  font=\footnotesize,
  >={Latex[length=2.2mm]},
  slowblock/.style={draw, thick, fill=blue!10, minimum height=\slowH, minimum width=\slowW, rounded corners=1pt},
  fastslot/.style={draw, fill=orange!15, minimum height=\fastH, minimum width=\fastW, rounded corners=1pt},
  pilotblock/.style={draw, fill=green!12, minimum height=\pilotH, minimum width=\pilotW, rounded corners=1pt},
  rxblock/.style={draw, thick, fill=yellow!18, minimum height=8mm, minimum width=18mm, rounded corners=2pt},
  braceup/.style={decorate, decoration={brace, amplitude=4pt, raise=2pt}},
  bracedn/.style={decorate, decoration={brace, amplitude=4pt, mirror, raise=2pt}}
]

% Layout parameters (single-point control)
\def\xL{0.9}
\def\xS{1.75}
\def\xM{4.45}
\def\xR{8.25}
\def\ySlow{4.00}
\def\yFast{1.95}
\def\yPilot{-0.35}
\def\titleOffSlow{0.95}
\def\titleOffFast{0.68}
\def\titleOffPilot{0.95}
\def\braceOffDown{0.48}
\def\rightX{11.35}
\def\rightTopY{2.45}
\def\rightBotY{-0.95}
\def\rightScale{0.62}

\def\slowW{13mm}
\def\slowH{7.5mm}
\def\fastW{12mm}
\def\fastH{7mm}
\def\pilotW{11mm}
\def\pilotH{6mm}

% Slow loop
\node[font=\small\bfseries, anchor=west, align=left, text width=9.0cm]
  at (-0.2,\ySlow+\titleOffSlow)
  {Slow time-scale: rigid panel translation over grid positions};

\node[slowblock]               (s11)   at (\xL,\ySlow) {$(1,1)$};
\node[slowblock]               (s12)   at (\xL+\xS,\ySlow) {$(1,2)$};
\node                                  at (\xL+2*\xS,\ySlow) {$\cdots$};
\node[slowblock, fill=blue!25] (sij)   at (\xM+\xS,\ySlow) {$(i,j)$};
\node                                  at (\xM+2*\xS,\ySlow) {$\cdots$};
\node[slowblock]               (sNxNz) at (\xR,\ySlow) {$(I,J)$};

\draw[bracedn] (\xL-0.65,\ySlow-\braceOffDown) -- (\xR+0.65,\ySlow-\braceOffDown)
  node[midway,below=5pt] {$IJ$ panel positions};

% Fast loop zoom
\draw[->,thick,densely dashed,blue!60]
  (sij.south) -- ++(0,-0.30) -| (\xL+0.6,\yFast+0.35);
\draw[->,thick,densely dashed,blue!60]
  (sij.south) -- ++(0,-0.30) -| (\xR+0.2,\yFast+0.35);

\node[font=\small\bfseries, anchor=west, align=left, text width=9.0cm]
  at (-0.2,\yFast+\titleOffFast)
  {Fast time-scale at fixed $(i,j)$: morphing states};

\node[fastslot]                (f1) at (\xL+0.8,\yFast) {$k=1$};
\node[fastslot]                (f2) at (\xL+2.45,\yFast) {$k=2$};
\node                                 at (\xL+3.75,\yFast) {$\cdots$};
\node[fastslot, fill=orange!30](fk) at (\xM+1.55,\yFast) {$k$};
\node                                 at (\xM+2.75,\yFast) {$\cdots$};
\node[fastslot]                (fK) at (\xR+0.1,\yFast) {$k=K$};

\draw[bracedn] (\xL+0.1,\yFast-\braceOffDown) -- (\xR+0.75,\yFast-\braceOffDown)
  node[midway,below=5pt] {$K$ morphing states};

% Pilot transmission at one configuration
\draw[->,thick,densely dashed,orange!70]
  (fk.south) -- ++(0,-0.30) -| (\xL+0.05,\yPilot+0.25);
\draw[->,thick,densely dashed,orange!70]
  (fk.south) -- ++(0,-0.30) -| (\xM+1.35,\yPilot+0.25);

\node[font=\small\bfseries, anchor=west, align=left, text width=9.0cm]
  at (-0.2,\yPilot+\titleOffPilot)
  {At configuration $(k,i,j)$: simultaneous pilot transmission};

\node[pilotblock, fill=red!15]    (p1) at (\xL,\yPilot) {$\mathbf s_1^T$};
\node[pilotblock, fill=cyan!15]   (p2) at (\xL+1.6,\yPilot) {$\mathbf s_2^T$};
\node                                   at (\xL+2.8,\yPilot) {$\cdots$};
\node[pilotblock, fill=violet!15] (pU) at (\xL+4.2,\yPilot) {$\mathbf s_U^T$};

\node[rxblock] (rx) at (\xR+0.65,\yPilot) {$\mathbf Y[k,i,j]$};
\draw[->,very thick] (pU.east)
  -- node[above,font=\scriptsize,inner sep=1pt] {$+\,\mathbf N[k,i,j]$} (rx.west);
\node[font=\scriptsize] at (\xR+0.65,\yPilot-0.55) {$\in\mathbb C^{N\times P}$};

\draw[bracedn] (\xL-0.65,\yPilot-0.45) -- (\xM+1.15,\yPilot-0.45)
  node[midway,below=5pt,font=\scriptsize,align=center]
  {$U$ users transmit pilot sequences of length $P$};

% Right-side panel grid illustration
\begin{scope}[shift={(\rightX,\rightTopY)}, scale=\rightScale]
  \node[font=\small\bfseries] at (1.55,4.0) {Panel grid ($x$-$z$ plane)};
  \draw[step=0.55, gray!30, thin] (0,0) grid (3.3,3.3);
  \draw[->,thick] (-0.25,0) -- (3.8,0) node[right,font=\scriptsize] {$x$};
  \draw[->,thick] (0,-0.25) -- (0,3.8) node[above,font=\scriptsize] {$z$};

  \foreach \ii in {0,...,6}
    \foreach \jj in {0,...,6}
      \fill[gray!50] (\ii*0.55,\jj*0.55) circle (1.0pt);

  \fill[blue!18,rounded corners=1pt] (1.1,1.65) rectangle (1.65,2.2);
  \draw[blue!70,thick] (1.1,1.65) rectangle (1.65,2.2);
  \fill[blue!75] (1.375,1.925) circle (2.3pt);
  \node[font=\scriptsize, blue!80, right] at (1.78,1.93) {$(i,j)$};

  \draw[gray!65, dashed, thick, rounded corners=2pt]
    (0,0) -- (3.3,0) -- (3.3,0.55) -- (0,0.55) -- (0,1.1);
  \draw[bracedn] (0,-0.35) -- (3.3,-0.35)
    node[midway,below=4pt,font=\scriptsize] {$I$ steps};
  \node[font=\scriptsize, rotate=90] at (-0.72,1.65) {$J$ steps};
\end{scope}

% Right-side morphing sketch
\begin{scope}[shift={(\rightX-0.35,\rightBotY)}, scale=\rightScale]
  \node[font=\small\bfseries] at (2.25,2.1) {Morphing snapshot (state $k$)};
  \draw[thick] (0,0) -- (4.3,0);
  \node[font=\scriptsize,anchor=east] at (-0.08,0) {panel};

  \foreach \xx/\dy in {0.35/0.25,0.95/0.58,1.55/0.18,2.15/0.74,2.75/0.40,3.35/0.55,3.95/0.30} {
    \fill[orange!55] (\xx,0) circle (2pt);
    \draw[->,orange!85,thick] (\xx,0.05) -- (\xx,\dy);
    \fill[orange!90] (\xx,\dy) circle (2pt);
  }
  \draw[->,thick] (4.7,0) -- (4.7,1.0) node[right,font=\scriptsize] {$y$};
  \node[font=\scriptsize] at (2.15,-0.48) {$y_n[k]$: normal displacement};
\end{scope}

\end{tikzpicture}%
}
\caption{Twin-time-scale training frame. Slow loop over the $IJ$ planar positions, fast loop over the $K$ morphing states, and simultaneous pilot transmission at each configuration $(k,i,j)$.}
\label{fig:training_frame}
\end{figure*}

Figure~\ref{fig:training_frame} summarizes the frame hierarchy behind this construction. The panel grid view corresponds to the slow loop over planar translations, i.e., indices $(i,j)$ with cardinalities $(I,J)$, while the morphing snapshot represents one fast-loop state $k\in\{1,\dots,K\}$ at fixed $(i,j)$, so that each complete sweep produces $IJK$ configurations. At each configuration, users transmit pilot sequences of length $P$, yielding one received matrix $\mathbf Y[k,i,j]\in\bbC^{N\times P}$. This maps directly onto the modes of $\mathcal Y$ in Eq.~\eqref{eq:B_tensor_Y}: mode-$1$ ($N$) is the element index, mode-$2$ ($P$) is the pilot-symbol index within a block, mode-$3$ ($K$) is the fast morphing index, and modes-$4$/$5$ ($I,J$) are the slow planar-grid indices. Normalizing the resultant training time by the symbol duration $T_s$ yields the number of training resources
\begin{equation}
\frac{T_{\mathrm{train}}}{T_s}=IJKP.
\label{eq:B_train_time}
\end{equation}

\subsection{Mode-wise \glsentryshort{PARAFAC} Model}
For each user $u$, we rewrite the rank-$L_u$ channel tensor in Eq.~\eqref{eq:B_CP_user} as
\begin{equation}
\mathcal H_u=\mathcal G_u\times_1\mathbf A_u\times_2\mathbf B_u\times_3\mathbf C_u\times_4\mathbf D_u,
\label{eq:B_CP_user_weighted}
\end{equation}
where $\mathcal G_u\in\bbC^{L_u\times L_u\times L_u\times L_u}$ is a superdiagonal core tensor whose only nonzero entries are
\begin{equation}
[\mathcal G_u]_{r,r,r,r}=\beta_{u,r},\qquad r=1,\dots,L_u.
\label{eq:B_core_superdiag}
\end{equation}
Because of the superdiagonal core, this is equivalent to the weighted-factor \gls{CP} form
\begin{equation}
\mathcal H_u=\llbracket \widetilde{\mathbf A}_u,\mathbf B_u,\mathbf C_u,\mathbf D_u\rrbracket,
\qquad \widetilde{\mathbf A}_u\triangleq \mathbf A_u\,\mathrm{Diag}(\boldsymbol\beta_u),
\label{eq:B_CP_user_weighted_equiv}
\end{equation}
in which the gains are absorbed into the spatial factor. Following this convention, $\mathbf H_{u,(n)}$ is the mode-$n$ unfolding of $\mathcal H_u$. For mode ordering $(N,K,I,J)$, the unfoldings are
\begin{equation}
\mathbf H_{u,(1)}=\widetilde{\mathbf A}_u\left(\mathbf D_u\khatri\mathbf C_u\khatri\mathbf B_u\right)^T,
\label{eq:B_unfold_mode1}
\end{equation}
\begin{equation}
\mathbf H_{u,(2)}=\mathbf B_u\left(\mathbf D_u\khatri\mathbf C_u\khatri\widetilde{\mathbf A}_u\right)^T,
\label{eq:B_unfold_mode2}
\end{equation}
\begin{equation}
\mathbf H_{u,(3)}=\mathbf C_u\left(\mathbf D_u\khatri\mathbf B_u\khatri\widetilde{\mathbf A}_u\right)^T,
\label{eq:B_unfold_mode3}
\end{equation}
\begin{equation}
\mathbf H_{u,(4)}=\mathbf D_u\left(\mathbf C_u\khatri\mathbf B_u\khatri\widetilde{\mathbf A}_u\right)^T.
\label{eq:B_unfold_mode4}
\end{equation}
These four unfoldings are the mode-wise \gls{PARAFAC} expressions used for estimating the factors by alternating updates on each mode.

\subsection{Problem Formulation}
\label{sec:problem}

The paper addresses a pair of coupled problems. The first is the acquisition of the
per-user channel parameters from the training tensor, and the second is the
selection of the panel configuration and of the \gls{TPC} that the acquired
parameters make possible.

Let $\boldsymbol\Theta_u \triangleq \{(\mathbf q_{u,\ell},\beta_{u,\ell})\}_{\ell=1}^{L_u}$
collect the wavevectors and complex gains of user $u$, and let
$\mathcal H_u(\boldsymbol\Theta_u)$ denote the channel tensor that
Eq.~\eqref{eq:B_CP_user} generates from them. Given the received tensor
$\mathcal Y$ of Eq.~\eqref{eq:B_Y_tensor} and the known pilot matrix
$\mathbf S$, the estimation problem is formulated as:
\begin{equation}
\begin{aligned}
  \textrm{(P1)}\quad
  \min_{\{L_u,\boldsymbol\Theta_u\}_{u=1}^{U}} \;&
    \sum_{u=1}^{U}
    \big\lVert \widetilde{\mathcal H}_u
      - \mathcal H_u(\boldsymbol\Theta_u)\big\rVert_F^2 \\
  \textrm{s.t.}\quad
  & \lVert \mathbf q_{u,\ell}\rVert = k_0,\quad
    q^{(y)}_{u,\ell} \leq 0, \\
  & L_u \leq L_u^{\max}(N,K,I,J),
\end{aligned}
\label{eq:B_P1}
\end{equation}
where $\widetilde{\mathcal H}_u$ is the despread observation of
Eq.~\eqref{eq:C_ortho_pilots}. The first constraint places every path on the
visible hemisphere of the far-field manifold, and the second restricts the
model order to the identifiability range that Section~\ref{sec:uniqueness}
derives. Problem~(P1) is non-convex in $\boldsymbol\Theta_u$ and it is coupled
with the discrete unknown $L_u$, so it is solved in two stages: the model
order is first estimated by the \gls{MDL} front-end of Section~\ref{sec:mdl}, and
the continuous parameters are then recovered by either pipeline of
Section~\ref{sec:estimation}.

The second problem hinges on the solution of the first. Writing
$\hat{\mathbf H}[k,i,j] \triangleq
[\hat{\mathbf h}_1[k,i,j],\dots,\hat{\mathbf h}_U[k,i,j]]$ for the
multiuser channel predicted at configuration $(k,i,j)$ from
$\{\hat{\boldsymbol\Theta}_u\}$, the design problem becomes:
\begin{equation}
\begin{aligned}
  \textrm{(P2)}\quad
  \max_{(k,i,j),\,\mathbf W} \;&
    \sum_{u=1}^{U}\log_2\!\big(1+\mathrm{SINR}_u(\mathbf W,
      \hat{\mathbf H}[k,i,j])\big) \\
  \textrm{s.t.}\quad
  & \lVert \mathbf w_u \rVert = 1,\quad u = 1,\dots,U, \\
  & (k,i,j) \in \mathcal C,
\end{aligned}
\label{eq:B_P2}
\end{equation}
where $\mathcal C\triangleq\{1,\dots,K\}\times\{1,\dots,I\}\times\{1,\dots,J\}$
is the set of $KIJ$ reachable panel configurations. Problem~(P2) is a mixed
discrete-continuous program, since the configuration index is combinatorial,
while the \gls{TPC} is continuous. We do not attempt a
joint solution. Instead, Section~\ref{sec:beamforming} fixes the \gls{TPC} to
one of three standard linear rules, which reduces (P2) to an exhaustive
search over the $KIJ$ grid points, whose cost
Section~\ref{sec:results_precoder} measures.

Two features distinguish (P1) and (P2) from their counterparts in the
fixed-aperture literature. The feasible set of (P2) is created by the same
mechanical motion that generates the training tensor of (P1), so the diversity
that the \gls{TPC} exploits and the diversity that makes the channel
identifiable are one and the same. And because the solution of (P1) is
parametric, the objective of (P2) can be evaluated at configurations outside
the training grid at no pilot cost, which is what separates this formulation
from one built on a table of measured channel responses.

\section{Channel Estimation Paradigms}

\label{sec:estimation}

This section develops the channel estimation framework, whose overall flow is
stated compactly as Algorithm~\ref{alg:framework}. We first describe the multiuser separation step
via orthogonal pilots and the \gls{MDL} front-end that both estimators share.
We then present the two pipelines, the first one performing a \gls{CP}
decomposition via \gls{ALS} followed by paired \gls{ESPRIT} for parametric
wavevector extraction, while the second is a tensor spatial-smoothing
\gls{MUSIC} estimator that preserves the multilinear structure of the
\gls{GFIM} array manifold. Channel reconstruction at arbitrary panel
configurations and the uniqueness conditions that make it well-posed will
conclude the section.

\begin{algorithm}[t]
\caption{Twin-time-scale \gls{GFIM} estimation framework}
\label{alg:framework}
\begin{algorithmic}[1]
\STATE \textbf{Input:} received tensor $\mathcal Y\in\bbC^{N\times P\times K\times I\times J}$ of Eq.~\eqref{eq:B_Y_tensor}, pilot matrix $\mathbf S$ with $\mathbf S^H\mathbf S=P\mathbf I_U$.
\FOR{$u = 1, \dots, U$}
  \STATE Despread: $\widetilde{\mathcal H}_u \leftarrow \frac{1}{P}\mathcal Y\times_2\mathbf s_u^*$ (Eq.~\eqref{eq:C_ortho_pilots}).
  \STATE $\hat L_u \leftarrow \textsc{MDL}(\widetilde{\mathcal H}_u)$ (Algorithm~\ref{alg:mdl}).
  \STATE Estimate $\{\hat{\mathbf q}_{u,\ell},\hat\beta_{u,\ell}\}_{\ell=1}^{\hat L_u}$ by Algorithm~\ref{alg:als_esprit} or Algorithm~\ref{alg:tensor_music}.
\ENDFOR
\STATE \textbf{Output:} per-user parameters, from which the channel at any
  configuration follows by Eq.~\eqref{eq:C_recon}.
\end{algorithmic}
\end{algorithm}

\subsection{Multiuser Separation via Orthogonal Pilots}
\label{sec:ortho_pilots}

The received-signal tensor in Eq.~\eqref{eq:B_Y_tensor} mixes all $U$
users through the pilot matrix
$\mathbf S\in\bbC^{P\times U}$.
When the pilots satisfy
$\mathbf S^H\mathbf S=P\,\mathbf I_U$,
a matched-filter projection along the pilot mode separates the users.
For user~$u$, forming
\begin{equation}
  \widetilde{\mathcal H}_u
    = \frac{1}{P}\,\mathcal Y \times_2 \mathbf s_u^*
  \label{eq:C_ortho_pilots}
\end{equation}
yields a 4th-order tensor
$\widetilde{\mathcal H}_u\in\bbC^{N\times K\times I\times J}$
that is an unbiased estimate of $\mathcal H_u$ corrupted by filtered
noise.
This despreading step requires $P\geq U$ and reduces the 5-way problem
to $U$ independent 4-way problems, one per user.

\subsection{Automatic Rank Detection via \glsentryshort{MDL}}
\label{sec:mdl}

Both estimation pipelines require the tensor rank $L_u$, which equals the number of propagation paths for user~$u$. In practice, $L_u$ is unknown and must be inferred from the observed data. We adopt the \gls{MDL} criterion~\cite{rissanen1978mdl,wax1985mdl}, applied to the eigenvalue spectrum of the sample covariance of the largest-mode unfolding, the standard information-theoretic approach to model-order selection in array processing, which extends naturally to the multidimensional setting~\cite{dacosta2011modelorder}.

Let $\mathbf X = \widetilde{\mathbf H}_{u,(m)} \in \bbC^{M \times Q}$ denote the mode-$m$ unfolding of $\widetilde{\mathcal H}_u$, where $m = \arg\max_n \dim_n(\widetilde{\mathcal H}_u)$ is chosen as the mode with the largest dimension to maximize the number of available samples. Define $N_s = \max(M,Q)$ and $p = \min(M,Q)$, and let $\lambda_1 \geq \lambda_2 \geq \cdots \geq \lambda_p \geq 0$ denote the eigenvalues of the sample covariance matrix
\begin{equation}
  \hat{\mathbf C} =
  \begin{cases}
    \frac{1}{M}\mathbf X^H \mathbf X, & \text{if } M \geq Q,\\[4pt]
    \frac{1}{Q}\mathbf X \mathbf X^H, & \text{otherwise.}
  \end{cases}
  \label{eq:C_sample_cov}
\end{equation}

For candidate rank $r \in \{0, 1, \dots, R_{\max}\}$, where $R_{\max} < p$, we define both the geometric and arithmetic means of the $p - r$ smallest eigenvalues,
\begin{equation}
  \bar{\lambda}_r^{\mathrm{geo}}
    = \exp\!\left(\frac{1}{p-r}\sum_{i=r+1}^{p}\ln\lambda_i\right),
  \quad
  \bar{\lambda}_r^{\mathrm{ari}}
    = \frac{1}{p-r}\sum_{i=r+1}^{p}\lambda_i.
  \label{eq:C_mdl_means}
\end{equation}
The \gls{MDL} cost function balances the log-likelihood of the noise eigenvalues against a model-complexity penalty, formulated as:
\begin{equation}
  \mathrm{MDL}(r) =
    -N_s(p-r)\ln\frac{\bar{\lambda}_r^{\mathrm{geo}}}
                      {\bar{\lambda}_r^{\mathrm{ari}}}
    + \frac{1}{2}\,r(2p-r)\ln N_s.
  \label{eq:C_MDL}
\end{equation}
The first term quantifies the departure of the noise eigenvalues from a flat profile, the hallmark of white noise, while the second term penalizes model order. The rank estimate is represented by
\begin{equation}
  \hat{L}_u = \arg\min_{0 \leq r \leq R_{\max}}\;\mathrm{MDL}(r).
  \label{eq:C_MDL_select}
\end{equation}

The rationale for this criterion is the following. Under the hypothesis that the tensor has a rank of exactly $r$, the $p - r$ smallest eigenvalues of $\hat{\mathbf C}$ correspond to a noise-only subspace and should be approximately equal, so that their geometric-to-arithmetic mean ratio approaches unity. A rank over-estimate, $r > L_u$, reduces the log-likelihood term but incurs an excessive penalty, whereas a rank underestimate, $r < L_u$, retains signal eigenvalues in the noise set and inflates the log-likelihood. The \gls{MDL} criterion provides a consistent estimator of $L_u$ as $N_s \to \infty$ under mild regularity conditions.

In the context of the \gls{GFIM} tensor, the unfolding along the largest mode
effectively stacks all spatial snapshots from the morphing and translation
grid, which produces a rich sample set even when the individual mode
dimensions are modest. This is a structural advantage of the 4-way model,
because the effective sample size is set by the product $K I J$, or by $N$
when $N$ is smaller, and it is this sample size that governs the reliability
of the \gls{MDL} decision. Section~\ref{sec:results_ce} quantifies that
reliability directly, and Fig.~\ref{fig:rank_vs_snr} confirms the
one-sidedness that Eq.~\eqref{eq:C_MDL} predicts, in that the penalty term
suppresses over-estimation, while an under-estimate remains possible, whenever the
weakest path sinks into the noise floor. Algorithm~\ref{alg:mdl} states the
procedure.

\begin{algorithm}[t]
\caption{\gls{MDL} tensor-rank detection}
\label{alg:mdl}
\begin{algorithmic}[1]
\STATE \textbf{Input:} $\widetilde{\mathcal H}_u \in \bbC^{N \times K \times I \times J}$, maximum order $R_{\max}$.
\STATE $m \leftarrow \arg\max_n \dim_n(\widetilde{\mathcal H}_u)$; \, $\mathbf X \leftarrow \widetilde{\mathbf H}_{u,(m)} \in \bbC^{M\times Q}$.
\STATE $N_s \leftarrow \max(M,Q)$, \, $p \leftarrow \min(M,Q)$.
\STATE Form $\hat{\mathbf C}$ by Eq.~\eqref{eq:C_sample_cov} and take its eigenvalues $\lambda_1 \geq \cdots \geq \lambda_p$.
\FOR{$r = 0, 1, \dots, R_{\max}$}
  \STATE Evaluate $\bar\lambda_r^{\mathrm{geo}}$, $\bar\lambda_r^{\mathrm{ari}}$ (Eq.~\eqref{eq:C_mdl_means}) and $\mathrm{MDL}(r)$ (Eq.~\eqref{eq:C_MDL}).
\ENDFOR
\STATE \textbf{Output:} $\hat L_u = \arg\min_r \mathrm{MDL}(r)$ (Eq.~\eqref{eq:C_MDL_select}).
\end{algorithmic}
\end{algorithm}

\subsection{Per-User \glsentryshort{CP} Factorization via \glsentryshort{ALS} and Paired \glsentryshort{ESPRIT}}
\label{sec:als_esprit}

\begin{figure*}[t]
\centering
\begin{tikzpicture}[
  node distance=10mm and 14mm,
  >=Stealth,
  block/.style={draw, rectangle, align=center, minimum height=10mm,
                minimum width=28mm, font=\small},
  annot/.style={font=\scriptsize, text=black!70, align=center},
  line/.style={->, thick},
]

% --- Row 1: CP decomposition ---
\node[block] (input)
  {4-D tensor\\$\widetilde{\mathcal H}_u\!\in\!\mathbb C^{N\times K\times I\times J}$};

\node[block, right=of input] (gevd)
  {GEVD\\initialization};

\node[block, right=of gevd] (als)
  {ALS iterations\\Eqs.~\eqref{eq:C_ALS_update}--\eqref{eq:C_ALS_closedform}};

\node[block, right=of als] (factors)
  {CP factors\\$\hat{\widetilde{\mathbf A}}_u,\hat{\mathbf B}_u,\hat{\mathbf C}_{x,u},\hat{\mathbf C}_{z,u}$};

% --- Row 2: ESPRIT extraction ---
\node[block, below=14mm of factors] (esprit)
  {Paired ESPRIT\\per factor};

\node[block, left=of esprit] (wavevec)
  {Wavevector\\assembly\\$\hat{\mathbf q}_{u,\ell}$};

\node[block, left=of wavevec] (aoa)
  {Angle\\recovery\\$(\hat\theta,\hat\phi)$};

\node[block, left=of aoa] (gains)
  {LS gain\\extraction\\$\hat{\boldsymbol\beta}_u$};

% --- Arrows ---
\draw[line] (input) -- (gevd);
\draw[line] (gevd) -- (als);
\draw[line] (als) -- (factors);
\draw[line] (factors) -- (esprit);
\draw[line] (esprit) -- (wavevec);
\draw[line] (wavevec) -- (aoa);
\draw[line] (aoa) -- (gains);

% --- Annotations ---
\node[annot, above=2mm of gevd]
  {Generalized EVD\\rank-$L_u$ init};
\node[annot, above=2mm of als]
  {Cyclic mode-wise\\LS updates};
\node[annot, below=2mm of esprit]
  {$\hat{\mathbf F}_{1:}^H\hat{\mathbf F}_{2:}\to z_\ell\to\hat\mu_\ell$\\
   $\hat{\mathbf B}_u\!\to\!\hat q_y$,\;
   $\hat{\mathbf C}_{x,u}\!\to\!\hat q_x$,\;
   $\hat{\mathbf C}_{z,u}\!\to\!\hat q_z$};
\node[annot, below=2mm of wavevec]
  {Eq.~\eqref{eq:C_wavevector_aoa}};
\node[annot, below=2mm of gains]
  {$\hat{\boldsymbol\beta}=\boldsymbol\Phi^\dagger\mathrm{vec}(\widetilde{\mathcal H}_u)$};

\end{tikzpicture}
\caption{Block diagram of the \gls{ALS}\,+\,\gls{ESPRIT} pipeline of Algorithm~\ref{alg:als_esprit}.}
\label{fig:als_esprit_diagram}
\end{figure*}

After pilot separation, the per-user tensor
$\widetilde{\mathcal H}_u$ inherits the rank-$L_u$ \gls{CP} structure
of Eq.~\eqref{eq:B_CP_user}.
The factor matrices
$\widetilde{\mathbf A}_u$, $\mathbf B_u$,
$\mathbf C_u$, $\mathbf D_u$
are estimated by the \gls{ALS} algorithm~\cite{kolda2009tensor,sidiropoulos2017tensor}, which cyclically solves
least-squares subproblems for each mode while fixing the remaining
three.
Specifically, the mode-$n$ update at iteration~$t$ solves
\begin{equation}
  \mathbf F^{(n)} \leftarrow
    \arg\min_{\mathbf F}\,
    \left\|
      \widetilde{\mathbf H}_{u,(n)}
      -\mathbf F\,\mathbf Z_{(n)}^T
    \right\|_F^2,
  \label{eq:C_ALS_update}
\end{equation}
where $\mathbf Z_{(n)}$ is the Khatri--Rao product of the remaining
factor matrices in reverse mode order, as given by
Eqs.~\eqref{eq:B_unfold_mode1}--\eqref{eq:B_unfold_mode4}.
The closed-form solution is
\begin{equation}
  \mathbf F^{(n)}
    = \widetilde{\mathbf H}_{u,(n)}\,
      \mathbf Z_{(n)}
      \left(\mathbf Z_{(n)}^T\mathbf Z_{(n)}\right)^{-1}.
  \label{eq:C_ALS_closedform}
\end{equation}
The iterations alternate over modes $n=1,2,3,4$ and repeat until
convergence, typically monitored by the relative change in the
reconstruction error
$\|\widetilde{\mathcal H}_u
  -\llbracket \widetilde{\mathbf A}_u,\mathbf B_u,
             \mathbf C_u,\mathbf D_u\rrbracket\|_F
 /\|\widetilde{\mathcal H}_u\|_F$.

Each subproblem in Eq.~\eqref{eq:C_ALS_update} is an unconstrained linear
least-squares problem whose exact minimizer is Eq.~\eqref{eq:C_ALS_closedform},
so every mode update is non-increasing in the cost and the sequence of
reconstruction errors is monotone and bounded below. This guarantees
convergence of the cost, though not of the iterates, and the convergence
properties of \gls{ALS} for the \gls{CP} model are well documented. Comon
\textit{et al.}~\cite{comon2009als} survey the behavior of the iteration and
the swamps that arise when two factor columns become collinear, and
Uschmajew~\cite{uschmajew2012als} establishes local linear convergence to a
limit point under the condition that the Hessian of the cost restricted to the
non-degenerate directions be positive definite at that point, which for our
model is implied by the Kruskal condition of
Corollary~\ref{cor:dims}. We therefore do not reproduce a convergence analysis
here and instead take the standard guarantees as given, noting only that they
are local and that the limit point may be a spurious stationary point.

The initialization is what governs which limit point is reached in practice.
We initialize the \gls{ALS} procedure by a \gls{GEVD}, which provides a
closed-form and non-iterative starting point that is more robust to local
minima than a random initialization, because it exploits the simultaneous
diagonalization structure of the mode-$n$ slice matrices to produce an initial
set of factors already close to the global optimum. Together with the $T_r$
restarts of Algorithm~\ref{alg:als_esprit}, of which the one attaining the
smallest reconstruction error is retained, this makes the spurious limit points
rare rather than absent. Section~\ref{sec:results_ce} will show that their
residual frequency is what sets the high-\gls{SNR} floor of
Fig.~\ref{fig:nmse_vs_snr}, rather than any limitation of the \gls{CP} model
itself.

Once the \gls{CP} factors have been estimated, the wavevector components are recovered from the Vandermonde structure of each factor matrix via paired \gls{ESPRIT}. Consider a generic factor $\hat{\mathbf F} \in \bbC^{M \times L_u}$ whose columns satisfy $[\hat{\mathbf F}]_{m,\ell} \approx \exp(-\mathrm{j}\,\mu_\ell\, m\, \delta)$ for a uniform step $\delta$ and unknown spatial frequency $\mu_\ell$. Define the selection matrices $\hat{\mathbf F}_1 = \hat{\mathbf F}(1{:}M{-}1,:)$ and $\hat{\mathbf F}_2 = \hat{\mathbf F}(2{:}M,:)$, which extract all but the last and first rows, respectively. The shift-invariance property of the Vandermonde structure implies $\hat{\mathbf F}_2 = \hat{\mathbf F}_1 \, \mathrm{Diag}(\mathbf z)$ with $z_\ell = \exp(-\mathrm{j}\,\mu_\ell\,\delta)$. The spatial frequency is then recovered column-wise as
\begin{equation}
  \hat{z}_\ell
    = \frac{\hat{\mathbf f}_{1,\ell}^H \hat{\mathbf f}_{2,\ell}}
           {\hat{\mathbf f}_{1,\ell}^H \hat{\mathbf f}_{1,\ell}},
  \qquad
  \hat{\mu}_\ell = -\frac{\angle \hat{z}_\ell}{\delta},
  \label{eq:C_esprit_freq}
\end{equation}
where $\hat{\mathbf f}_{1,\ell}$ and $\hat{\mathbf f}_{2,\ell}$ denote the $\ell$-th columns of $\hat{\mathbf F}_1$ and $\hat{\mathbf F}_2$.

The three non-spatial factor matrices directly yield the wavevector
components. Applying Eq.~\eqref{eq:C_esprit_freq} to $\hat{\mathbf B}_u$ with the
morphing spacing $\delta_y$ returns $\hat q^{(y)}_{u,\ell}$, applying it to
$\hat{\mathbf C}_u$ with the translation step $\delta_x$ returns
$\hat q^{(x)}_{u,\ell}$, and applying it to $\hat{\mathbf D}_u$ with the
translation step $\delta_z$ returns $\hat q^{(z)}_{u,\ell}$. The estimates are
paired across the three modes by the column order of the \gls{CP} factors,
which the decomposition fixes up to one common permutation.
The \gls{AoA} pair $(\hat\theta_{u,\ell}, \hat\phi_{u,\ell})$ is then recovered from the assembled wavevector $\hat{\mathbf q}_{u,\ell} = [\hat{q}^{(x)}_{u,\ell},\, \hat{q}^{(y)}_{u,\ell},\, \hat{q}^{(z)}_{u,\ell}]^T$ via
\begin{equation}
  \hat\theta_{u,\ell} = \arccos\!\left(\frac{\hat{q}^{(z)}_{u,\ell}}{k_0}\right),
  \qquad
  \hat\phi_{u,\ell} = \arctan\!\left(\frac{\hat{q}^{(y)}_{u,\ell}}{\hat{q}^{(x)}_{u,\ell}}\right),
  \label{eq:C_aoa_recovery}
\end{equation}
where $k_0 = 2\pi/\lambda$.

When only a single morphing state is available during training, the factor
$\hat{\mathbf B}_u$ collapses to a scalar per path and
$\hat{q}^{(y)}_{u,\ell}$ is not directly observable. In this case we invoke the
far-field norm constraint $\|\mathbf q_{u,\ell}\| = k_0$, which together with
the visible-hemisphere convention $q^{(y)}_{u,\ell}\leq 0$ of
Section~\ref{sec:problem} determines the sign and gives
\begin{equation}
  \hat{q}^{(y)}_{u,\ell}
    = -\sqrt{\max\!\big(k_0^2 - (\hat{q}^{(x)}_{u,\ell})^2 - (\hat{q}^{(z)}_{u,\ell})^2,\;0\big)}.
  \label{eq:C_qy_norm}
\end{equation}
This closure is available only because the panel is planar and the sources lie
in one half-space. It is the same relationship that
Remark~\ref{rem:degenerate} invokes to rule out degenerate geometries, and
Section~\ref{sec:results_precoder} shows that it is also what caps the
beamforming value of the morphing mode, since a component that the in-plane
ones already determine cannot separate two users that translation fails to separate.

After determining the wavevectors, we extract the complex path gains $\boldsymbol\beta_u = [\beta_{u,1}, \dots, \beta_{u,L_u}]^T$ via a \gls{LS} fit to the observed tensor. Let us define the Khatri--Rao steering matrix $\boldsymbol\Phi \in \bbC^{NKIJ \times L_u}$ whose $\ell$-th column is the vectorized rank-one steering tensor, as follows:
\begin{equation}
  [\boldsymbol\Phi]_{:,\ell}
    = \vecop\!\big(\hat{\mathbf a}_{u,\ell} \circ \hat{\mathbf b}_{u,\ell}
                    \circ \hat{\mathbf c}_{x,u,\ell} \circ \hat{\mathbf c}_{z,u,\ell}\big),
  \label{eq:C_steering_kr}
\end{equation}
where each vector is evaluated at the estimated wavevector. The \gls{LS} estimate is
\begin{equation}
  \hat{\boldsymbol\beta}_u
    = \boldsymbol\Phi^\dagger\, \vecop(\widetilde{\mathcal H}_u),
  \label{eq:C_ls_gains}
\end{equation}
with $\boldsymbol\Phi^\dagger = (\boldsymbol\Phi^H \boldsymbol\Phi)^{-1} \boldsymbol\Phi^H$ denoting the Moore--Penrose pseudo-inverse.

Because Eq.~\eqref{eq:C_esprit_freq} recovers $\mu_\ell$ from the phase of
$\hat z_\ell$, the mode step must satisfy $|\mu_\ell|\delta < \pi$ for the
estimate to be unambiguous, which for the translation modes is the familiar
half-wavelength limit $\delta_x,\delta_z < \lambda/2$ and for the morphing mode
reads $\delta_y < \lambda/2$. Section~\ref{sec:results_ce} will return to this
bound, since Fig.~\ref{fig:nmse_vs_step} shows that the fit degrades only
mildly at the limit, while the extrapolation of Eq.~\eqref{eq:C_recon}
degrades sharply, which is exactly what an aliased phase should do.

The complete \gls{ALS}\,+\,\gls{ESPRIT} pipeline is summarized in
Algorithm~\ref{alg:als_esprit} and illustrated in
Fig.~\ref{fig:als_esprit_diagram}.

\begin{algorithm}[t]
\caption{\glsentryshort{ALS}\,+\,\glsentryshort{ESPRIT} channel estimation}
\label{alg:als_esprit}
\begin{algorithmic}[1]
\STATE \textbf{Input:} $\widetilde{\mathcal H}_u \in \bbC^{N \times K \times I \times J}$; model order $L_u$, supplied or taken from Algorithm~\ref{alg:mdl}; number of restarts $T_r$; tolerance $\epsilon$.
\FOR{$t_r = 1, \dots, T_r$}
  \STATE Initialize $\widetilde{\mathbf A}_u$, $\mathbf B_u$, $\mathbf C_u$, $\mathbf D_u$ by \gls{GEVD}.
  \REPEAT
    \FOR{$n = 1, 2, 3, 4$}
      \STATE $\mathbf Z_{(n)} \leftarrow$ Khatri--Rao product of the other three factors, in the order of Eqs.~\eqref{eq:B_unfold_mode1}--\eqref{eq:B_unfold_mode4}.
      \STATE $\mathbf F^{(n)} \leftarrow \widetilde{\mathbf H}_{u,(n)}\,\mathbf Z_{(n)}(\mathbf Z_{(n)}^T \mathbf Z_{(n)})^{-1}$ \quad (Eq.~\eqref{eq:C_ALS_closedform})
    \ENDFOR
  \UNTIL{the relative reconstruction error changes by less than $\epsilon$.}
\ENDFOR
\STATE Keep the restart of smallest reconstruction error.
\STATE Apply paired \gls{ESPRIT} (Eq.~\eqref{eq:C_esprit_freq}) to $\hat{\mathbf B}_u$, $\hat{\mathbf C}_u$, $\hat{\mathbf D}_u$ with the steps $\delta_y$, $\delta_x$, $\delta_z$, which gives $\hat{q}^{(y)}_{u,\ell}$, $\hat{q}^{(x)}_{u,\ell}$, $\hat{q}^{(z)}_{u,\ell}$.
\IF{$K = 1$}
  \STATE Recover $\hat{q}^{(y)}_{u,\ell}$ from the norm constraint of Eq.~\eqref{eq:C_qy_norm}.
\ENDIF
\STATE Compute $(\hat\theta_{u,\ell},\hat\phi_{u,\ell})$ from Eq.~\eqref{eq:C_aoa_recovery}.
\STATE Extract $\hat{\boldsymbol\beta}_u$ from Eq.~\eqref{eq:C_ls_gains}.
\STATE \textbf{Output:} $\{\hat{\mathbf q}_{u,\ell},\, \hat\beta_{u,\ell},\, \hat\theta_{u,\ell},\, \hat\phi_{u,\ell}\}_{\ell=1}^{L_u}$.
\end{algorithmic}
\end{algorithm}

\subsection{Tensor Spatial-Smoothing \glsentryshort{MUSIC}}
\label{sec:tensor_music}

The second estimation pipeline replaces the parametric \gls{CP} decomposition
by a subspace approach that preserves the multilinear structure of the
\gls{GFIM} array manifold through multi-mode spatial smoothing, as
Fig.~\ref{fig:tensor_music_diagram} illustrates. The method extends classical
\gls{MUSIC} to the tensor domain by constructing a covariance matrix from
overlapping sub-tensors extracted along every mode simultaneously.

\begin{figure*}[t]
\centering
\begin{tikzpicture}[
  node distance=9mm and 7mm,
  >=Stealth,
  block/.style={draw, rectangle, align=center, minimum height=10mm,
                minimum width=25mm, font=\small},
  annot/.style={font=\scriptsize, text=black!70, align=center},
  line/.style={->, thick},
]

% --- Row 1: Main pipeline ---
\node[block] (input)
  {4-D tensor\\$\widetilde{\mathcal H}_u\!\in\!\mathbb C^{N\times K\times I\times J}$};

\node[block, right=of input] (reshape)
  {Reshape to 5-D\\$(N_x,N_z,K,I,J)$};

\node[block, right=of reshape] (smooth)
  {Multi-mode\\spatial smoothing};

\node[block, right=of smooth] (cov)
  {Covariance\\$\hat{\mathbf R}$\\+ FB averaging};

\node[block, right=of cov] (evd)
  {EVD\\$\hat{\mathbf R}=\mathbf U\boldsymbol\Lambda\mathbf U^H$};

% --- Row 2: MUSIC spectrum ---
\node[block, below=12mm of evd] (noise)
  {Noise subspace\\$\mathbf U_n$};

\node[block, left=of noise] (steer)
  {Kronecker\\steering grid\\$\mathbf a(\theta,\phi)$};

\node[block, left=of steer] (spec)
  {MUSIC spectrum\\$P(\theta,\phi)$};

\node[block, left=of spec] (peaks)
  {Peak detection\\$L$ largest peaks};

\node[block, left=of peaks] (output)
  {$\{\hat\theta_{u,\ell},\hat\phi_{u,\ell}\}_{\ell=1}^{L_u}$};

% --- Arrows ---
\draw[line] (input) -- (reshape);
\draw[line] (reshape) -- (smooth);
\draw[line] (smooth) -- (cov);
\draw[line] (cov) -- (evd);
\draw[line] (evd) -- (noise);
\draw[line] (noise) -- (steer);
\draw[line] (steer) -- (spec);
\draw[line] (spec) -- (peaks);
\draw[line] (peaks) -- (output);

% --- Annotations ---
\node[annot, above=2mm of reshape]
  {UPA structure\\$N\!=\!N_x N_z$};
\node[annot, above=2mm of smooth]
  {Subarrays\\$(L_{n_x},L_{n_z},L_k,L_i,L_j)$};
\node[annot, above=2mm of cov]
  {$\hat{\mathbf R}_{\mathrm{fb}} = \tfrac{1}{2}(\hat{\mathbf R}+\mathbf J\hat{\mathbf R}^*\mathbf J)$};
\node[annot, below=2mm of steer]
  {$\mathbf a = \mathbf a_{n_x}\!\otimes\!\mathbf a_{n_z}\!\otimes\!\mathbf b_k\!\otimes\!\mathbf c_i\!\otimes\!\mathbf d_j$};
\node[annot, below=2mm of spec]
  {$P(\theta,\phi)=\frac{1}{\|\mathbf U_n^H\mathbf a(\theta,\phi)\|^2}$};

\end{tikzpicture}
\caption{Block diagram of the tensor spatial-smoothing \gls{MUSIC} pipeline of Algorithm~\ref{alg:tensor_music}.}
\label{fig:tensor_music_diagram}
\end{figure*}

The \gls{GFIM} employs a \gls{UPA} having $N = N_x N_z$ elements, where $N_x$ and $N_z$ denote the number of elements along the $x$- and $z$-axes, respectively. To exploit this two-dimensional spatial regularity, the 4-D per-user tensor $\widetilde{\mathcal H}_u \in \bbC^{N \times K \times I \times J}$ is reshaped into a 5-D tensor,
\begin{equation}
  \mathcal H_u^{(5)} \in \bbC^{N_x \times N_z \times K \times I \times J},
  \label{eq:C_reshape_5d}
\end{equation}
where the element mode $N$ is factored into its \gls{UPA} components. This restructuring enables independent spatial smoothing along the $x$- and
$z$-array dimensions, and thereby increases the effective number of snapshots
available for covariance estimation. Fig.~\ref{fig:spatial_smoothing}
illustrates the three steps that follow.

\begin{figure*}[t]
\centering
\resizebox{\textwidth}{!}{%
\begin{tikzpicture}[
  >=Stealth,
  every node/.style={font=\small},
  annot/.style={font=\scriptsize, text=black!60},
  arrowlabel/.style={font=\scriptsize, midway, above},
  stephead/.style={font=\footnotesize\bfseries},
]

% ===== Step headers (one row, above every drawing) =====
\node[stephead] at (3.7, 4.4) {Step 1: Reshape};
\node[stephead] at (11.0, 4.4) {Step 2: Multi-mode sliding};
\node[stephead] at (17.45, 4.4) {Step 3: Covariance};

% ===== Step 1: 4-D tensor =====
\begin{scope}
  \fill[blue!6] (0.6,0.6) -- (2.6,0.6) -- (2.6,3.2) -- (0.6,3.2) -- cycle;
  \draw[thick, gray!60] (0.6,0.6) -- (2.6,0.6) -- (2.6,3.2) -- (0.6,3.2) -- cycle;
  \foreach \p in {(0,0), (2.0,0), (2.0,2.6), (0,2.6)}
    \draw[thick, gray!50] \p -- ++(0.6,0.6);
  \fill[blue!12] (0,0) rectangle (2.0,2.6);
  \draw[thick] (0,0) rectangle (2.0,2.6);
  \node[left] at (-0.05,1.3) {\scriptsize $N$};
  \node[below] at (1.0,-0.05) {\scriptsize $K\!\times\!I\!\times\!J$};
  \node at (1.0,-0.8) {\scriptsize $\widetilde{\mathcal H}_u$};
\end{scope}

\draw[->, very thick] (2.8,1.3) -- node[arrowlabel] {$N\!=\!N_x N_z$} (3.65,1.3);

% ===== Step 1: 5-D tensor (stacked slabs) =====
\begin{scope}[xshift=4.6cm]
  \foreach \d in {0.75,0.5,0.25} {
    \fill[blue!8] (\d, \d) rectangle (\d+2.0, \d+2.6);
    \draw[thick, gray!45] (\d, \d) rectangle (\d+2.0, \d+2.6);
  }
  \fill[blue!12] (0,0) rectangle (2.0,2.6);
  \draw[thick] (0,0) rectangle (2.0,2.6);
  \foreach \x in {0.5, 1.0, 1.5}  \draw[gray!30, thin] (\x,0) -- (\x,2.6);
  \foreach \y in {0.65, 1.3, 1.95} \draw[gray!30, thin] (0,\y) -- (2.0,\y);
  \node[below] at (1.0,-0.05) {\scriptsize $N_x$};
  \node[left] at (-0.05,1.3) {\scriptsize $N_z$};
  \draw[->, gray] (2.75,3.35) -- (3.25,3.85) node[right, font=\scriptsize] {$K,I,J$};
  \node at (1.0,-0.8) {\scriptsize $\mathcal H_u^{(5)}$};
\end{scope}

% ===== Step 2: sliding windows =====
\begin{scope}[xshift=8.9cm]
  \fill[blue!8] (0,0) rectangle (2.8,2.6);
  \draw[thick] (0,0) rectangle (2.8,2.6);
  \foreach \x in {0.4, 0.8, 1.2, 1.6, 2.0, 2.4} \draw[gray!25, thin] (\x,0) -- (\x,2.6);
  \foreach \y in {0.433, 0.867, 1.3, 1.733, 2.167} \draw[gray!25, thin] (0,\y) -- (2.8,\y);

  % Window 1, window 2 (shifted along n_x), window 3 (shifted along n_z)
  \fill[red!15, opacity=0.7] (0.05,0.95) rectangle (1.25,2.55);
  \fill[orange!15, opacity=0.6] (0.45,0.95) rectangle (1.65,2.55);
  \fill[violet!15, opacity=0.5] (0.05,0.05) rectangle (1.25,1.65);
  \draw[thick, red!70!black, dashed] (0.05,0.95) rectangle (1.25,2.55);
  \draw[thick, orange!70!black, dotted] (0.45,0.95) rectangle (1.65,2.55);
  \draw[thick, violet!70!black, dashdotted] (0.05,0.05) rectangle (1.25,1.65);
  % Labels sit where only one window is present
  \node[red!70!black, font=\scriptsize] at (0.25,2.15) {$\mathbf x_1$};
  \node[orange!70!black, font=\scriptsize] at (1.45,2.15) {$\mathbf x_2$};
  \node[violet!70!black, font=\scriptsize] at (0.65,0.5) {$\mathbf x_3$};

  % Sliding directions, drawn outside the face
  \draw[->, thick, red!60!black] (0.3,2.8) -- (1.1,2.8)
    node[right, font=\scriptsize] {slide $n_x$};
  \draw[->, thick, red!60!black] (-0.25,2.3) -- (-0.25,1.5);
  \node[left, font=\scriptsize, red!60!black, align=right] at (-0.3,1.9) {slide\\ $n_z$};

  \node[below] at (1.4,-0.05) {\scriptsize $N_x$};
  \node[left] at (-0.3,0.8) {\scriptsize $N_z$};

  % Depth: the window also slides along the three configuration modes
  \draw[->, gray] (2.8,2.6) -- (3.3,3.1);
  \node[right, font=\scriptsize, gray, align=left] at (3.35,3.1) {also slide\\ along $k,i,j$};
\end{scope}

% ===== Step 3: outer products averaged into the covariance =====
\begin{scope}[xshift=14.4cm]
  \fill[red!10] (0,0) rectangle (0.5,2.6);
  \draw[thick, red!70!black] (0,0) rectangle (0.5,2.6);
  \node[font=\scriptsize] at (0.25,1.3) {$\mathbf x_s$};
  \node[below] at (0.25,-0.05) {\scriptsize $M\!\times\!1$};

  \fill[red!10] (0.8,1.05) rectangle (2.6,1.55);
  \draw[thick, red!70!black] (0.8,1.05) rectangle (2.6,1.55);
  \node[font=\scriptsize] at (1.7,1.3) {$\mathbf x_s^H$};
  \node[below] at (1.7,0.95) {\scriptsize $1\!\times\!M$};

  \draw[->, very thick] (2.85,1.3) -- node[arrowlabel] {$\frac{1}{N_{\mathrm{sub}}}\sum_s$} (3.65,1.3);

  \fill[orange!10] (3.9,0.2) rectangle (6.1,2.4);
  \draw[thick] (3.9,0.2) rectangle (6.1,2.4);
  \node[font=\scriptsize] at (5.0,1.3) {$\hat{\mathbf R}_{\mathrm{fb}}$};
  \node[below] at (5.0,0.15) {\scriptsize $M\!\times\!M$};
\end{scope}

% ===== Bottom annotation =====
\node[annot] at (10.25,-1.6)
  {$M = L_{n_x} L_{n_z} L_k L_i L_j, \qquad
   N_{\mathrm{sub}} = (N_x\!-\!L_{n_x}\!+\!1)(N_z\!-\!L_{n_z}\!+\!1)(K\!-\!L_k\!+\!1)(I\!-\!L_i\!+\!1)(J\!-\!L_j\!+\!1)$};

\end{tikzpicture}%
}
\caption{Multi-mode spatial smoothing of the 5-D tensor of Eq.~\eqref{eq:C_reshape_5d} into the sample covariance of Eq.~\eqref{eq:C_cov_smooth}.}
\label{fig:spatial_smoothing}
\end{figure*}

Let us now define the subarray sizes $(L_{n_x}, L_{n_z}, L_k, L_i, L_j)$, one per mode, each chosen as approximately half the corresponding mode dimension. The 5-D tensor is partitioned into overlapping sub-tensors by sliding a window of size $(L_{n_x}, L_{n_z}, L_k, L_i, L_j)$ independently along each mode, which is the multi-mode counterpart of classical spatial smoothing~\cite{shan1985smoothing}. Let $M = L_{n_x} L_{n_z} L_k L_i L_j$ denote the vectorized sub-tensor length. For each valid set of offsets $(s_{n_x}, s_{n_z}, s_k, s_i, s_j)$, the sub-tensor
\begin{equation}
  \mathbf x_s = \vecop\!\Big(
    \mathcal H_u^{(5)}(s_{n_x}{:}s_{n_x}{+}L_{n_x}{-}1,\ldots,s_j{:}s_j{+}L_j{-}1)
  \Big) \in \bbC^{M}
  \label{eq:C_subtensor}
\end{equation}
is extracted and contributes to the sample covariance matrix,
\begin{equation}
  \hat{\mathbf R}
    = \frac{1}{N_{\mathrm{sub}}}
      \sum_{s \in \mathcal{S}} \mathbf x_s \mathbf x_s^H
    \in \bbC^{M \times M},
  \label{eq:C_cov_smooth}
\end{equation}
where $\mathcal{S}$ is the set of all valid offset tuples and $N_{\mathrm{sub}} = |\mathcal{S}| = \prod_{d} (D_d - L_d + 1)$, with $D_d$ denoting the $d$-th mode dimension.

The number of independent sub-tensors grows multiplicatively with the number
of modes, which is a central advantage of the tensor formulation. In the
5-mode \gls{GFIM} setting even modest per-mode dimensions yield a large
effective sample size, so that the conditioning of $\hat{\mathbf R}$ improves
substantially relative to conventional vector-based spatial smoothing. The
price is aperture, because each mode is smoothed by shortening it, and
Section~\ref{sec:results_ce} shows this trade in operation, in that the
resolution of tensor \gls{MUSIC} falls behind \gls{ESPRIT} once the path
count approaches the identifiability ceiling.

To further improve the rank of $\hat{\mathbf R}$ and reduce the effective noise floor, forward--backward averaging~\cite{pillai1989fb} is applied,
\begin{equation}
  \hat{\mathbf R}_{\mathrm{fb}}
    = \frac{1}{2}\big(\hat{\mathbf R} + \mathbf J\, \hat{\mathbf R}^* \mathbf J\big),
  \label{eq:C_fb_avg}
\end{equation}
where $\mathbf J \in \bbR^{M \times M}$ is the exchange (anti-diagonal identity) matrix. This operation exploits the conjugate symmetry of the steering manifold under uniform spacing and effectively doubles the number of snapshots.

The eigen-decomposition of $\hat{\mathbf R}_{\mathrm{fb}} = \mathbf U \boldsymbol\Lambda \mathbf U^H$ partitions the eigenvectors into a signal subspace $\mathbf U_s \in \bbC^{M \times L_u}$ (the $L_u$ eigenvectors associated with the largest eigenvalues) and a noise subspace $\mathbf U_n \in \bbC^{M \times (M - L_u)}$. Since the signal steering vectors are orthogonal to the noise subspace, the \gls{MUSIC} pseudo-spectrum is defined as
\begin{equation}
  P_{\mathrm{MUSIC}}(\theta, \phi)
    = \frac{1}{\|\mathbf U_n^H \mathbf a(\theta, \phi)\|^2},
  \label{eq:C_music_spectrum}
\end{equation}
where $\mathbf a(\theta, \phi) \in \bbC^{M}$ is the Kronecker steering vector for the sub-tensor geometry. For the 5-D structure, this vector preserves the separable \gls{GFIM} manifold,
\begin{equation}
  \mathbf a(\theta, \phi)
    = \mathbf a_{n_x}(q_x) \kron \mathbf a_{n_z}(q_z) \kron
      \mathbf b_k(q_y) \kron \mathbf c_i(q_x) \kron \mathbf d_j(q_z),
  \label{eq:C_kron_steer_music}
\end{equation}
where $q_x = k_0 \sin\theta\cos\phi$, $q_y = k_0 \sin\theta\sin\phi$, $q_z = k_0 \cos\theta$, and each sub-vector is a Vandermonde segment having a length equal to the corresponding subarray size. Specifically, $[\mathbf a_{n_x}(q_x)]_m = \exp(-\mathrm{j}\, q_x\, m\, d_{\mathrm{upa}})$ for $m = 0, \dots, L_{n_x}-1$, where $d_{\mathrm{upa}}$ is the \gls{UPA} element spacing, and the remaining sub-vectors are built in the same way with the physical step of their mode, namely $d_{\mathrm{upa}}$ again for $\mathbf a_{n_z}$, $\delta_y$ for $\mathbf b_k$, $\delta_x$ for $\mathbf c_i$ and $\delta_z$ for $\mathbf d_j$.

The pseudo-spectrum is evaluated on a two-dimensional grid of $G \times G$
candidate angles $(\theta, \phi)$, and the $L_u$ highest peaks, identified by a
local-maximum filter with a guard band, yield the \gls{AoA} estimates
$\{(\hat\theta_{u,\ell}, \hat\phi_{u,\ell})\}_{\ell=1}^{L_u}$. The wavevectors
$\hat{\mathbf q}_{u,\ell}$ then follow from Eq.~\eqref{eq:C_wavevector_aoa} and
the gains from the same \gls{LS} fit of Eq.~\eqref{eq:C_ls_gains} used by the
first pipeline, so that both estimators deliver the identical parameter set to
the reconstruction of Eq.~\eqref{eq:C_recon}. The angular grid caps the
attainable resolution at roughly $180^\circ/G$ per axis, which is the origin
of the grid-limited behavior reported in Section~\ref{sec:results_ce}.

The complete pipeline is summarized in Algorithm~\ref{alg:tensor_music}.

\begin{algorithm}[t]
\caption{Tensor spatial-smoothing \glsentryshort{MUSIC}}
\label{alg:tensor_music}
\begin{algorithmic}[1]
\STATE \textbf{Input:} $\widetilde{\mathcal H}_u \in \bbC^{N \times K \times I \times J}$; model order $L_u$, supplied or taken from Algorithm~\ref{alg:mdl}; grid density $G$.
\STATE Reshape to 5-D: $\mathcal H_u^{(5)} \in \bbC^{N_x \times N_z \times K \times I \times J}$ (Eq.~\eqref{eq:C_reshape_5d}).
\STATE Choose subarray sizes $(L_{n_x}, L_{n_z}, L_k, L_i, L_j)$, each about half its mode dimension, and set $M = L_{n_x}L_{n_z}L_kL_iL_j$.
\FOR{each valid offset tuple $s \in \mathcal S$}
  \STATE Extract and vectorize the sub-tensor $\mathbf x_s$ (Eq.~\eqref{eq:C_subtensor}).
\ENDFOR
\STATE $\hat{\mathbf R} \leftarrow \frac{1}{N_{\mathrm{sub}}}\sum_{s\in\mathcal S}\mathbf x_s\mathbf x_s^H$ (Eq.~\eqref{eq:C_cov_smooth}).
\STATE Apply forward--backward averaging, $\hat{\mathbf R}_{\mathrm{fb}} = \frac{1}{2}(\hat{\mathbf R} + \mathbf J \hat{\mathbf R}^* \mathbf J)$ (Eq.~\eqref{eq:C_fb_avg}).
\STATE Eigen-decompose $\hat{\mathbf R}_{\mathrm{fb}}$ and collect the $M - L_u$ smallest eigenvectors into $\mathbf U_n$.
\STATE Evaluate $P_{\mathrm{MUSIC}}(\theta, \phi)$ on the $G \times G$ grid, with the Kronecker steering vector of Eq.~\eqref{eq:C_kron_steer_music}.
\STATE Detect the $L_u$ largest peaks, which give $\{(\hat\theta_{u,\ell}, \hat\phi_{u,\ell})\}_{\ell=1}^{L_u}$.
\STATE Map the angles to $\hat{\mathbf q}_{u,\ell}$ by Eq.~\eqref{eq:C_wavevector_aoa} and extract $\hat{\boldsymbol\beta}_u$ by \gls{LS} (Eq.~\eqref{eq:C_ls_gains}).
\STATE \textbf{Output:} $\{\hat{\mathbf q}_{u,\ell},\, \hat\beta_{u,\ell},\, \hat\theta_{u,\ell},\, \hat\phi_{u,\ell}\}_{\ell=1}^{L_u}$.
\end{algorithmic}
\end{algorithm}

\subsection{Comparison of the Two Pipelines}

The \gls{ALS}\,+\,\gls{ESPRIT} and tensor \gls{MUSIC} pipelines offer
complementary trade-offs, which Table~\ref{tab:pipeline_comparison}
summarizes. The \gls{ALS} approach exploits the exact \gls{CP} structure of
the channel tensor and recovers the wavevector components independently per
mode via \gls{ESPRIT}, so that it delivers high-resolution estimates whenever
the \gls{CP} model is well conditioned. Its convergence is, however,
sensitive to the initialization, and an occasional spurious limit point is
what bounds its mean error at high \gls{SNR}. Tensor \gls{MUSIC} instead
relies on the eigen-decomposition of a spatially smoothed covariance and
requires no explicit tensor factorization, so that it has no convergence
failure mode and its error keeps falling with \gls{SNR}, at the price of a
resolution set by the subarray dimensions and the grid density. At low
\gls{SNR} the two are indistinguishable, because both inherit the same
\gls{MDL} model order and it is the model order, not the estimator, that
governs the error there. The smoothed covariance also needs a fast mode long
enough to smooth over, whereas \gls{ESPRIT} needs only one shift invariance
and, through Eq.~\eqref{eq:C_qy_norm}, none at all. The computational cost of
tensor \gls{MUSIC} is dominated by the $G^2$ steering-vector evaluations and
by the eigen-decomposition of $\hat{\mathbf R}_{\mathrm{fb}}$, whereas
\gls{ALS} scales with the number of iterations and with the Khatri--Rao
product dimensions. Section~\ref{sec:results_ce} confirms each of these
predictions in turn, since the two estimators coincide below $10$~dB and at
low path count, they separate in favor of \gls{ESPRIT} as $L_u$ approaches
the Kruskal ceiling and in favor of \gls{MUSIC} above $20$~dB. Tensor
\gls{MUSIC} needs $K \geq 4$ before it matches \gls{ALS} on the morphing
axis.

\begin{table}[t]
\centering
\caption{Comparison of the two estimation pipelines.}
\label{tab:pipeline_comparison}
\renewcommand{\arraystretch}{1.15}
\begin{tabular}{|l|c|c|}
\hline
\textbf{Property} & \textbf{\glsentryshort{ALS}\,+\,\glsentryshort{ESPRIT}} & \textbf{Tensor \glsentryshort{MUSIC}} \\
\hline
Model assumption & Exact \gls{CP} & Subspace \\
Initialization & \gls{GEVD} & Not required \\
Resolution & Super-resolution & Grid-limited \\
Low \gls{SNR} & \gls{MDL}-limited & \gls{MDL}-limited \\
High \gls{SNR} & Convergence floor & Monotone \\
Near the Kruskal ceiling & Retains accuracy & Saturates \\
Minimum fast mode & $K = 1$ (Eq.~\eqref{eq:C_qy_norm}) & $K \geq 4$ \\
Computational cost & $\mathcal O(T_{\mathrm{it}} \cdot NKIJ L_u)$ & $\mathcal O(G^2 M + M^3)$ \\
\hline
\end{tabular}
\end{table}

\subsection{Relationship Between Wavevectors and Angles of Arrival}

Each column~$\ell$ of the spatial factor matrix $\mathbf A_u$ encodes
the wavevector $\mathbf q_{u,\ell}\in\bbR^3$ through the element-wise
phase
$[\mathbf A_u]_{n,\ell}=\exp(-\mathrm{j}\,\mathbf q_{u,\ell}^T
  \mathbf p_n^{(0)})$.
Under far-field propagation at wavelength~$\lambda$, the wavevector is
related to the \gls{AoA}
$(\theta_{u,\ell},\phi_{u,\ell})$ by
\begin{equation}
  \mathbf q_{u,\ell}
    =\frac{2\pi}{\lambda}
    \begin{bmatrix}
      \sin\theta_{u,\ell}\cos\phi_{u,\ell}\\
      \sin\theta_{u,\ell}\sin\phi_{u,\ell}\\
      \cos\theta_{u,\ell}
    \end{bmatrix},
  \label{eq:C_wavevector_aoa}
\end{equation}
where $\theta_{u,\ell}$ is the elevation angle measured from the
$z$-axis and $\phi_{u,\ell}$ is the azimuth angle measured from the
$x$-axis in the $x$-$y$ plane.

\subsection{Channel Reconstruction for Arbitrary Configurations}
\label{sec:reconstruction}

Once the factor matrices
$\hat{\mathbf A}_u$, $\hat{\mathbf B}_u$,
$\hat{\mathbf C}_u$, $\hat{\mathbf D}_u$
and the path gains $\hat{\boldsymbol\beta}_u$ have been estimated, the
channel vector at any configuration $(k',i',j')$, including those not
in the training grid, can be reconstructed as
\begin{equation}
  \hat{\mathbf h}_u[k',i',j']
    = \hat{\mathbf A}_u\,\mathrm{Diag}(\hat{\boldsymbol\beta}_u)
      \left(
        \hat{\mathbf c}_{z,u}(j')
        \hadamard\hat{\mathbf c}_{x,u}(i')
        \hadamard\hat{\mathbf b}_u(k')
      \right),
  \label{eq:C_recon}
\end{equation}
where $\hat{\mathbf b}_u(k')$, $\hat{\mathbf c}_{x,u}(i')$,
$\hat{\mathbf c}_{z,u}(j')$ are the steering vectors evaluated at the
new configuration indices using the estimated wavevectors.
This extrapolation capability is a key advantage of the tensor-based approach,
because the training overhead $IJKP$ of Eq.~\eqref{eq:B_train_time} covers
only the grid configurations used for estimation, while the reconstructed
channel remains valid at arbitrary configurations within the angular support.
Section~\ref{sec:beamforming} builds the configuration search on precisely
this property, and Section~\ref{sec:results_ce} measures the accuracy of
Eq.~\eqref{eq:C_recon} at a position half a step beyond the training grid, so
what is reported is extrapolation, not interpolation.

\subsection{Identifiability of the Twin-Time-Scale \glsentryshort{CP} Model}
\label{sec:uniqueness}

This subsection establishes when the factors of Eq.~\eqref{eq:B_CP_user} are
determined by the data, which is the precondition under which
problem~(P1) of Eq.~\eqref{eq:B_P1} is well posed. We first recall the
$d$-way Kruskal condition, then compute the Kruskal ranks of the four
\gls{GFIM} factors from the array geometry, and finally translate the result
into the system dimensions and discuss what it means in practice.

Recall that the Kruskal rank $k_{\mathbf M}$ of a matrix $\mathbf M$ is the
largest integer $r$ such that every set of $r$ columns of $\mathbf M$ is
linearly independent, so that $k_{\mathbf M}\leq\operatorname{rank}(\mathbf M)$
always. The uniqueness result we use is the $d$-way generalization, due to
Sidiropoulos and Bro~\cite{sidiropoulos2000uniqueness}, of Kruskal's
theorem for three-way arrays~\cite{kruskal1977}.

\begin{lemma}
\label{lem:kruskal}
Let $\mathcal X = \llbracket \mathbf F^{(1)},\dots,\mathbf F^{(d)}\rrbracket$
be a $d$-way \gls{CP} model of rank $L$. If
\begin{equation}
  \sum_{m=1}^{d} k_{\mathbf F^{(m)}} \geq 2L + (d-1),
  \label{eq:C_kruskal_dway}
\end{equation}
then the decomposition is essentially unique, that is, any other rank-$L$
\gls{CP} model of $\mathcal X$ has factors that coincide with
$\{\mathbf F^{(m)}\}$ up to one common column permutation and $d$ diagonal
scalings, whose product is the identity.
\end{lemma}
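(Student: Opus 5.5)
The plan is to reduce the $d$-way statement to Kruskal's three-way theorem~\cite{kruskal1977} by induction on $d$, merging two modes into one through a Khatri--Rao product.

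\textbf{Base case and the Kruskal rank of a Khatri--Rao product.} For $d=3$ the condition reads $k_{\mathbf F^{(1)}}+k_{\mathbf F^{(2)}}+k_{\mathbf F^{(3)}}\geq 2L+2$. This is exactly Kruskal's theorem, which we take as given. The inductive step rests on one auxiliary fact about merged modes, taken from Sidiropoulos and Bro~\cite{sidiropoulos2000uniqueness}: for matrices with $L$ columns,
\begin{equation*}
k_{\mathbf F\khatri\mathbf G}\geq \min\big(k_{\mathbf F}+k_{\mathbf G}-1,\;L\big).
\end{equation*}
If time permits, I would prove it by contradiction. Suppose a set of $r\le\min(k_{\mathbf F}+k_{\mathbf G}-1,L)$ columns of $\mathbf F\khatri\mathbf G$ is dependent. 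Reshape the dependency as $\mathbf G_S\,\mathrm{Diag}(\mathbf c)\,\mathbf F_S^T=\mathbf 0$. By Sylvester's rank inequality the rank of the left-hand side is at least $\min(r,k_{\mathbf G})+\min(r,k_{\mathbf F})-r>0$ when $\mathbf c\neq\mathbf 0$. That contradicts the product being zero.

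\textbf{Inductive step.} Assume the result holds for $d-1\ge3$, and let $\sum_m k_{\mathbf F^{(m)}}\geq 2L+d-1$.

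1. I merge modes $d-1$ and $d$, obtaining a $(d-1)$-way CP model with last factor $\mathbf F^{(d-1)}\khatri\mathbf F^{(d)}$.
2. If $k_{\mathbf F^{(d-1)}}+k_{\mathbf F^{(d)}}-1\le L$, the auxiliary bound gives a merged Kruskal rank of at least $k_{\mathbf F^{(d-1)}}+k_{\mathbf F^{(d)}}-1$. The $(d-1)$-way sum is then at least $2L+(d-2)$, so the inductive hypothesis applies.
3. Otherwise the merged factor has Kruskal rank $L$. I need the remaining $d-2$ factors to satisfy $\sum_{m\le d-2}k_{\mathbf F^{(m)}}\geq L+d-2$. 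This does not follow automatically. The fix is to pick which pair to merge: choose the two modes with the \emph{smallest} Kruskal ranks, so the other modes carry enough. If that still fails, reduce effective Kruskal ranks by truncating $k$ values, which is allowed since Kruskal's theorem is monotone in these quantities.

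\textbf{Lifting uniqueness back.} Uniqueness of the merged model gives $\bar{\mathbf F}=(\mathbf F^{(d-1)}\khatri\mathbf F^{(d)})\boldsymbol\Pi\boldsymbol\Lambda$. Each column of the merged factor is a vectorized rank-one matrix. The other model's merged column is also a Khatri--Rao column, that is, rank one. Rank-one factorizations are unique up to scaling, so the individual factors $\mathbf F^{(d-1)},\mathbf F^{(d)}$ are recovered up to the same permutation and complementary scalings. Collecting all the scalings shows that their product per column is the identity, since the tensor itself is unchanged.

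\textbf{Main obstacle.} I expect the bookkeeping in step 3 to be the hard part: showing the Kruskal-rank budget survives merging when the Khatri--Rao bound saturates at $L$. I would first try the strategy of merging the two weakest modes and verify that the inequality redistributes correctly. Failing that, I would defer to the original argument of~\cite{sidiropoulos2000uniqueness}, which handles this case.
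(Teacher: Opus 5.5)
\paragraph{Overall assessment.} The paper gives no proof of its own here; it cites Sidiropoulos and Bro. Your induction is the standard argument behind that citation: merge two modes by a Khatri--Rao product, use $k_{\mathbf F\khatri\mathbf G}\geq\min(k_{\mathbf F}+k_{\mathbf G}-1,L)$, and bottom out at Kruskal's three-way theorem. So the route is sound.

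\paragraph{The saturated case.} The step you flag as the main obstacle closes with your first suggestion. Write the check out rather than leaving it conditional. Order $k_1\leq k_2\leq\cdots\leq k_d$ and merge modes $1$ and $2$. If $k_1+k_2-1\leq L$, you are in your step~2. Otherwise:
\begin{itemize}
\item $k_1+k_2\geq L+2$ forces $k_2\geq (L+2)/2$, so $k_3+k_4\geq 2k_2\geq L+2$;
\item since $d\geq 4$ and each further $k_m\geq k_2\geq 1$, you get $\sum_{m\geq 3}k_m\geq L+2+(d-4)=L+d-2$;
\item adding the merged Kruskal rank $L$ gives $2L+(d-2)$, which is exactly the $(d-1)$-way requirement.
\end{itemize}
The truncation fallback is unnecessary, and it would not have helped. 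Lowering any Kruskal rank only lowers the left-hand side. Lowering the merged pair until the bound stops saturating costs exactly what saturation costs.

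\paragraph{Repairs to the Khatri--Rao bound.} Two small fixes are needed in your sketch.
\begin{itemize}
\item Sylvester's inequality gives $\operatorname{rank}(\mathbf G_S\mathrm{Diag}(\mathbf c))\geq\min(r,k_{\mathbf G})$ only when $\mathbf c$ has no zero entries. So take the dependent column set minimal; equivalently, restrict to the support of $\mathbf c$, of size $r'\leq r$, and use $r'$ in place of $r$.
\item Positivity of $\min(r',k_{\mathbf G})+\min(r',k_{\mathbf F})-r'$, when $r'$ lies strictly above the smaller Kruskal rank but not above the larger, needs $\min(k_{\mathbf F},k_{\mathbf G})\geq 1$, i.e.\ no zero columns. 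That holds because the model has rank $L$, and it is also what makes the rank-one splitting in your lifting step well defined.
\end{itemize}

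\paragraph{Scalings.} The product of the $d$ scalings being the identity follows directly from the inductive hypothesis. The two scalings obtained from the rank-one split multiply to the merged one, so no separate appeal to the tensor being unchanged is needed.
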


For $d = 4$ the right-hand side of Eq.~\eqref{eq:C_kruskal_dway} is $2L_u+3$,
so the condition results in:
\begin{equation}
  k_{\widetilde{\mathbf A}_u}+k_{\mathbf B_u}+k_{\mathbf C_u}
  +k_{\mathbf D_u}\geq 2L_u+3.
  \label{eq:C_kruskal}
\end{equation}
What makes Eq.~\eqref{eq:C_kruskal} usable here is that the \gls{GFIM}
geometry fixes all four Kruskal ranks in closed form, through the following
classical property of Vandermonde matrices.

\begin{lemma}
\label{lem:vandermonde}
Let $\mathbf V\in\bbC^{M\times L}$ have entries
$[\mathbf V]_{m,\ell}=z_\ell^{\,m-1}$ with generators
$z_1,\dots,z_L\in\bbC\setminus\{0\}$. If the generators are pairwise
distinct, then $k_{\mathbf V}=\min(M,L)$.
\end{lemma}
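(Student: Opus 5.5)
The plan is to show that every set of $r=\min(M,L)$ columns of $\mathbf V$ is linearly independent; this gives $k_{\mathbf V}\geq\min(M,L)$. The reverse inequality is automatic, because $k_{\mathbf V}\leq\operatorname{rank}(\mathbf V)\leq\min(M,L)$. So fix an arbitrary index set $\mathcal L=\{\ell_1,\dots,\ell_r\}\subseteq\{1,\dots,L\}$ and consider the $M\times r$ submatrix $\mathbf V_{\mathcal L}$ formed by these columns. It suffices to exhibit an $r\times r$ submatrix of $\mathbf V_{\mathcal L}$ with nonzero determinant.

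The natural choice is the leading $r$ rows, $m=1,\dots,r$, which is possible because $r\leq M$. The resulting square matrix has entries $z_{\ell_s}^{\,m-1}$, so it is a classical square Vandermonde matrix. Its determinant is
\begin{equation*}
\prod_{1\leq s<t\leq r}\big(z_{\ell_t}-z_{\ell_s}\big),
\end{equation*}
which is nonzero because the generators are pairwise distinct. If one prefers not to quote the determinant formula, an equivalent argument works: a nonzero vector $\mathbf c$ in the left kernel defines a nonzero polynomial $p(x)=\sum_{m=1}^{r}c_m x^{m-1}$ of degree at most $r-1$. Such a polynomial would have to vanish at $r$ distinct points $z_{\ell_1},\dots,z_{\ell_r}$, which is impossible. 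Hence the $r$ chosen columns are linearly independent, and since $\mathcal L$ was arbitrary, $k_{\mathbf V}=\min(M,L)$.

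There is no real obstacle here. The only point needing care is the case split between $M\geq L$ and $M<L$. When $M\geq L$, we take all $L$ columns and any $L$ rows. When $M<L$, we take any $M$ columns and all rows. Both cases are handled uniformly by $r=\min(M,L)$. The nonzero-generator hypothesis is not needed for the row choice above. It matters only if one instead extracts rows starting at an offset, where a common factor $z_{\ell}^{m_0}$ is pulled from each column; this is relevant for the physical factors of Eq.~\eqref{eq:B_CP_user}, whose generators are unit-modulus phases $e^{-\mathrm{j}\mu_\ell\delta}$. For those factors, distinctness of the generators requires distinct spatial frequencies modulo $2\pi/\delta$, which ties the lemma to the unambiguity condition $|\mu_\ell|\delta<\pi$ stated for the \gls{ESPRIT} step.
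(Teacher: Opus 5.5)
Your proof is correct and follows the paper's argument exactly: restrict any $\min(M,L)$ columns to the leading $\min(M,L)$ rows, use the nonvanishing square Vandermonde determinant, and close with $k_{\mathbf V}\leq\min(M,L)$. One small slip in your case-split paragraph: ``any $L$ rows'' should read ``the leading (or any consecutive) $L$ rows,'' because a non-consecutive row selection gives a generalized Vandermonde matrix that can be singular even for distinct generators (for example, exponents $0$ and $2$ with nodes $z_1=1$, $z_2=-1$).
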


\begin{IEEEproof}
Let $r=\min(M,L)$ and take any $r$ columns, indexed by
$\mathcal I\subseteq\{1,\dots,L\}$ with $|\mathcal I|=r$. The submatrix
formed by the first $r$ rows of those columns is a square Vandermonde matrix
with nodes $\{z_\ell\}_{\ell\in\mathcal I}$, whose determinant is
$\prod_{\ell<\ell',\,\ell,\ell'\in\mathcal I}(z_{\ell'}-z_\ell)\neq 0$
because the nodes are distinct. The $r$ columns are therefore linearly
independent, and since no set of more than $\min(M,L)$ columns of an
$M\times L$ matrix can be independent, we have $k_{\mathbf V}=r$.
\end{IEEEproof}

The three motion factors of the \gls{GFIM} are exactly of this form. Writing
the mode generators as
\begin{equation}
  b_{u,\ell}=e^{-\mathrm{j}q^{(y)}_{u,\ell}\delta_y},\quad
  c_{u,\ell}=e^{-\mathrm{j}q^{(x)}_{u,\ell}\delta_x},\quad
  d_{u,\ell}=e^{-\mathrm{j}q^{(z)}_{u,\ell}\delta_z},
  \label{eq:C_generators}
\end{equation}
the columns of $\mathbf B_u$, $\mathbf C_u$ and $\mathbf D_u$ are the
Vandermonde vectors generated by $b_{u,\ell}$, $c_{u,\ell}$ and
$d_{u,\ell}$, since the commanded steps are uniform,
$\Delta_x[i]=(i-1)\delta_x$ and $\Delta_z[j]=(j-1)\delta_z$, and the piston
morphing gives $y[k]=(k-1)\delta_y$. This yields the following
characterization.

\begin{proposition}
\label{prop:kranks}
Consider the model of Eq.~\eqref{eq:B_CP_user} with mode steps satisfying the
unambiguity condition $\delta_x,\delta_y,\delta_z<\lambda/2$ and gains
$\beta_{u,\ell}\neq 0$ for all $\ell$. If the $L_u$ wavevectors have pairwise
distinct $y$-projections, then $k_{\mathbf B_u}=\min(K,L_u)$, and likewise
pairwise distinct $x$- and $z$-projections give $k_{\mathbf C_u}=\min(I,L_u)$
and $k_{\mathbf D_u}=\min(J,L_u)$. For the spatial factor of an
$N_x\times N_z$ \gls{UPA}, we have
\begin{equation}
  k_{\widetilde{\mathbf A}_u}
    \;\geq\;\min\!\big(k_{\mathbf A_x}+k_{\mathbf A_z}-1,\;L_u\big),
  \label{eq:C_krank_kr}
\end{equation}
with equality to $\min(N,L_u)$ for almost every realization of the angles
under a continuous angular distribution.
\end{proposition}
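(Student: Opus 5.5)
The plan is to handle the three motion factors and the spatial factor separately. The motion factors follow directly from Lemma~\ref{lem:vandermonde}. The spatial factor needs a Khatri--Rao rank bound and a genericity argument.

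\textbf{Motion factors.} By Eq.~\eqref{eq:C_generators} and the uniform steps, $\mathbf B_u$, $\mathbf C_u$ and $\mathbf D_u$ are Vandermonde matrices with unit-modulus, hence nonzero, generators. It therefore suffices to show that the generators are pairwise distinct. For the $y$-mode, suppose $b_{u,\ell}=b_{u,\ell'}$. Then $(q^{(y)}_{u,\ell}-q^{(y)}_{u,\ell'})\delta_y\in 2\pi\mathbb Z$. Since $|q^{(y)}_{u,\ell}|\le\|\mathbf q_{u,\ell}\|=k_0=2\pi/\lambda$, the difference is bounded in modulus by $2k_0\delta_y<2\pi$ under $\delta_y<\lambda/2$. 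The integer multiple must therefore be zero, which contradicts the assumed distinct $y$-projections. The identical argument with $\delta_x,\delta_z$ handles $\mathbf C_u,\mathbf D_u$. Lemma~\ref{lem:vandermonde} then gives $k_{\mathbf B_u}=\min(K,L_u)$, $k_{\mathbf C_u}=\min(I,L_u)$ and $k_{\mathbf D_u}=\min(J,L_u)$.

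\textbf{Spatial factor, lower bound.} First remove the gains. Since $\widetilde{\mathbf A}_u=\mathbf A_u\mathrm{Diag}(\boldsymbol\beta_u)$ with all $\beta_{u,\ell}\neq0$, column scaling does not change linear (in)dependence of any column subset, so $k_{\widetilde{\mathbf A}_u}=k_{\mathbf A_u}$. For an $N_x\times N_z$ \gls{UPA}, the element phase splits as $\exp(-\mathrm j q^{(x)}x_n^{(0)})\exp(-\mathrm j q^{(z)}z_n^{(0)})$. Up to a fixed row permutation, which leaves the Kruskal rank unchanged, this gives $\mathbf A_u=\mathbf A_x\khatri\mathbf A_z$, with $\mathbf A_x\in\bbC^{N_x\times L_u}$ and $\mathbf A_z\in\bbC^{N_z\times L_u}$ Vandermonde. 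I would then invoke the standard Khatri--Rao Kruskal-rank lemma of Sidiropoulos and Bro~\cite{sidiropoulos2000uniqueness}: $k_{\mathbf A_x\khatri\mathbf A_z}\ge\min(k_{\mathbf A_x}+k_{\mathbf A_z}-1,L_u)$, which is Eq.~\eqref{eq:C_krank_kr}. If a self-contained argument is wanted, this lemma follows from the usual induction on column subsets, using $\operatorname{rank}(\mathbf X\khatri\mathbf Y)\ge\operatorname{rank}$ of the corresponding Kronecker-type bilinear identity.

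\textbf{Spatial factor, generic equality (main obstacle).} Set $r=\min(N,L_u)$. For every choice of $r$ columns and $r$ rows, the corresponding minor of $\mathbf A_u$ is a real-analytic function of the angles $\{(\theta_{u,\ell},\phi_{u,\ell})\}$, via $x_\ell=e^{-\mathrm j k_0\sin\theta_\ell\cos\phi_\ell d_{\mathrm{upa}}}$ and $z_\ell=e^{-\mathrm j k_0\cos\theta_\ell d_{\mathrm{upa}}}$. Viewed as a function of the points $(x_\ell,z_\ell)$, the minor is the generalized Vandermonde determinant $\det[x_\ell^{a_n}z_\ell^{b_n}]$ over $r$ distinct monomials $x^{a}z^{b}$.

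The delicate step is to show that this analytic function is not identically zero on the physical angle domain. I would argue as follows. Distinct monomials are linearly independent functions on any open subset of the torus, so one can choose $r$ points where the matrix of monomial values is nonsingular. The map from angles to $(x,z)$ has nonsingular Jacobian on an open set, so its image contains an open piece of the torus, and such points can be realized by physical angles.

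A nonzero real-analytic function vanishes only on a Lebesgue-null set. The union over the finitely many column and row choices is still null. Hence, for almost every angular realization under a continuous distribution, some $r\times r$ minor is nonzero for each $r$-column subset, which gives $k_{\widetilde{\mathbf A}_u}=\min(N,L_u)$.
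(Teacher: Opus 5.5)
Your proposal is correct and follows the paper's proof essentially step for step. You use the same generator-distinctness argument ($|\Delta q|\,\delta\le 2k_0\delta<2\pi$) together with Lemma~\ref{lem:vandermonde} for $\mathbf B_u,\mathbf C_u,\mathbf D_u$, the Sidiropoulos--Bro Khatri--Rao bound for $\mathbf A_x\khatri\mathbf A_z$, the analytic-minor/measure-zero argument for generic equality, and invariance under the nonzero gain scaling. Your only addition is an explicit reason (distinct monomials on the torus, plus a local-diffeomorphism argument) why the relevant minors are not identically zero, a step the paper simply asserts.
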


\begin{IEEEproof}
Consider $\mathbf B_u$. Two generators coincide, $b_{u,\ell}=b_{u,\ell'}$,
if and only if $(q^{(y)}_{u,\ell}-q^{(y)}_{u,\ell'})\delta_y$ is an integer
multiple of $2\pi$. Since $|q^{(y)}|\leq k_0=2\pi/\lambda$ for every visible
path, the difference satisfies
$|q^{(y)}_{u,\ell}-q^{(y)}_{u,\ell'}|\delta_y\leq 2k_0\delta_y<2\pi$ under
$\delta_y<\lambda/2$, so the only admissible multiple is zero and the
generators coincide precisely when the $y$-projections do. Distinct
projections therefore give distinct generators, and
Lemma~\ref{lem:vandermonde} gives $k_{\mathbf B_u}=\min(K,L_u)$. The same
argument applies verbatim to $\mathbf C_u$ and $\mathbf D_u$ with
$\delta_x$ and $\delta_z$. For the spatial factor, the \gls{UPA} manifold
factors as $\mathbf A_u=\mathbf A_x\khatri\mathbf A_z$ with
$\mathbf A_x\in\bbC^{N_x\times L_u}$ and
$\mathbf A_z\in\bbC^{N_z\times L_u}$ Vandermonde, and
Eq.~\eqref{eq:C_krank_kr} is the Khatri--Rao Kruskal-rank bound
of~\cite[Lemma~1]{sidiropoulos2000uniqueness}. Every $\min(N,L_u)$-column
subset of $\mathbf A_u$ has a non-vanishing minor that is a non-trivial
analytic function of the angles, so the set on which it vanishes has Lebesgue
measure zero, which gives the almost-sure equality. Finally,
$\widetilde{\mathbf A}_u=\mathbf A_u\mathrm{Diag}(\boldsymbol\beta_u)$ and a
diagonal scaling with non-zero entries leaves every Kruskal rank unchanged.
\end{IEEEproof}

\begin{corollary}
\label{cor:dims}
Under the hypotheses of Proposition~\ref{prop:kranks}, the per-user \gls{CP}
model of Eq.~\eqref{eq:B_CP_user} is essentially unique whenever we have:
\begin{equation}
\begin{split}
  &\min(N,L_u)+\min(K,L_u)\\
  &\qquad+\min(I,L_u)+\min(J,L_u)\geq 2L_u+3.
\end{split}
  \label{eq:C_kruskal_dim}
\end{equation}
The largest $L_u$ satisfying Eq.~\eqref{eq:C_kruskal_dim} is denoted by
$L_u^{\max}(N,K,I,J)$ and it is the bound that constrains problem~(P1).
\end{corollary}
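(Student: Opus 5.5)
The plan is to plug the Kruskal ranks supplied by Proposition~\ref{prop:kranks} into the four-way condition of Lemma~\ref{lem:kruskal}, specialised to $d=4$ as Eq.~\eqref{eq:C_kruskal}. Under the hypotheses of Proposition~\ref{prop:kranks}, namely unambiguous steps, non-zero gains and pairwise distinct projections, the three motion factors have $k_{\mathbf B_u}=\min(K,L_u)$, $k_{\mathbf C_u}=\min(I,L_u)$ and $k_{\mathbf D_u}=\min(J,L_u)$ exactly. The weighted spatial factor satisfies $k_{\widetilde{\mathbf A}_u}=\min(N,L_u)$ for almost every angular realization. Substituting these values turns the left-hand side of Eq.~\eqref{eq:C_kruskal} into the left-hand side of Eq.~\eqref{eq:C_kruskal_dim}. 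Whenever Eq.~\eqref{eq:C_kruskal_dim} holds, the sufficient condition of Lemma~\ref{lem:kruskal} is therefore met, and the decomposition of $\mathcal H_u$ into $\widetilde{\mathbf A}_u,\mathbf B_u,\mathbf C_u,\mathbf D_u$ is essentially unique.

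Two bookkeeping points remain. First, Lemma~\ref{lem:kruskal} is stated for the plain CP form $\llbracket\mathbf F^{(1)},\dots,\mathbf F^{(4)}\rrbracket$. I will apply it to the weighted-factor form of Eq.~\eqref{eq:B_CP_user_weighted_equiv}, which is equivalent to Eq.~\eqref{eq:B_CP_user}; the gains are absorbed into $\widetilde{\mathbf A}_u$, whose Kruskal rank equals that of $\mathbf A_u$ by the last line of the proof of Proposition~\ref{prop:kranks}. The scaling ambiguity is harmless, because the Vandermonde columns of $\mathbf B_u,\mathbf C_u,\mathbf D_u$ have unit first entry and so fix their own scaling. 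Second, the tensor must have rank exactly $L_u$, which is implied by uniqueness together with $\beta_{u,\ell}\neq0$. The definition of $L_u^{\max}$ is then just the largest integer satisfying the inequality. It is well defined because for $L_u=1$ the left side is $4\geq5$ only if... so I should check monotonicity. The left side minus $2L_u$ is nonincreasing in $L_u$: each $\min(\cdot,L_u)$ grows by at most one per unit increase, while $2L_u$ grows by two. The feasible set is therefore an initial segment $\{1,\dots,L_u^{\max}\}$, possibly empty if the dimensions are too small, for example when all but one dimension equal $1$.

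The main subtlety is the almost-sure qualifier on $k_{\widetilde{\mathbf A}_u}$. Corollary~\ref{cor:dims} inherits it, so uniqueness holds for almost every angular realization. Alternatively, I would state a deterministic version using the lower bound of Eq.~\eqref{eq:C_krank_kr}, with $\min(k_{\mathbf A_x}+k_{\mathbf A_z}-1,L_u)$ in place of $\min(N,L_u)$. Since a larger Kruskal sum only strengthens the inequality, either choice completes the argument.
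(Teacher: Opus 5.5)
Your uniqueness argument is the one the paper intends: substitute the Kruskal ranks of Proposition~\ref{prop:kranks} into Eq.~\eqref{eq:C_kruskal}. You are also right that the corollary silently inherits the almost-sure qualifier on $k_{\widetilde{\mathbf A}_u}$. Your deterministic variant is valid, but it certifies a stricter inequality than Eq.~\eqref{eq:C_kruskal_dim}, so it does not prove the statement as written.

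\textbf{The step that fails is your well-definedness argument for $L_u^{\max}$.} With $D_m\in\{N,K,I,J\}$, the slack $f(L)=\sum_m\min(D_m,L)-2L-3$ is not nonincreasing. There are four minima, each growing by up to one per unit step, so the left-hand side can grow by four while $2L$ grows by only two. In fact $f(1)=4-5=-1$ for every geometry, so $L_u=1$ never satisfies Eq.~\eqref{eq:C_kruskal_dim}. By contrast, $f(2)=8-7=1$ as soon as all four dimensions are at least two, and at $(16,4,4,4)$ the feasible set is $\{2,\dots,9\}$. Your claimed initial segment $\{1,\dots,L_u^{\max}\}$ is therefore wrong. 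Combined with $f(1)<0$, the very observation you started from, it would make the feasible set empty for every geometry, which contradicts Table~\ref{tab:uniqueness}.

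The correct bookkeeping uses the increment $f(L+1)-f(L)=\lvert\{m: D_m\geq L+1\}\rvert-2$. This is nonincreasing in $L$, so $f$ is concave on the integers and $\{L: f(L)\geq 0\}$ is a run of consecutive integers. Because $f(1)=-1$, this run is $\{2,\dots,L_u^{\max}\}$ when at most one of the four dimensions equals one. It is empty otherwise, since two unit dimensions make every increment non-positive; this is the degenerate row of Table~\ref{tab:allocation}. The order $L_u=1$ must then be covered separately, by the trivial essential uniqueness of a non-zero rank-one tensor. Only with both pieces does $L_u\leq L_u^{\max}$ become a genuine sufficient condition for identifiability, which is what the phrase ``the bound that constrains problem (P1)'' requires. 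The paper states the definition without supplying this argument.
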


\begin{table}[t]
\centering
\caption{Identifiability bound $L_u^{\max}$ of Corollary~\ref{cor:dims}
  for selected system dimensions.}
\label{tab:uniqueness}
\renewcommand{\arraystretch}{1.1}
\begin{tabular}{|c|c|c|c|c|}
\hline
$N$ & $K$ & $I$ & $J$ & $L_u^{\max}$\\
\hline
16 & 4 & 4 & 4 &  9 \\
32 & 4 & 8 & 8 & 17 \\
64 & 2 & 4 & 4 &  7 \\
64 & 8 & 8 & 8 & 21 \\
\hline
\end{tabular}
\end{table}

Table~\ref{tab:uniqueness} evaluates $L_u^{\max}$ for representative
dimensions. Three consequences deserve comment, and the numerical campaign of
Section~\ref{sec:results} tests each of them.

\begin{remark}
\label{rem:fourmodes}
For $L_u$ larger than every mode dimension, Eq.~\eqref{eq:C_kruskal_dim}
reads $N+K+I+J\geq 2L_u+3$, so each mode contributes its full dimension to
the budget, while the requirement grows at $2L_u$. A $d$-way model needs
$\sum_m I_m \geq 2L+(d-1)$, so adding a mode of dimension $I_{d+1}$ is
profitable whenever $I_{d+1}\geq 1$, and strictly so when $I_{d+1}\geq 2$.
This is why the fast morphing mode helps even at $K=2$, and it is the
structural reason for preferring the two-motion protocol over a single-motion
version that spends the same $KIJ$ budget on one axis.
\end{remark}

\begin{remark}
\label{rem:degenerate}
Proposition~\ref{prop:kranks} requires distinct projections per axis, and it
is worth asking how restrictive that is. Two paths sharing a single
projection, say $q^{(z)}_{u,1}=q^{(z)}_{u,2}$, collapse $k_{\mathbf D_u}$ to
one but leave the other three modes intact. At the reference geometry
$(N,K,I,J)=(16,4,4,4)$ with $L_u=3$ this gives $3+3+3+1=10\geq 9$, so
Eq.~\eqref{eq:C_kruskal_dim} still holds, and the model remains identifiable.
Identifiability is lost only if two modes collapse simultaneously, which by
Eq.~\eqref{eq:C_generators} requires two paths to share two of the three
wavevector components. The visible-hemisphere constraint
$\lVert\mathbf q_{u,\ell}\rVert=k_0$ with $q^{(y)}_{u,\ell}\leq 0$ makes the
third component a function of the other two, so sharing two components means
sharing all three, that is, being the same path. The twin-time-scale model is
therefore identifiable for every admissible geometry with distinct paths, and
not merely for generic ones. What a near-degenerate geometry costs is
conditioning rather than uniqueness, which is a finite-\gls{SNR} effect that
Eq.~\eqref{eq:C_kruskal_dim} does not see.
\end{remark}

\begin{remark}
\label{rem:sufficiency}
Eq.~\eqref{eq:C_kruskal_dim} is sufficient and not necessary, and it concerns
exact uniqueness in the noiseless case. A crude necessary condition follows
from parameter counting, since a rank-$L_u$ 4-way \gls{CP} model has
$L_u(N+K+I+J-3)$ free parameters against $NKIJ$ observations, giving
$L_u \leq NKIJ/(N+K+I+J-3)$, which at the reference geometry permits
$L_u\leq 40$ against the Kruskal value of $9$. The gap between the two shows
that Kruskal is the binding constraint at these dimensions. Neither bound,
however, orders two protocols by their estimation accuracy, since two designs
may be ranked one way by $L_u^{\max}$ and the opposite way by the conditioning
of the \gls{ALS} fit at a finite \gls{SNR}. Section~\ref{sec:results_precoder}
exhibits precisely this inversion, where the translation-only protocol has the
larger Kruskal ceiling yet the larger estimation penalty.
\end{remark}

\begin{remark}
\label{rem:dof}
The counting above treats the factors as unstructured. The \gls{GFIM} model is
far more constrained, since each path contributes two angular degrees of
freedom and one complex gain, that is, $4L_u$ real unknowns per user against
$2NKIJ$ real observations. This is the formal statement of the training
efficiency claimed in Section~\ref{sec:reconstruction}, because a parametric
model identified from $IJKP$ pilot symbols predicts the channel over the whole
continuum of panel configurations, whereas an unstructured estimator would
have to measure each configuration it intends to use.
\end{remark}

\subsection{Training Allocation Under the Identifiability Constraint}
\label{sec:allocation}

Corollary~\ref{cor:dims} bounds the number of recoverable paths for a given
geometry. The design question runs the other way, since the training frame
$IJKP$ of Eq.~\eqref{eq:B_train_time} is what the coherence time allows and
the panel dimensions $(K,I,J)$ are what the designer chooses within it. This
subsection asks how a fixed motion budget $B\triangleq KIJ$ should be split
among the three motion modes, and what Eq.~\eqref{eq:C_kruskal_dim} says
about that split. The pilot length $P$ is fixed separately by the
orthogonality requirement $P\geq U$ of Section~\ref{sec:ortho_pilots}, so
the allocation problem concerns $B$ alone.

Two regimes of Eq.~\eqref{eq:C_kruskal_dim} admit a closed form. Let
$D_{\max}\triangleq\max(K,I,J)$, $S\triangleq K+I+J-3$ and
$T\triangleq\lfloor (N+K+I+J-3)/2\rfloor$.

\begin{proposition}
\label{prop:ceiling}
Under the hypotheses of Proposition~\ref{prop:kranks}, the bound
$L_u^{\max}$ of Corollary~\ref{cor:dims} satisfies
\begin{equation}
  L_u^{\max} =
  \begin{cases}
    S, & \text{if } D_{\max}\leq S\leq N,\\[3pt]
    T, & \text{if } N<S \text{ and } T\geq D_{\max}.
  \end{cases}
  \label{eq:C_ceiling}
\end{equation}
\end{proposition}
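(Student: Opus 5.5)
The plan is to study the integer function
\[
f(L)\triangleq\min(N,L)+\min(K,L)+\min(I,L)+\min(J,L)-2L-3,
\]
so that by Corollary~\ref{cor:dims} the bound $L_u^{\max}$ is the largest $L$ with $f(L)\geq 0$. Proving Eq.~\eqref{eq:C_ceiling} in each case then takes two checks: the claimed value $L^\star$ satisfies $f(L^\star)\geq 0$, and every $L>L^\star$ gives $f(L)<0$.

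The function $f$ is not monotone, since $f(L+1)-f(L)$ equals the number of modes with dimension exceeding $L$, minus $2$. The approach is not to claim monotonicity, but to use that $f$ is piecewise linear with simple closed forms on the two regimes that matter.
\begin{itemize}
\item On $D_{\max}\leq L\leq N$, the three motion minima saturate, so $f(L)=K+I+J-L-3=S-L$.
\item On $L\geq\max(N,D_{\max})$, all four minima saturate, so $f(L)=N+K+I+J-2L-3$. This is nonnegative if and only if $L\leq T$, because $L$ is an integer and the floor in $T$ absorbs the parity.
\end{itemize}
In both cases every candidate $L>L^\star$ satisfies $L>D_{\max}$, because $L^\star\geq D_{\max}$ by hypothesis. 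Hence only these two regimes need to be examined, and the non-monotone region $L<D_{\max}$ never enters.

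\emph{Case $D_{\max}\leq S\leq N$.} At $L=S$ we are in the first regime, so $f(S)=0$ and $S$ is feasible. For $S<L\leq N$ the first regime gives $f(L)=S-L<0$. For $L>N$ the second regime applies, and feasibility would require $L\leq (N+S)/2$. But $S\leq N$ gives $(N+S)/2\leq N<L$, a contradiction. Hence $L_u^{\max}=S$.

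\emph{Case $N<S$ and $T\geq D_{\max}$.} First show $T\geq N$: since $S\geq N+1$, we get $N+S\geq 2N+1$, and therefore $T=\lfloor (N+S)/2\rfloor\geq N$. Thus $L=T$ lies in the second regime, where $f(T)\geq 0$ by the definition of $T$. Every $L>T$ also lies in the second regime and violates $L\leq T$. Hence $L_u^{\max}=T$.

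The only delicate point I expect is the non-monotonicity of $f$. The argument above handles it by observing that the hypotheses place the claimed maximizer at or beyond $D_{\max}$, so all larger competitors fall in the saturated regimes where $f$ is explicitly decreasing. A secondary point is the integer/floor bookkeeping, in particular the inequality $T\geq N$ in the second case, which needs the strict inequality $N<S$.
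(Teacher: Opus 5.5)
Your proof is correct and follows essentially the same route as the paper: evaluate Eq.~\eqref{eq:C_kruskal_dim} at the claimed order, where the minima saturate, and then show that every larger order fails. In the first regime the paper covers all $L>S$ in one step, using $\min(N,L)\le L$ to bound the left-hand side by $L+K+I+J<2L+3$, so your split at $L=N$ is unneeded but harmless. Your step $T\ge N$ is also more accurate than the paper's ``$T>N$,'' which fails for the tabulated allocation $(K,I,J)=(2,2,16)$ at $N=16$, where $T=N$; saturation of $\min(N,T)$ needs only $T\ge N$.
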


\begin{IEEEproof}
In the first regime, take $L=S$. Since $D_{\max}\leq S\leq N$, every minimum
in Eq.~\eqref{eq:C_kruskal_dim} except the first saturates at its mode
dimension and the first equals $L$, so the left-hand side is $L+K+I+J=2L+3$
and the condition holds with equality. For any $L>S$ the left-hand side is at
most $L+K+I+J<2L+3$, because $\min(N,L)\leq L$, so $S$ is the largest
admissible order. In the second regime, take
$L=T$. The hypothesis $N<S$ gives $T>N$, and $T\geq D_{\max}$ by
assumption, so all four minima saturate and the condition reads
$N+K+I+J\geq 2L+3$, which holds for $L=T$ by the definition of $T$ and
fails for $L=T+1$.
\end{IEEEproof}

The first regime is the one of practical panels, in which the array is large
relative to the motion grid, and it states that the ceiling grows with the
sum of the motion dimensions and not with their product. For a fixed budget $B=KIJ$ that sum is
largest when the factorization is most unbalanced, so the Kruskal ceiling
rewards concentrating the budget on one motion axis.
Table~\ref{tab:allocation} makes this explicit for the $B=64$ budget of the
numerical campaign. The balanced reference protocol $(4,4,4)$ has the lowest
ceiling of all non-degenerate allocations, while $(1,4,16)$ nearly doubles
it with the same number of training symbols. The one allocation the ceiling
excludes outright is the one that concentrates everything, since two modes of
unit length reduce the 4-way tensor to a matrix and
Eq.~\eqref{eq:C_kruskal_dim} then certifies nothing, which is the
$(64,1,1)$ row and the morphing-only protocol of
Section~\ref{sec:results_precoder}.

\begin{table}[t]
\centering
\caption{Identifiability ceiling $L_u^{\max}$ for every allocation of the
  motion budget $B = KIJ = 64$ at $N = 16$. Rows are ordered by the ceiling,
  and $(K,I,J)$ is listed in non-decreasing order because
  Eq.~\eqref{eq:C_kruskal_dim} is symmetric in the three motion modes.}
\label{tab:allocation}
\renewcommand{\arraystretch}{1.1}
\begin{tabular}{|c|c|c|c|}
\hline
$(K,I,J)$ sorted & $K+I+J$ & $L_u^{\max}$ & Regime of Eq.~\eqref{eq:C_ceiling}\\
\hline
$(1,4,16)$ & 21 & 17 & second \\
$(2,2,16)$ & 20 & 16 & second \\
$(1,2,32)$ & 35 & 16 & neither \\
$(1,8,8)$  & 17 & 14 & first \\
$(2,4,8)$  & 14 & 11 & first \\
$(4,4,4)$  & 12 &  9 & first \\
$(1,1,64)$ & 66 &  0 & degenerate \\
\hline
\end{tabular}
\end{table}

The ceiling is, however, a feasibility constraint and not a design objective,
and Section~\ref{sec:results_precoder} quantifies the difference. There the
specific allocation of $(2,8,4)$, whose ceiling is $11$, delivers a smaller estimation
penalty than $(1,16,4)$, whose ceiling is $17$, at every path count above
four. Unbalancing the budget raises the number of paths that can be told
apart in principle while lowering the conditioning with which the factors
are separated at a finite \gls{SNR}, because a mode of length one contributes
no shift invariance to Eq.~\eqref{eq:C_esprit_freq} and a mode of length two
contributes a single one. The particular allocation rule that the analysis and the
measurements together support is therefore to choose $(K,I,J)$ so that
Eq.~\eqref{eq:C_ceiling} exceeds the largest path count the deployment must
resolve, with a margin, and then to spend whatever budget remains on balancing the
modes rather than on lengthening the longest one. Every mode should be at
least two states long, since that is the minimum at which it contributes a
shift invariance, and no two modes should be left at unit length.

\section{Multiuser Beamforming with Configuration Selection}
\label{sec:beamforming}

The channel estimation framework of Section~\ref{sec:estimation} recovers the
per-user wavevectors $\hat{\mathbf q}_{u,\ell}$ and path gains
$\hat{\beta}_{u,\ell}$ from the training tensor. This section shows how to
exploit those parametric estimates for downlink multiuser beamforming. Because
the tensor-based estimators provide a generative channel model rather than a
table of measured responses, the \gls{BS} can evaluate the channel at any
configuration $(k,i,j)$ through Eq.~\eqref{eq:C_recon}, including
configurations never visited during training, and can therefore select the best
one by exhaustively searching the $K \times I \times J$ grid. This is the
operational payoff of the identifiability analysis of
Section~\ref{sec:uniqueness}, since a configuration search over a
non-identifiable model would optimize a channel that the data do not
determine.

\subsection{Downlink Signal Model}

Consider the downlink of a system in which the \gls{BS}, equipped with the $N$-element \gls{GFIM}, simultaneously serves $U$ single-antenna users. At a given panel configuration $(k,i,j)$, the \gls{BS} transmits the precoded signal
\begin{equation}
  \mathbf x = \sqrt{P_t}\,\mathbf W \mathbf d,
  \label{eq:D_tx_signal}
\end{equation}
where $\mathbf d = [d_1, \dots, d_U]^T \in \bbC^U$ is the vector of data symbols with $\mathbb{E}[\mathbf d \mathbf d^H] = \mathbf I_U$, $P_t$ is the total transmit power, and $\mathbf W = [\mathbf w_1, \dots, \mathbf w_U] \in \bbC^{N \times U}$ is the precoding matrix whose columns are normalized to unit norm, $\|\mathbf w_u\| = 1$ for all $u$.

The received signal at user $u$ is
\begin{equation}
  r_u = \sqrt{P_t}\,\mathbf h_u^H[k,i,j]\,\mathbf W \mathbf d + n_u,
  \label{eq:D_rx_signal}
\end{equation}
where $\mathbf h_u[k,i,j] \in \bbC^N$ is the channel vector from the \gls{BS} to user $u$ at configuration $(k,i,j)$, and $n_u \sim \mathcal{CN}(0,\sigma_n^2)$ is additive noise. Defining $\mathrm{SNR} \triangleq P_t / \sigma_n^2$ and equal per-user power allocation $P_u = P_t / U$, the \gls{SINR} of user $u$ is
\begin{equation}
  \mathrm{SINR}_u
    = \frac{P_u \,|\mathbf h_u^H \mathbf w_u|^2}
           {\sum_{v \neq u} P_u\, |\mathbf h_u^H \mathbf w_v|^2 + \sigma_n^2}.
  \label{eq:D_sinr}
\end{equation}
The sum-rate (sum spectral efficiency) under Gaussian signalling is formulated as:
\begin{equation}
  R_{\mathrm{sum}} = \sum_{u=1}^{U} \log_2(1 + \mathrm{SINR}_u) \quad [\text{bps/Hz}].
  \label{eq:D_sumrate}
\end{equation}

\subsection{Linear Precoding Schemes}
\label{sec:precoders}

We consider three standard linear \glspl{TPC} that span the interference-management design space~\cite{spencer2004zf,peel2005rzf}. Let $\mathbf H = [\mathbf h_1, \dots, \mathbf h_U] \in \bbC^{N \times U}$ denote the aggregate channel matrix at the selected configuration.

\subsubsection{Zero-forcing}
The \gls{ZF} \gls{TPC} eliminates multiuser interference by inverting the channel's Gram matrix,
\begin{equation}
  \mathbf W_{\mathrm{ZF}}
    = \mathbf H\left(\mathbf H^H \mathbf H\right)^{-1},
  \label{eq:D_zf}
\end{equation}
followed by column-wise normalization $\mathbf w_u \leftarrow \mathbf w_u / \|\mathbf w_u\|$. This \gls{TPC} requires $N \geq U$ and achieves zero inter-user interference, i.e., $\mathbf h_u^H \mathbf w_v = 0$ for all $v \neq u$ before normalization. The inversion nonetheless amplifies noise whenever the channel is ill-conditioned.

\subsubsection{Regularized zero-forcing}
The \gls{MMSE} \gls{TPC}, also known as regularized \gls{ZF}~\cite{peel2005rzf}, introduces a regularization term that trades residual interference for reduced noise amplification,
\begin{equation}
  \mathbf W_{\mathrm{MMSE}}
    = \mathbf H\left(\mathbf H^H \mathbf H
      + \frac{U}{\mathrm{SNR}}\,\mathbf I_U\right)^{-1},
  \label{eq:D_mmse}
\end{equation}
again followed by column normalization. The regularization weight $U / \mathrm{SNR}$ balances interference suppression against noise enhancement, and the \gls{TPC} reduces to \gls{ZF} as $\mathrm{SNR} \to \infty$. Section~\ref{sec:results_precoder} shows that the two coincide numerically until the load reaches $U = N/2$, which is where the inversion in Eq.~\eqref{eq:D_zf} first becomes ill-conditioned.

\subsubsection{Matched filtering}
The \gls{MRC} \gls{TPC} maximizes the desired-signal power by matched filtering, formulated as:
\begin{equation}
  \mathbf W_{\mathrm{MRC}} = \mathbf H,
  \label{eq:D_mrc}
\end{equation}
with column normalization. This \gls{TPC} is optimal in the single-user or noise-dominated regime but does not suppress inter-user interference; hence the numerator and the interference term of Eq.~\eqref{eq:D_sinr} grow together, and the sum-rate saturates as $U$ or the $\mathrm{SNR}$ increases. The three \glspl{TPC} are summarized in Algorithm~\ref{alg:precoders}.

\begin{algorithm}[t]
\caption{Linear precoding and sum-rate evaluation}
\label{alg:precoders}
\begin{algorithmic}[1]
\STATE \textbf{Input:} $\mathbf H \in \bbC^{N \times U}$, $\mathrm{SNR}$, \gls{TPC} type $\in \{\mathrm{ZF}, \mathrm{MMSE}, \mathrm{MRC}\}$.
\IF{\gls{TPC} $= \mathrm{ZF}$}
  \STATE $\mathbf W \leftarrow \mathbf H(\mathbf H^H \mathbf H)^{-1}$
\ELSIF{\gls{TPC} $= \mathrm{MMSE}$}
  \STATE $\mathbf W \leftarrow \mathbf H(\mathbf H^H \mathbf H + \frac{U}{\mathrm{SNR}}\mathbf I_U)^{-1}$
\ELSIF{\gls{TPC} $= \mathrm{MRC}$}
  \STATE $\mathbf W \leftarrow \mathbf H$
\ENDIF
\STATE Normalize: $\mathbf w_u \leftarrow \mathbf w_u / \|\mathbf w_u\|$, $\forall u$.
\STATE Compute $\mathrm{SINR}_u$ from Eq.~\eqref{eq:D_sinr}, $\forall u$.
\STATE $R_{\mathrm{sum}} \leftarrow \sum_{u=1}^{U} \log_2(1 + \mathrm{SINR}_u)$.
\STATE \textbf{Output:} $R_{\mathrm{sum}}$.
\end{algorithmic}
\end{algorithm}

\subsection{Channel Prediction and Configuration Selection}
\label{sec:prediction}

A key benefit of the tensor-based estimation framework is its ability to predict the channel at any \gls{GFIM} configuration without additional pilot transmissions. Using the estimated per-user parameters $\{\hat{\mathbf q}_{u,\ell}, \hat{\beta}_{u,\ell}\}_{\ell=1}^{L_u}$ from either the \gls{ALS}\,+\,\gls{ESPRIT} or tensor \gls{MUSIC} pipeline, the predicted channel vector at an arbitrary configuration $(k', i', j')$ is (cf.\ Eq.~\eqref{eq:C_recon})
\begin{equation}
  \big[\hat{\mathbf h}_u[k', i', j']\big]_n
    = \sum_{\ell=1}^{L_u} \hat{\beta}_{u,\ell}\,
      e^{-\mathrm{j}\,\hat{\mathbf q}_{u,\ell}^T
        \left(\mathbf p_n^{(0)} + \boldsymbol\delta[i',j'] + y[k']\mathbf e_y\right)},
  \label{eq:D_predict}
\end{equation}
for $n = 1,\dots,N$, where $\mathbf e_y = [0,1,0]^T$ and the position model of
Eq.~\eqref{eq:B0_fullpos} is evaluated at the new configuration indices. This
is Eq.~\eqref{eq:C_recon} written entry by entry, and the two are the same
prediction, since the factor matrices of the reconstruction are themselves
generated by the estimated wavevectors.

The \gls{BS} exploits this prediction capability to select the panel configuration that maximizes the downlink sum-rate. For each precoding criterion, the optimal configuration is found by exhaustive search,
\begin{equation}
  (k^\star, i^\star, j^\star)
    = \arg\max_{k,i,j}\;
      R_{\mathrm{sum}}\!\big(\hat{\mathbf H}[k,i,j]\big),
  \label{eq:D_exh_search}
\end{equation}
where $\hat{\mathbf H}[k,i,j] = [\hat{\mathbf h}_1[k,i,j], \dots, \hat{\mathbf h}_U[k,i,j]]$. This search requires $K I J$ evaluations, each involving one channel prediction by Eq.~\eqref{eq:D_predict}, one \gls{TPC} computation, and one \gls{SINR} evaluation by Algorithm~\ref{alg:precoders}. No additional pilot overhead is incurred at any point because the search uses only the parametric model estimated during training, so the training cost remains the $IJKP$ symbols of Eq.~\eqref{eq:B_train_time} regardless of the size of the searched grid. Two caveats follow the estimates through the search and are quantified in Section~\ref{sec:results_precoder}. The rate that Eq.~\eqref{eq:D_exh_search} maximizes is a predicted rate and is optimistically biased, since the same estimate both chooses the configuration and scores it, and the rate that the system actually obtains is the one delivered by the selected \gls{TPC} on the true channel. The complete beamforming pipeline is given in Algorithm~\ref{alg:beamforming}.

\begin{algorithm}[t]
\caption{Configuration-selective multiuser beamforming}
\label{alg:beamforming}
\begin{algorithmic}[1]
\STATE \textbf{Input:} Per-user estimates $\{\hat{\mathbf q}_{u,\ell}, \hat{\beta}_{u,\ell}\}$ for $u = 1, \dots, U$; geometry $(N, K, I, J)$; \gls{TPC} type.
\STATE $R^\star \leftarrow 0$
\FOR{$k = 1, \dots, K$}
  \FOR{$i = 1, \dots, I$}
    \FOR{$j = 1, \dots, J$}
      \FOR{$u = 1, \dots, U$}
        \STATE Predict $\hat{\mathbf h}_u[k,i,j]$ via Eq.~\eqref{eq:D_predict}.
      \ENDFOR
      \STATE Assemble $\hat{\mathbf H}[k,i,j] = [\hat{\mathbf h}_1, \dots, \hat{\mathbf h}_U]$.
      \STATE Compute $R_{\mathrm{sum}}$ via Algorithm~\ref{alg:precoders}.
      \IF{$R_{\mathrm{sum}} > R^\star$}
        \STATE $R^\star \leftarrow R_{\mathrm{sum}}$; $(k^\star, i^\star, j^\star) \leftarrow (k, i, j)$.
      \ENDIF
    \ENDFOR
  \ENDFOR
\ENDFOR
\STATE \textbf{Output:} Optimal configuration $(k^\star, i^\star, j^\star)$, sum-rate $R^\star$.
\end{algorithmic}
\end{algorithm}

\subsection{Fixed-Panel Baseline}
\label{sec:traditional}

To quantify the benefit of \gls{GFIM} reconfigurability, we compare it against a traditional baseline in which the array is held at the reference position $\mathbf p_n^{(0)}$, with neither translation nor morphing. The \gls{BS} spends the whole training frame on the same fixed channel and averages,
\begin{equation}
  \hat{\mathbf h}_u^{\mathrm{trad}}
    = \mathbf h_u[1,1,1]
    + \mathbf n_u^{\mathrm{avg}},
  \label{eq:D_trad}
\end{equation}
where the effective noise variance is reduced by the averaging factor $N_{\mathrm{pilot}} = N I J K$, which yields $\mathrm{SNR}_{\mathrm{eff}} = \mathrm{SNR} \cdot N_{\mathrm{pilot}}$. The baseline is thus generous by construction, since it collects the same aperture and the identical training frame as the \gls{GFIM} curves, converts all of it into noise suppression, and still cannot exploit the diversity that panel reconfiguration provides. The aggregate channel $\hat{\mathbf H}^{\mathrm{trad}} = [\hat{\mathbf h}_1^{\mathrm{trad}}, \dots, \hat{\mathbf h}_U^{\mathrm{trad}}]$ is then used directly for precoding with no configuration selection, so that any gap to it in Section~\ref{sec:results_precoder} measures reconfiguration and nothing else.

\subsection{Complexity Analysis}
\label{sec:complexity}

The overall computational cost per coherence block splits into an estimation phase and a configuration search phase. The estimation phase runs once per coherence block and costs, for each user, $\mathcal O(T_{\mathrm{it}} \cdot NKIJ \cdot L_u)$ with \gls{ALS}\,+\,\gls{ESPRIT} or $\mathcal O(G^2 M + M^3)$ with tensor \gls{MUSIC}, where $M = L_{n_x} L_{n_z} L_k L_i L_j$ is the number of subarray snapshots entering the smoothed covariance. The configuration search phase then performs $KIJ$ evaluations of channel prediction, at cost $\mathcal O(NUL)$ per configuration, each followed by precoding at cost $\mathcal O(NU^2)$ for \gls{ZF}/\gls{MMSE} or $\mathcal O(NU)$ for \gls{MRC}, so that the total search cost is $\mathcal O[KIJ \cdot N(UL + U^2)]$, which is modest for typical system dimensions. Section~\ref{sec:results_precoder} measures the resulting split and finds the search to be under $1\%$ of the per-coherence-block cost, so that the estimator dominates. The search can be accelerated further by gradient-based optimization if the discrete grid is replaced by a continuous parametrization of the panel displacements.

\section{Numerical Results}
\label{sec:results}

This section evaluates the framework based on the Monte-Carlo campaign summarized in
Table~\ref{tab:sim_setup}. All sweeps share a single reference operating point
and each varies one axis at a time, and every point is the average of $1000$ independent trials
with the scenario and the noise redrawn per trial from a fixed seed.

% Simulation configuration table.  Values mirror the campaign specification
% (campaigns.yaml / spec.json) that produced data/channel_estimation/*.csv;
% the provenance hashes are listed in data/channel_estimation/PROVENANCE.csv.
\begin{table}[t]
\centering
\caption{Simulation configuration. The swept parameter of each figure overrides its entry.}
\label{tab:sim_setup}
\renewcommand{\arraystretch}{1.15}
\begin{tabular}{|l|c|}
\hline
\textbf{Parameter} & \textbf{Value} \\
\hline
\multicolumn{2}{|l|}{\textit{Array and system}} \\
\hline
Panel geometry $N_x \times N_z$      & $4 \times 4$ ($N = 16$) \\
Element spacing                      & $\lambda/2$, isotropic \\
Users $U$ (single antenna)           & $2$ \\
Pilot length $P$ (orthogonal)        & $4$ / $16$$^{\mathrm{a}}$ \\
Wavelength $\lambda$                 & $1$ (normalized) \\
\hline
\multicolumn{2}{|l|}{\textit{Twin-time-scale motion protocol}} \\
\hline
Translation grid $I \times J$        & $4 \times 4$ \\
Translation steps $\Delta_x, \Delta_z$ & $0.5\lambda$ \\
Morphing states $K$                  & $4$ \\
Morphing step $\Delta_y$             & $0.5\lambda$ \\
Training frame $IJKP$                & $256$ / $1024$ symbols$^{\mathrm{a}}$ \\
\hline
\multicolumn{2}{|l|}{\textit{Propagation}} \\
\hline
Paths per user $L_u$                 & $3$ \\
Elevation $\theta_{u,\ell}$          & $\mathcal U[45^\circ, 135^\circ]$ \\
Azimuth $\phi_{u,\ell}$              & $\mathcal U[-153^\circ, -27^\circ]$ \\
Min.\ path separation                & $0.3$~rad \\
Path gains $\beta_{u,\ell}$          & i.i.d.\ $\mathcal{CN}$, $\|\boldsymbol\beta_u\| = 1$ \\
\hline
\multicolumn{2}{|l|}{\textit{Estimators}} \\
\hline
ALS restarts (GEVD init)             & $10$ \\
MUSIC angular grid $G$               & $181 \times 181$ \\
MUSIC covariance                     & forward--backward \\
Model order $L_u$                    & MDL / genie \\
\hline
\multicolumn{2}{|l|}{\textit{Monte Carlo}} \\
\hline
Trials per point                     & $1000$ \\
Seed                                 & $0$ \\
Reference SNR                        & $10$~dB \\
Prediction position                  & half step past the grid \\
\hline
\multicolumn{2}{|l|}{\textit{Swept axes (one at a time)}} \\
\hline
SNR                                  & $-10$ to $30$~dB, step $5$~dB \\
Paths per user $L_u$                 & $\{1, 2, 3, 4, 6, 8, 9\}$ \\
Morphing states $K$                  & $1, \dots, 8$ \\
SNR of the $L_u$ sweep               & $\{0, 10, 20\}$~dB \\
\hline
\end{tabular}

\vspace{2pt}
\begin{minipage}{\columnwidth}
\footnotesize\raggedright
$^{\mathrm{a}}$ The channel-estimation campaigns of
Section~\ref{sec:results_ce} use $P = 4$; those of
Section~\ref{sec:results_precoder} use $P = 16$, since orthogonal pilots
require $P \geq U$ and the user axis runs to $U = 16$. Holding $P$ at the
larger value keeps the training budget constant along every \gls{TPC} axis.
\end{minipage}
\end{table}

Four metrics are reported, and each of them scores a specific equation of the
preceding sections. The channel \gls{NMSE} scores the fit of the estimated
tensor on the training grid,
\begin{equation}
  \mathrm{NMSE}
    = \mathbb E\!\left[
      \frac{\lVert \widehat{\mathcal H}_u - \mathcal H_u \rVert_F^2}
           {\lVert \mathcal H_u \rVert_F^2}\right],
  \label{eq:E_nmse}
\end{equation}
where $\widehat{\mathcal H}_u$ is assembled from the estimated parameters by
Eq.~\eqref{eq:B_CP_user} and $\mathcal H_u$ is the true channel tensor. The
prediction \gls{NMSE} replaces both tensors in Eq.~\eqref{eq:E_nmse} by the
channel vector of Eq.~\eqref{eq:C_recon} evaluated at a panel position half a step
beyond the training grid, so that it scores extrapolation rather than
interpolation and thereby tests the claim of
Section~\ref{sec:reconstruction} directly. The \gls{AoA} \gls{RMSE} scores
the angular parameters given by Eq.~\eqref{eq:C_aoa_recovery} after the estimated
paths are matched to the true ones. The achieved sum-rate, finally, is
Eq.~\eqref{eq:D_sumrate} evaluated on the true channel at the configuration
that Eq.~\eqref{eq:D_exh_search} selected from the estimate. The expectation
in Eq.~\eqref{eq:E_nmse} is taken in the linear domain and converted to dB
afterwards, which matters above $20$~dB and is revisited below.

Three references are used throughout. The genie-aided reference assumes
perfect \gls{CSI}, meaning the true wavevectors and path gains, together with
the exhaustive configuration search of Algorithm~\ref{alg:beamforming}, and it
therefore upper-bounds the achievable sum-rate. The estimated-\gls{CSI} curves
run the \gls{ALS}\,+\,\gls{ESPRIT} or the tensor \gls{MUSIC} pipeline of
Algorithm~\ref{alg:framework} on the noisy training tensor, with the model
order either detected by \gls{MDL} or supplied by a genie, and feed the
predicted channels to the same configuration search. The traditional
reference of Section~\ref{sec:traditional} keeps the array fixed and spends
the identical pilot budget on noise averaging, so that it isolates the gain
that comes from reconfiguration, rather than from training length.

\subsection{Channel Estimation}
\label{sec:results_ce}

Fig.~\ref{fig:nmse_vs_snr} shows the channel \gls{NMSE} of
Eq.~\eqref{eq:E_nmse} against the receive \gls{SNR}. Between $0$ and $10$~dB
all three curves fall at roughly $1.6$~dB/dB, and above $10$~dB they settle
onto a common slope close to $1$~dB/dB, which confirms that both pipelines
attain the same error exponent once the model order is resolved. Below $0$~dB
the \gls{MDL} variants fall $6$ to $8$~dB short of the genie-rank bound, and
Fig.~\ref{fig:rank_vs_snr} identifies the cause. At $-10$~dB the criterion of
Eq.~\eqref{eq:C_MDL_select} returns $\hat L_u = 1$ in every realization, so
that two of the three paths are simply not reconstructed. The penalty is
therefore a detection effect rather than an estimation one, since the genie curve
keeps its slope through the same region.

% Channel NMSE against receive SNR, reference geometry (Table~\ref{tab:sim_setup}).
% Data: data/channel_estimation/nmse_vs_snr.csv, 1000 trials per point.
% The plotted metric is the trial mean in the linear domain, converted to dB.
\begin{figure}[t]
\centering
\begin{tikzpicture}
\begin{axis}[
  resultaxis,
  xlabel={SNR (dB)},
  ylabel={NMSE (dB)},
  xmin=-11, xmax=31, xtick={-10,-5,...,30},
  ymin=-56, ymax=0, ytick={-55,-45,...,-5},
  legend pos=south west,
]
  \addplot[als]      table[col sep=comma, x=snr_db, y=als_mdl_nmse_db]
    {\cedata{nmse_vs_snr.csv}};
  \addlegendentry{\glsentryshort{GFIM}, ALS\,+\,ESPRIT (MDL)}
  \addplot[alsknown] table[col sep=comma, x=snr_db, y=als_known_nmse_db]
    {\cedata{nmse_vs_snr.csv}};
  \addlegendentry{\glsentryshort{GFIM}, ALS\,+\,ESPRIT (known $L_u$)}
  \addplot[music]    table[col sep=comma, x=snr_db, y=music_fb_mdl_nmse_db]
    {\cedata{nmse_vs_snr.csv}};
  \addlegendentry{\glsentryshort{GFIM}, tensor MUSIC FB (MDL)}
  % Curve group: both MDL-based estimators coincide up to 15 dB.
  \curvegroup[-28]{-5,-8.9}{9mm}{4mm}{8,-12}{MDL rank: ALS\,+\,ESPRIT\\$\approx$ tensor MUSIC}
\end{axis}
\end{tikzpicture}
\caption{Channel \gls{NMSE} against receive \gls{SNR} at the reference geometry of Table~\ref{tab:sim_setup}. Trial mean of the linear \gls{NMSE}, in dB.}
\label{fig:nmse_vs_snr}
\end{figure}

Above $20$~dB the mean of the two \gls{ALS} curves flattens near $-45$~dB
while tensor \gls{MUSIC} continues to improve. This crossing is an artifact of
a single trial out of $1000$ whose residual error dominates the linear mean of
Eq.~\eqref{eq:E_nmse}. Excluding that trial moves the \gls{ALS} curve at
$30$~dB from $-45.4$ to $-50.8$~dB, and the trial median of the same data
decreases monotonically to $-55.4$~dB. The \gls{CP} model is therefore not the
limiting factor at high \gls{SNR}, and what limits the mean is the occasional
convergence of \gls{ALS} to a spurious stationary point that the \gls{GEVD}
initialization of Section~\ref{sec:als_esprit} does not always avert.

The model-order behavior behind the low-\gls{SNR} gap is detailed in
Fig.~\ref{fig:rank_vs_snr}, which plots the empirical distribution of
$\hat L_u$ around the true order $L_u = 3$. Two features are worth noting. The
criterion never overestimates, in that $\Pr\{\hat L_u > 3\} = 0$ at every
\gls{SNR}, which is the behavior that the complexity penalty of
Eq.~\eqref{eq:C_MDL} is designed to produce and confirms that the effective
sample size $N_s$ argued for in Section~\ref{sec:mdl} is indeed large at these
dimensions. The transition is also sharp, since correct detection rises from
$0$ at $-10$~dB to $0.63$ at $0$~dB and to $0.96$ at $10$~dB, and this is what
makes the \gls{MDL} and genie curves of Fig.~\ref{fig:nmse_vs_snr}
indistinguishable above $10$~dB.

% Empirical distribution of the MDL model order against SNR.
% Data: data/channel_estimation/rank_detection_vs_snr.csv, built from the
% per-trial estimates of the SNR campaign: 1000 trials x 2 users = 2000
% per-user samples per point.  True order L_u = 3.  The MDL front-end is
% shared by both pipelines, so a single family of curves applies to each.
% Styling note: this figure compares MODEL ORDERS, not estimators, so it uses
% the rthree/rtwo/rone palette and never the als/alsknown/music one.
\begin{figure}[t]
\centering
\begin{tikzpicture}
\begin{axis}[
  resultaxis,
  xlabel={SNR (dB)},
  ylabel={$\Pr\{\hat L_u = r\}$},
  xmin=-11, xmax=31, xtick={-10,-5,...,30},
  ymin=-0.03, ymax=1.03, ytick={0,0.2,...,1.0},
  legend style={at={(0.97,0.52)}, anchor=east},
]
  \addplot[rthree]
    table[col sep=comma, x=snr_db, y=als_mdl_p_rank3]
    {\cedata{rank_detection_vs_snr.csv}};
  \addlegendentry{$r = 3$ (correct)}
  \addplot[rtwo]
    table[col sep=comma, x=snr_db, y=als_mdl_p_rank2]
    {\cedata{rank_detection_vs_snr.csv}};
  \addlegendentry{$r = 2$}
  \addplot[rone]
    table[col sep=comma, x=snr_db, y=als_mdl_p_rank1]
    {\cedata{rank_detection_vs_snr.csv}};
  \addlegendentry{$r = 1$}
\end{axis}
\end{tikzpicture}
\caption{Empirical distribution of the \gls{MDL} model order $\hat L_u$ against \gls{SNR}, true order $L_u = 3$.}
\label{fig:rank_vs_snr}
\end{figure}

Fig.~\ref{fig:nmse_vs_morphing} isolates the contribution of the fast
morphing mode by varying the number of morphing states $K$ at
$\mathrm{SNR} = 10$~dB. Both pipelines improve monotonically by roughly
$10$~dB from $K = 1$ to $K = 8$. Part of this improvement is plain noise
averaging, because the training frame $IJKP$ of Eq.~\eqref{eq:B_train_time}
grows with $K$ along this axis. The remainder is the extra mode that the fast
displacement adds to the \gls{CP} model of Eq.~\eqref{eq:B_CP_user}, which
raises the Kruskal bound of Eq.~\eqref{eq:C_kruskal_dim} from $L_u \leq 6$ at
$K = 1$ to $L_u \leq 13$ at $K = 8$ and improves the conditioning of the
factor matrices along the way. Tensor \gls{MUSIC} lags the \gls{ALS} pipeline
by $2$ to $6$~dB for $K \le 3$ and catches up at $K = 4$, because the fast mode
has to be long enough to be smoothed over before it contributes to the
covariance estimate of Eq.~\eqref{eq:C_cov_smooth}.

% Channel NMSE against the number of fast morphing states K, at SNR = 10 dB.
% Data: data/channel_estimation/nmse_vs_morphing.csv.
% NOTE: the training frame IJKP grows with K along this axis, so part of the
% gain is plain noise averaging; the caption says so.
\begin{figure}[t]
\centering
\begin{tikzpicture}
\begin{axis}[
  resultaxis,
  xlabel={Morphing states $K$},
  ylabel={NMSE (dB)},
  xmin=0.6, xmax=8.4, xtick={1,2,...,8},
  ymin=-40, ymax=-22, ytick={-40,-37,...,-22},
  legend pos=north east,
]
  \addplot[als]      table[col sep=comma, x=k, y=als_mdl_nmse_db]
    {\cedata{nmse_vs_morphing.csv}};
  \addlegendentry{\glsentryshort{GFIM}, ALS\,+\,ESPRIT (MDL)}
  \addplot[alsknown] table[col sep=comma, x=k, y=als_known_nmse_db]
    {\cedata{nmse_vs_morphing.csv}};
  \addlegendentry{\glsentryshort{GFIM}, ALS\,+\,ESPRIT (known $L_u$)}
  \addplot[music]    table[col sep=comma, x=k, y=music_fb_mdl_nmse_db]
    {\cedata{nmse_vs_morphing.csv}};
  \addlegendentry{\glsentryshort{GFIM}, tensor MUSIC FB (MDL)}
\end{axis}
\end{tikzpicture}
\caption{Channel \gls{NMSE} against the number of morphing states $K$ at $\mathrm{SNR} = 10$~dB. The training frame $IJKP$ grows with $K$.}
\label{fig:nmse_vs_morphing}
\end{figure}

The companion question of how far the panel has to move on the fast time-scale
is answered by Fig.~\ref{fig:nmse_vs_step}, which varies the morphing step
$\Delta_y$ at fixed $K = 4$. This axis is free in training length, because
$IJKP$ does not depend on the step, and it therefore isolates the conditioning
of the fast mode. The \gls{ALS} pipeline is essentially flat and gains
$0.6$~dB over the whole range, because the shift invariance that
Eq.~\eqref{eq:C_esprit_freq} exploits need only be observable, not large.
Tensor \gls{MUSIC} instead improves by almost $6$~dB from $0.05\lambda$ to
$0.45\lambda$, since its smoothed covariance resolves the fast mode only after the
consecutive morphing states are separated by an appreciable fraction of a
wavelength. At the upper end of the range the \gls{ESPRIT} phase reaches the
unambiguous limit $\delta_y < \lambda/2$ derived in
Section~\ref{sec:als_esprit}. The fitting \gls{NMSE} is unaffected there, yet
the extrapolation is not, and at $\Delta_y = 0.5\lambda$ the \gls{ALS}
prediction \gls{NMSE} degrades from $-31.4$ to $-26.7$~dB while the \gls{AoA}
\gls{RMSE} rises from $0.30^\circ$ to $0.95^\circ$. This is the signature the
aliasing bound predicts, since a wrapped phase still fits the sampled grid and
fails only off it. A step slightly below half a wavelength is therefore the
right operating point for both pipelines.

% Channel NMSE against the fast-morphing step along the panel normal, at
% SNR = 10 dB.  Data: data/channel_estimation/nmse_vs_step.csv.
% NOTE: unlike the K axis, the training length IJKP does not depend on the
% step, so this axis costs the same everywhere -- the variation is pure
% conditioning of the fast mode.
\begin{figure}[t]
\centering
\begin{tikzpicture}
\begin{axis}[
  resultaxis,
  xlabel={Morphing step $\Delta_y/\lambda$},
  ylabel={NMSE (dB)},
  xmin=0.02, xmax=0.53, xtick={0.1,0.2,0.3,0.4,0.5},
  minor xtick={0.05,0.15,0.25,0.35,0.45},
  scaled x ticks=false,
  x tick label style={/pgf/number format/.cd, fixed, precision=1},
  ymin=-36, ymax=-26, ytick={-36,-34,...,-26},
  legend pos=north east,
]
  \addplot[als]      table[col sep=comma, x=morph_step_lambda, y=als_mdl_nmse_db]
    {\cedata{nmse_vs_step.csv}};
  \addlegendentry{\glsentryshort{GFIM}, ALS\,+\,ESPRIT (MDL)}
  \addplot[alsknown] table[col sep=comma, x=morph_step_lambda, y=als_known_nmse_db]
    {\cedata{nmse_vs_step.csv}};
  \addlegendentry{\glsentryshort{GFIM}, ALS\,+\,ESPRIT (known $L_u$)}
  \addplot[music]    table[col sep=comma, x=morph_step_lambda, y=music_fb_mdl_nmse_db]
    {\cedata{nmse_vs_step.csv}};
  \addlegendentry{\glsentryshort{GFIM}, tensor MUSIC FB (MDL)}
\end{axis}
\end{tikzpicture}
\caption{Channel \gls{NMSE} against the morphing step $\Delta_y$ at $\mathrm{SNR} = 10$~dB and $K = 4$. The training frame does not depend on $\Delta_y$.}
\label{fig:nmse_vs_step}
\end{figure}

Finally, Fig.~\ref{fig:nmse_vs_paths} sweeps the number of paths per user at
$\mathrm{SNR} \in \{0, 10, 20\}$~dB, which tests the identifiability analysis
of Section~\ref{sec:uniqueness} against measured error. The two pipelines are
indistinguishable up to $L_u = 3$, where the \gls{CP} model is comfortably
identifiable at $(N,K,I,J) = (16,4,4,4)$. Beyond $L_u = 4$ they separate and
the separation widens with \gls{SNR}, so that at $20$~dB and at the Kruskal
bound $L_u = 9$ of Table~\ref{tab:uniqueness} the \gls{ALS} pipeline retains
$-35.6$~dB while tensor \gls{MUSIC} saturates at $-27.0$~dB. The reason is
that the \gls{MUSIC} spectrum of Eq.~\eqref{eq:C_music_spectrum} is
grid-limited and its spatial smoothing trades aperture for snapshots, whereas
Eq.~\eqref{eq:C_esprit_freq} resolves the wavevector components off the grid.
Neither curve breaks at $L_u = 9$, which is consistent with
Eq.~\eqref{eq:C_kruskal_dim} being a sufficient condition for uniqueness and
not a cliff in accuracy. At $0$~dB the ordering is irrelevant, since both are
noise-limited and lose about $3$~dB per doubling of $L_u$.

% Channel NMSE against the number of paths per user, at SNR in {0,10,20} dB.
% Data: data/channel_estimation/nmse_vs_paths.csv.  Colour and marker encode
% the estimator (as in every other result figure), while the line type encodes
% the SNR: solid = 0 dB, dashed = 10 dB, dotted = 20 dB.  Each pair is also
% annotated in the plot, so the legend only has to name the two estimators.
\begin{figure}[t]
\centering
\begin{tikzpicture}
\begin{axis}[
  resultaxis,
  xlabel={Paths per user $L_u$},
  ylabel={NMSE (dB)},
  xmin=0.6, xmax=9.6, xtick={1,2,3,4,6,8,9},
  ymin=-54, ymax=-4, ytick={-50,-40,...,-10},
  legend columns=2,
  legend style={at={(0.5,1.03)}, anchor=south, draw=gray!50},
]
  \addplot[als, snrlow]   table[col sep=comma, x=n_paths, y=als_mdl_nmse_db_snr0]
    {\cedata{nmse_vs_paths.csv}};
  \addlegendentry{\glsentryshort{GFIM}, ALS\,+\,ESPRIT}
  \addplot[music, snrlow] table[col sep=comma, x=n_paths, y=music_fb_mdl_nmse_db_snr0]
    {\cedata{nmse_vs_paths.csv}};
  \addlegendentry{\glsentryshort{GFIM}, tensor MUSIC FB}

  \addplot[als, snrmid, forget plot]
    table[col sep=comma, x=n_paths, y=als_mdl_nmse_db_snr10]
    {\cedata{nmse_vs_paths.csv}};
  \addplot[music, snrmid, forget plot]
    table[col sep=comma, x=n_paths, y=music_fb_mdl_nmse_db_snr10]
    {\cedata{nmse_vs_paths.csv}};

  \addplot[als, snrhigh, forget plot]
    table[col sep=comma, x=n_paths, y=als_mdl_nmse_db_snr20]
    {\cedata{nmse_vs_paths.csv}};
  \addplot[music, snrhigh, forget plot]
    table[col sep=comma, x=n_paths, y=music_fb_mdl_nmse_db_snr20]
    {\cedata{nmse_vs_paths.csv}};

  % SNR of each estimator pair, as curve groups.
  \curvegroup[38]{4,-14.3}{9mm}{4mm}{2,-7}{$0$~dB}
  \curvegroup[38]{1.5,-38}{8mm}{4mm}{2.8,-28}{$10$~dB}
  \curvegroup[22]{1.6,-48.7}{8mm}{4mm}{3.4,-52}{$20$~dB}
\end{axis}
\end{tikzpicture}
\caption{Channel \gls{NMSE} against the number of paths per user $L_u$ at $\mathrm{SNR} \in \{0, 10, 20\}$~dB, model order from \gls{MDL}. Color and marker identify the estimator, and the line type identifies the \gls{SNR}, being solid for $0$~dB, dashed for $10$~dB and dotted for $20$~dB.}
\label{fig:nmse_vs_paths}
\end{figure}

\subsection{\glsentryshort{TPC} Design}
\label{sec:results_precoder}

The campaigns of this subsection use a pilot length of $P = 16$, hence a training
frame of $IJKP = 1024$ symbols, rather than the $P = 4$ of
Table~\ref{tab:sim_setup}. The orthogonality requirement $P \geq U$ of
Section~\ref{sec:ortho_pilots} forces this, since the user axis of
Fig.~\ref{fig:sumrate_vs_users} runs to $U = 16$, and holding $P$ at the
larger value keeps the training budget constant along every axis, so that a
curve reports load, or path count, or motion protocol, and never a change of
budget. Throughout, the plotted quantity is the achieved rate defined above,
which is the \gls{TPC} built from the estimate and applied to the true channel
at the configuration the search selected. The predicted rate is recorded but
not plotted, because it is optimistically biased whenever the same estimate
both chooses the configuration and scores it, as anticipated in
Section~\ref{sec:prediction}. That bias is measured here and never exceeds
$0.06$~bps/Hz above $-5$~dB, which is small enough that selecting on an
estimate is not the limiting step.

Fig.~\ref{fig:sumrate_vs_snr} shows the evolution of the sum-rate against \gls{SNR} for
$U = 2$. Two things are worth noting. The estimated-\gls{CSI} curve is
indistinguishable from the genie above $-5$~dB, to within $0.01$~bps/Hz, which
says that the search of Eq.~\eqref{eq:D_exh_search} makes the same choice from
an estimate as it would from the truth and that the estimation stage is
effectively free once the model order is resolved. The second is that
\gls{MRC} saturates on inter-user interference at $17.1$~bps/Hz where
\gls{ZF} keeps its slope, which is precisely the behavior that
Eq.~\eqref{eq:D_sinr} predicts for a matched filter whose interference term
scales with the desired power.

% Achieved downlink sum-rate against receive SNR, U = 2.
% Data: data/precoder_design/sumrate_vs_snr.csv, 1000 trials per point.
% The plotted metric is `achieved_rate': the precoder built from the estimate,
% applied to the TRUE channel at the configuration the search selected.  The
% predicted rate is not plotted -- it is optimistically biased, since the same
% estimate chose the configuration and then scored it.
\begin{figure}[t]
\centering
\begin{tikzpicture}
\begin{axis}[
  resultaxis,
  xlabel={SNR (dB)},
  ylabel={Sum-rate (bps/Hz)},
  xmin=-11, xmax=31, xtick={-10,-5,...,30},
  ymin=0, ymax=28,
  legend columns=2,
  legend style={at={(0.5,1.03)}, anchor=south, draw=gray!50, font=\scriptsize},
]
  \addplot[genie] table[col sep=comma, x=snr_db, y=genie_zf_achieved_rate]
    {\bfdata{sumrate_vs_snr.csv}};
  \addlegendentry{\glsentryshort{GFIM}, perfect CSI, ZF}
  \addplot[als]   table[col sep=comma, x=snr_db, y=als_mdl_zf_achieved_rate]
    {\bfdata{sumrate_vs_snr.csv}};
  \addlegendentry{\glsentryshort{GFIM}, ALS\,+\,ESPRIT, ZF}
  \addplot[trad]  table[col sep=comma, x=snr_db, y=traditional_paper_zf_achieved_rate]
    {\bfdata{sumrate_vs_snr.csv}};
  \addlegendentry{Fixed UPA, ZF}
  \addplot[mrc]   table[col sep=comma, x=snr_db, y=genie_mrc_achieved_rate]
    {\bfdata{sumrate_vs_snr.csv}};
  \addlegendentry{\glsentryshort{GFIM}, perfect CSI, MRC}
  \addplot[mrc, densely dashed, mark=none]
    table[col sep=comma, x=snr_db, y=traditional_paper_mrc_achieved_rate]
    {\bfdata{sumrate_vs_snr.csv}};
  \addlegendentry{Fixed UPA, MRC}
  % Curve groups: the three ZF curves overlap, the two MRC curves fan out.
  \curvegroup[35]{20,19.2}{10mm}{4mm}{5,22}{ZF}
  % The two MRC curves are 4.2 bps/Hz apart at 15 dB, so this ellipse is sized
  % in axis units (fit) to enclose both in the one- and two-column layouts.
  \node[cgroup, fit={(axis cs:15,8.6) (axis cs:15,13.8)}, minimum width=4.5mm,
        inner sep=0pt] (cgm) {};
  \node[cglabel] (cgml) at (axis cs:24,4) {MRC};
  \draw[cgarrow] (cgml) -- (cgm);
\end{axis}
\end{tikzpicture}
\caption{Achieved sum-rate against receive \gls{SNR} for $U = 2$.}
\label{fig:sumrate_vs_snr}
\end{figure}

The gap between each \gls{GFIM} curve and its fixed-\gls{UPA} counterpart
in Fig.~\ref{fig:sumrate_vs_snr} is the reconfiguration gain over the
baseline of Section~\ref{sec:traditional}, which is the same $4 \times 4$
panel with the same pilot budget held at the reference position. Under
\gls{ZF} the gain settles at $0.52$~bps/Hz from $5$~dB onward and it is almost
independent of the \gls{SNR}, which is why the two \gls{ZF} curves are nearly
superimposed at this scale. Under \gls{MRC} it keeps escalating and reaches
$7.4$~bps/Hz at $30$~dB, because the fixed panel saturates on inter-user
interference earlier than the movable one, which has a configuration to move
to when the two users fail to separate. The \gls{ZF} figure is a modest number, and
Fig.~\ref{fig:sumrate_vs_users} shows why, since at $U = 2$ the array is
lightly loaded and there is little for the extra spatial degrees of freedom to
do.

% Achieved sum-rate against the served user count, SNR = 10 dB.
% Data: data/precoder_design/sumrate_vs_users.csv, 1000 trials per point.
% P = 16 at every point, so the training frame is 1024 symbols throughout and
% the axis is load rather than budget.
%
% The three precoders are drawn under perfect CSI.  One estimated-CSI curve
% (ALS+ESPRIT, ZF) is added so that the 6.22 bps/Hz reconfiguration gain the
% text quotes at U = 16 comes from a curve that is actually on the figure: it
% is als_mdl minus traditional, not genie minus traditional (6.29).  The
% estimated curve reuses the ZF colour with a hollow marker and a dashed line,
% so that colour still means "precoder" and the line type means "CSI source".
\begin{figure}[t]
\centering
\begin{tikzpicture}
\begin{axis}[
  resultaxis,
  xlabel={No. of served users $U$},
  ylabel={Sum-rate (bps/Hz)},
  xmin=1, xmax=17, xtick={2,4,...,16},
  ymin=0, ymax=34,
  legend columns=2,
  legend style={at={(0.5,1.03)}, anchor=south, draw=gray!50},
]
  \addplot[zf]   table[col sep=comma, x=n_users, y=genie_zf_achieved_rate_snr10]
    {\bfdata{sumrate_vs_users.csv}};
  \addlegendentry{\glsentryshort{GFIM}, perfect CSI, ZF}
  \addplot[mmse] table[col sep=comma, x=n_users, y=genie_mmse_achieved_rate_snr10]
    {\bfdata{sumrate_vs_users.csv}};
  \addlegendentry{\glsentryshort{GFIM}, perfect CSI, MMSE}
  \addplot[mrc]  table[col sep=comma, x=n_users, y=genie_mrc_achieved_rate_snr10]
    {\bfdata{sumrate_vs_users.csv}};
  \addlegendentry{\glsentryshort{GFIM}, perfect CSI, MRC}
  \addplot[zf, densely dashed, mark=o, mark options={solid}]
    table[col sep=comma, x=n_users, y=als_mdl_zf_achieved_rate_snr10]
    {\bfdata{sumrate_vs_users.csv}};
  \addlegendentry{\glsentryshort{GFIM}, ALS\,+\,ESPRIT, ZF}
  \addplot[trad] table[col sep=comma, x=n_users, y=traditional_paper_zf_achieved_rate_snr10]
    {\bfdata{sumrate_vs_users.csv}};
  \addlegendentry{Fixed UPA, ZF}
  % Curve group: ZF under perfect and estimated CSI coincide at every load.
  \curvegroup[-52]{14,21}{9mm}{4mm}{8,5}{ZF: perfect CSI $\approx$\\ALS\,+\,ESPRIT}
\end{axis}
\end{tikzpicture}
\caption{Achieved sum-rate against the number of served users $U$ at $\mathrm{SNR} = 10$~dB and $IJKP = 1024$ symbols.}
\label{fig:sumrate_vs_users}
\end{figure}

Sweeping the user count to the array size separates the \glspl{TPC} and the
references at once. The three \glspl{TPC} are shown under perfect \gls{CSI},
and \gls{ZF} is shown also under \gls{ALS}\,+\,\gls{ESPRIT} estimates, from
which it is indistinguishable at every load, so that the conclusions below do
not depend on the \gls{CSI} source. \gls{ZF} and \gls{MMSE} are numerically
the same \gls{TPC}
up to $U = 8$, where they differ by $0.58$~bps/Hz, and they separate only past
$U = N/2$. From there \gls{ZF} peaks at $29.2$~bps/Hz at $U = 10$ and then
collapses to $11.2$ at $U = N = 16$, where $\mathbf H$ is square, and the
inversion of Eq.~\eqref{eq:D_zf} amplifies noise without bound, while
\gls{MMSE} reaches $30.9$ and still holds $28.8$ at the same point. This is
the trade-off that the regularization weight of Eq.~\eqref{eq:D_mmse} is there to
make, and it is invisible over the range $U \leq 8$ that is usually reported.
The reconfiguration gain grows with load in steps, rising from $0.52$~bps/Hz at
$U = 2$ to $6.22$ at $U = 16$, and to $19.6$~bps/Hz at
$\mathrm{SNR} = 20$~dB. The fixed panel degrades fastest of all. This is
because its geometry is frozen at the reference position: when two users
happen to be poorly separated there, the fixed panel has no other
configuration to move to, whereas the \gls{GFIM} can select a panel position
at which their channels become better separated. As the load grows, such
poorly separated pairs become more frequent, and so does the advantage of
being able to move.

\subsubsection{Contribution of each motion mechanism}

The reference protocol spends its training budget on both mechanisms at once,
with $K = I = J = 4$. Fig.~\ref{fig:sumrate_vs_paths} asks whether both
are needed. It compares motion protocols at a fixed $KIJ = 64$, so that each
protocol offers the configuration search the same $64$ candidates. The
comparison therefore reveals which kind of diversity produced the candidates, rather
than how many there are.

% Achieved sum-rate against the path count, per motion protocol, at equal
% training budget K I J = 64.  Data: data/precoder_design/arch_sumrate_vs_paths.csv,
% U = 8, SNR = 20 dB, 1000 trials per point.
% Column names carry the protocol as k<K>i<I>j<J>: k1i16j4 is translation only,
% k2i8j4 is the combination, k64i1j1 is morphing only.
%
% Single axis (0-60 bps/Hz) for all protocols.  Translation-only and the
% combination differ by tenths of a bps/Hz and are bundled by an ellipse; the
% morphing-only estimate (dotted) collapses from 47.5 to 18 bps/Hz already at
% L_u = 2 while its genie stays above 53 -- the I = J = 1 case that Corollary 1
% cannot certify for any L_u.
\begin{figure}[t]
\centering
\begin{tikzpicture}
\begin{axis}[
  resultaxis,
  height=0.78\columnwidth,
  xlabel={Paths per user $L_u$},
  ylabel={Sum-rate (bps/Hz)},
  xmin=0.5, xmax=8.5, xtick={1,2,3,4,6,8},
  ymin=0, ymax=60, ytick={0,10,...,60},
  legend columns=2,
  legend style={at={(0.5,1.03)}, anchor=south, draw=gray!50, font=\scriptsize},
]
  \addplot[genie, ptransl]
    table[col sep=comma, x=n_paths, y=genie_zf_achieved_rate_k1i16j4_snr20]
    {\bfdata{arch_sumrate_vs_paths.csv}};
  \addlegendentry{\glsentryshort{GFIM} transl., perfect CSI}
  \addplot[als, ptransl]
    table[col sep=comma, x=n_paths, y=als_mdl_zf_achieved_rate_k1i16j4_snr20]
    {\bfdata{arch_sumrate_vs_paths.csv}};
  \addlegendentry{\glsentryshort{GFIM} transl., ALS\,+\,ESPRIT}
  \addplot[genie, pboth]
    table[col sep=comma, x=n_paths, y=genie_zf_achieved_rate_k2i8j4_snr20]
    {\bfdata{arch_sumrate_vs_paths.csv}};
  \addlegendentry{\glsentryshort{GFIM} both, perfect CSI}
  \addplot[als, pboth]
    table[col sep=comma, x=n_paths, y=als_mdl_zf_achieved_rate_k2i8j4_snr20]
    {\bfdata{arch_sumrate_vs_paths.csv}};
  \addlegendentry{\glsentryshort{GFIM} both, ALS\,+\,ESPRIT}
  \addplot[genie, pmorph]
    table[col sep=comma, x=n_paths, y=genie_zf_achieved_rate_k64i1j1_snr20]
    {\bfdata{arch_sumrate_vs_paths.csv}};
  \addlegendentry{\glsentryshort{GFIM} morph., perfect CSI}
  \addplot[als, pmorph]
    table[col sep=comma, x=n_paths, y=als_mdl_zf_achieved_rate_k64i1j1_snr20]
    {\bfdata{arch_sumrate_vs_paths.csv}};
  \addlegendentry{\glsentryshort{GFIM} morph., ALS\,+\,ESPRIT}
  \addplot[trad]
    table[col sep=comma, x=n_paths, y=traditional_paper_zf_achieved_rate_k4i4j4_snr20]
    {\bfdata{arch_sumrate_vs_paths.csv}};
  \addlegendentry{Fixed UPA, no motion}
  % Curve groups: the five curves at the top coincide.
  \curvegroup{4,54.9}{11mm}{4.5mm}{6.2,40}{Perfect CSI (all three),\\ALS\,+\,ESPRIT (transl., both)}
  \curvegroup[-40]{2.6,15}{4mm}{12mm}{5.2,24}{Morph. only, ALS\,+\,ESPRIT}
\end{axis}
\end{tikzpicture}
\caption{Achieved sum-rate against the number of paths per user for the equal-budget motion protocols of Table~\ref{tab:allocation}, $KIJ = 64$. Translation alone $(K,I,J) = (1,16,4)$, combined motion $(2,8,4)$, morphing alone $(64,1,1)$, and the fixed panel. \gls{ZF}, $U = 8$, $\mathrm{SNR} = 20$~dB.}
\label{fig:sumrate_vs_paths}
\end{figure}

Under perfect \gls{CSI} the answer is no, since translation
alone, with $K = 1$ and $I = 16$, is the best of the protocols at every path
count, although by less than $0.2$~bps/Hz. This is a consequence of the
geometry rather than an accident. The model of
Eq.~\eqref{eq:C_wavevector_aoa} confines sources to the front half-space,
where the norm constraint $\lVert\mathbf q\rVert = k_0$ forces
$q^{(y)} = -\sqrt{k_0^2 - (q^{(x)})^2 - (q^{(z)})^2}$, which is the same
relation that Eq.~\eqref{eq:C_qy_norm} exploits when $K = 1$. The
normal-direction component is thus determined by the in-plane ones, so that
displacing the panel along its normal measures something the in-plane
translation already implies and cannot separate two users that translation
cannot.

In practice the ordering reverses once the channel is rich. Past $L_u = 4$ the
translation-only curve turns down, from $54.82$~bps/Hz at $L_u = 4$ to
$53.52$ at $L_u = 8$, while the combination continues to rise and reaches
$55.17$. The gap between the genie and the estimated curve of each protocol
explains why, since at $L_u = 8$ translation alone gives up $1.95$~bps/Hz to
estimation error against $0.15$~bps/Hz for the protocol with $K = 2$ and
$I = 8$. With $K = 1$ the training tensor is effectively 3-way, and the
morphing mode restores the fourth and thereby buys conditioning exactly where
the factor matrices are hardest to separate. This is not the identifiability
bound of Table~\ref{tab:allocation} speaking, because translation alone has
the higher Kruskal ceiling of the two, at $L_u \leq 17$ against $L_u \leq 11$.
It is instead the distinction drawn in Section~\ref{sec:allocation}, namely
that Eq.~\eqref{eq:C_kruskal_dim} certifies uniqueness in the noiseless case
and says nothing about how well conditioned the fit is at a finite \gls{SNR},
and it is the measurement behind the allocation rule stated there.

Morphing alone, the dotted curve that collapses in
Fig.~\ref{fig:sumrate_vs_paths}, is unusable, and instructively so. Its
estimate collapses from $47.5$~bps/Hz at $L_u = 1$ to $18.0$ at $L_u = 2$ and
to $5.3$ at $L_u = 8$, while its genie stays above $53$ throughout, so the
geometry is as good as the others and the estimate is not. This is the
degenerate case identified in Section~\ref{sec:uniqueness}, in which setting
$I = J = 1$ collapses the 4-way tensor to an $N \times K$ matrix that
Eq.~\eqref{eq:C_kruskal_dim} cannot certify for any $L_u$, so that the
factorization is not unique, and no estimator can recover the factors from the
data alone. The geometry of the protocol is sound and its estimation problem
is ill-posed. The fast morphing mode therefore earns its place in this
architecture as an estimation mechanism that supplements the slow translation,
rather than as a substitute for it or as a source of beamforming gain in its
own right.

What dominates all of these differences is moving at all. For $L_u \geq 2$ the
combination beats the fixed panel by $2.8$ to $3.5$~bps/Hz at $U = 8$, an
order of magnitude more than the differences between the protocols. Two
qualifications attach to that statement. At $L_u = 1$ the gain falls to
$0.07$~bps/Hz, because a single path leaves the extra spatial degrees of
freedom to exploit, and translation alone falls to $1.75$~bps/Hz at
$L_u = 8$, which is the same estimation penalty seen above and is now measured
against the immobile reference.

\subsubsection{Cost of the configuration search}

Measured against the estimator fit, the exhaustive search of
Algorithm~\ref{alg:beamforming} is inexpensive, taking $0.006$~s against
$0.504$~s at $U = 2$ and $0.013$~s against $3.60$~s at $U = 16$, so that it
stays under $1\%$ of the per-coherence-block cost in every case. This confirms
the two-phase split predicted in Section~\ref{sec:complexity}, in which the
$\mathcal O[KIJ \cdot N(UL + U^2)]$ search is dominated by the estimator that
precedes it.

\section{Conclusion}
\label{sec:conclusion}

This paper developed a training-efficient channel estimation framework for a
multiuser uplink served by a \gls{GFIM} whose panel translates rigidly over a
slow two-dimensional grid while its elements morph along the surface normal on
a fast time-scale. The superposition of the two motions was shown to induce a
4th-order \gls{CP} structure in the per-user channel tensor, from which two
estimation pipelines were derived, one pairing \gls{ALS} with \gls{ESPRIT}
and one based on multi-mode spatial smoothing followed by \gls{MUSIC}, both
sharing an \gls{MDL} front-end for model-order selection. The Kruskal
conditions of Eq.~\eqref{eq:C_kruskal_dim} bound the number of recoverable
paths in terms of $(N,K,I,J)$, and the parametric form of the estimates makes
the channel available at panel configurations that were never trained, which
in turn supports a configuration-selective multiuser \gls{TPC} that costs no
pilot resource.

The numerical campaign tested each of these derivations rather than merely
illustrating them. Model-order detection was found never to overestimate and
to become reliable above $0$~dB, which is where the \gls{MDL} and genie curves
merge. The morphing step behaved exactly as the unambiguous-phase limit of
\gls{ESPRIT} predicts, in that the fit is indifferent to aliasing on the
training grid while the extrapolation is not. The identifiability analysis was
confirmed in both directions, since accuracy degrades smoothly rather than
abruptly at the Kruskal ceiling, and since the morphing-only protocol, for
which the bound certifies nothing, fails as an estimation problem while
remaining sound as a geometry. On the \gls{TPC} side, estimated \gls{CSI}
matched perfect \gls{CSI} to within $0.01$~bps/Hz above $-5$~dB, and the gain
over a fixed panel of identical aperture and identical training budget grew
from $0.52$~bps/Hz at two users to $6.22$~bps/Hz at sixteen, which locates the
value of reconfiguration in the loaded regime.

Two conclusions carry beyond the specific architecture. The first is that the
fast morphing mode earns its place as an estimation mechanism that conditions
the factorization, and not as a source of beamforming gain, because the
half-space norm constraint makes the normal-direction wavenumber a function of
the in-plane ones. The second is that a sufficient identifiability condition
orders protocols differently from finite-\gls{SNR} conditioning, so that the
protocol with the higher Kruskal ceiling can be the worse one to deploy.
Extending the model to near-field propagation~\cite{cui2022nearfield}, where
the wavevector varies across the aperture, and the translation mode ceases to
be Vandermonde, and replacing the discrete configuration grid by a continuous
parametrization of the panel displacement, are the natural next steps.

\appendices

% Uncomment the line below to include the CRB appendix
%\input{crb}

\bibliographystyle{IEEEtran}
\bibliography{references}

\end{document}